\numberwithin{equation}{section}
\newtheorem{theorem}{Theorem}
\theoremstyle{remark}
\newtheorem{remark}{Remark}
\theoremstyle{definition}
\newcommand{\PP}{\mathbb{P}}
\newcommand{\indep}{\perp \!\!\! \perp}
\tikzset{every node/.append style={font=\small}} 
\tikzstyle{latent} = [circle, draw, minimum size=.5em] 
\tikzstyle{obs} = [latent, fill=gray!50]
\tikzstyle{doublecircle} = [latent, double, double distance=0.8pt, outer sep=0.4pt]
\tikzset{
    -latex,auto,
    state/.style ={ellipse, draw, minimum width = 0.7cm},
    el/.style = {inner sep=1pt, align=left, sloped},
    point/.style = {circle, 
                    draw = black, 
                    inner sep = 0.01cm,
                    fill = black,
                    node contents={}},
    square/.style={rectangle, 
                   draw = black, 
                   inner sep = 0.02cm,
                   node distance = 0.5cm and 0.5cm,
                   fill=black,  
                   node contents={}},
    bidirected/.style={Latex-Latex,dashed}
}
\begin{document}

\begin{frontmatter}
\title{Causally Sound Priors for Binary Experiments}
\runtitle{Causally Sound Priors for Binary Experiments}

\begin{aug}
\author{\fnms{Nicholas J.} \snm{Irons}\thanksref{addr1}\ead[label=e1]{nicholas.irons@stats.ox.ac.uk}\ead[label=u1]{https://njirons.github.io}}
\and
\author{\fnms{Carlos} \snm{Cinelli}\thanksref{addr2}\ead[label=e2]{cinelli@uw.edu}\ead[label=u2]{http://carloscinelli.com}}

\runauthor{Nicholas J. Irons \& Carlos Cinelli}

\address[addr1]{Department of Statistics and Leverhulme Centre for Demographic Science, University of Oxford, Oxford, UK.\\ Email: \printead{e1}; \printead{u1}.
}
\address[addr2]{Assistant Professor, Department of Statistics, University of Washington, Seattle, USA.\\ Email: \printead{e2}; \printead{u2}.
}

\end{aug}

\begin{abstract}
We introduce the BREASE framework for the Bayesian analysis of randomized controlled trials with a binary treatment and a binary outcome. Approaching the problem from a causal inference perspective, we propose parameterizing the likelihood in terms of the \underline{\textbf{b}}aseline \underline{\textbf{r}}isk,  \underline{\textbf{e}}fficacy, and \underline{\textbf{a}}dverse \underline{\textbf{s}}ide \underline{\textbf{e}}ffects of the treatment, along with a flexible, yet intuitive and tractable jointly independent beta prior distribution on these parameters, which we show to be a generalization of the Dirichlet prior for the joint distribution of potential outcomes. Our approach has a number of desirable characteristics when compared to current mainstream alternatives:  
(i)~it naturally induces prior dependence between expected outcomes in the treatment and control groups;  (ii)~as the baseline risk, efficacy and risk of adverse side effects are quantities commonly present in the clinicians' vocabulary, the hyperparameters of the prior are directly interpretable, thus facilitating the elicitation of prior knowledge and sensitivity analysis; and (iii)~we provide analytical formulae for the marginal likelihood, Bayes factor, 
and other posterior quantities, 
as well as 
an exact posterior sampling algorithm and an accurate and fast data-augmented Gibbs sampler
in cases where traditional MCMC fails. Empirical examples demonstrate the utility of our methods for estimation, hypothesis testing, and sensitivity analysis of treatment effects. 
\end{abstract}

\begin{keyword}
\kwd{Binomial Proportions}
\kwd{Potential Outcomes}
\kwd{Generalized Dirichlet}
\end{keyword}

\end{frontmatter}

\section{Introduction}
\label{sec:intro}

Randomized controlled trials (RCTs) form the cornerstone of scientific research across numerous disciplines. In their most basic form, these trials compare the occurrence of an adverse (or favorable) outcome between treatment and control groups. This is particularly evident in a drug or vaccine trial, in which the efficacy of an intervention is established by comparing the number of individuals who die or develop a disease in each arm of the study. We refer to this type of study design as a ``binary experiment,'' wherein each participant is subjected to either a treatment or a control condition  (a binary exposure), and we observe either the presence or absence of the adverse effect of interest (a binary outcome).  

If participants of the trial are independent draws from a common (super-)population, statistical inference in binary experiments amounts to what is perhaps the simplest of tasks in statistics---the comparison of two binomial proportions. Indeed, from a Bayesian perspective, inference on the parameter of a binomial distribution dates back to at least as early as the origins of Bayesian inference itself, as evidenced by the seminal works of \citet{bayes1763essay} and \citet{laplace1774memoire}. The task comprises specifying a joint prior distribution for both binomial parameters, and computing the posterior distribution (or Bayes factors) of (relevant contrasts of) such parameters (e.g., the risk difference, or the risk ratio). 
Yet,  despite this long tradition, their widespread occurrence in the sciences, and the apparent simplicity of the inferential task, mainstream approaches for prior specification in the analysis of binary experiments 
have several shortcomings.

As reviewed in \citet{agresti2005-intervals} and \citet{dablander2022}, and also evident from perusing popular textbooks (e.g., \citealp{gelman1995bayesian,kruschke2014,mcelreath2020}),
the two predominant approaches for the Bayesian analysis of binary experiments consist of: (i) assigning independent beta priors to each of the binomial proportions, which are conjugate priors to the (also independent) binomials comprising the likelihood; and, (ii) what is essentially a logistic regression, i.e., applying a logit transformation to the binomial proportions, and assigning Gaussian priors to the average log odds and the log odds ratio. For all their popularity, these two approaches are unsatisfactory in several ways. For example, in the first case, the assumption of prior independence of the two proportions is often not credible---e.g., in most settings, one expects that learning about the mortality rate in the control group should inform our beliefs about the mortality rate in the treatment group. Moreover, while the logit approach addresses the problem of prior dependence, it does so at the sacrifice of clarity and interpretation---odds ratios are notoriously difficult to understand \citep{davies1998}, hindering the utility of this approach for prior elicitation and sensitivity analysis.

In this paper we demonstrate how causal logic can be used to address  these challenges. Approaching the problem from a causal inference perspective, we first propose parameterizing the likelihood in terms of three clinically meaningful counterfactual quantities: the baseline risk,  efficacy, and risk of adverse side effects  (BREASE) of the intervention. We then propose a flexible, yet intuitive and tractable jointly independent beta prior distribution on these parameters, which we show to be a generalization of the Dirichlet prior on the joint distribution of potential outcomes. Our approach has a number of desirable characteristics: (i)~it naturally induces prior dependence between the two binomial proportions of the treatment and control arms of the study; (ii)~as the baseline risk, efficacy and risk of adverse side effects are quantities familiar to clinicians, the hyperparameters of the prior are directly interpretable, thus facilitating the elicitation of prior knowledge and sensitivity analysis; and (iii)~we derive analytical formulae for the marginal likelihood, Bayes factor, and other posterior quantities,  as well as  an exact posterior sampling algorithm and an accurate and fast data-augmented Gibbs sampler in cases where traditional MCMC fails. 

\paragraph{Related literature.}  The literature on Bayesian causal inference is extensive---see \cite{li2023} for a recent review. Related to our setup are studies in the analysis of RCTs using a \emph{traditional} Dirichlet prior on response types, such as \cite{chickering1996} and \cite{imbens1997}, or studies using a uniform prior on the response type counts, such as \citet{ding2019}. The Dirichlet prior on response types is a special case of our proposal, and our analysis not only extends it, but also clarifies when and how its use can be desirable as a way to induce causally sound priors on the the two binomial proportions.  Our study also relates to a growing body of literature investigating sensitivity and prior specification in Bayesian causal inference and analysis of experiments. In a seminal paper, \citet{spiegelhalter1994bayesian} argued in favor of the Bayesian analysis of randomized trials with a focus on prior specification for normally distributed data. \citet{robins-wasserman} and \citet{linero2023nonparametric,linero2023prior} discuss the pitfalls of prior independence between the parameters governing the outcome and selection models that can yield inconsistent causal inference in high dimensional observational studies. In a similar vein, our analysis shows that---even in a low-dimensional experimental setting---causally-inspired priors encoding dependence between potential outcomes can lead to more sensible inferences than the traditional conjugate prior asserting their independence.

More generally, when framed in the language of potential outcomes, causal inference can be seen as a missing data problem. Thus, our analysis is most closely related to the literature on contingency tables with missing or incomplete observations on certain cell counts.  In fact, our proposed prior can be shown to induce a \emph{generalized} Dirichlet distribution on the joint distribution of potential outcomes. This distribution has been studied in the 1970s and 1980s \citep{antelman1972, kaufman1973, dickey1983,dickey1987}, though mostly in the context of survey sampling. Similar priors have also appeared in the analysis of diagnostic testing, such as in \citet{branscum2005estimation}. Perhaps due to the intractability of the integrals,  the difficulty in interpretation of the original generalized Dirichlet parameterization, and the missing connection to formal causal inference, this prior has received little to no attention in the analysis of binary experiments. Our analysis shows that the generalized Dirichlet distribution emerges naturally from the causal formulation of the problem, that the parameters of the distribution can be cast in intuitive clinical terms, and that statistical inference is manageable, with exact posterior sampling, efficient data-augmentation algorithms, as well as analytical formulae for Bayes~factors---all of which we derive in this paper.

\paragraph{Outline of the paper.} 
Section \ref{sec:review} introduces the statistical setup for the analysis of binary experiments and reviews existing methods for Bayesian inference in this setting.  Section~\ref{sec:methods} introduces our proposal.  It also derives key results for implementation, such as analytical formulae for the marginal likelihood, algorithms for posterior sampling, and an extension of the model accommodating covariates. Section~\ref{sec:results} demonstrates the utility of our method in three empirical examples. Section~\ref{sec:discussion} concludes the paper, and suggests possible extensions for future research. Code to replicate our analysis is available at \url{https://github.com/njirons/causally-sound}.

\section{Preliminaries}
\label{sec:review}

In this section we set notation, the statistical setup, and briefly review the two main approaches currently used for the Bayesian analysis of binary experiments---the independent beta and logit transformation approaches. We also briefly introduce the response type parameterization of the joint distribution of potential outcomes, which is an important stepping stone for understanding our proposal. 

\subsection{Potential outcomes}

Our analysis is situated within the potential outcomes framework of causal inference \citep{neyman1923,rubin1974estimating}. Let $N$ denote the total number of participants in the study,  $Z_i$ a binary treatment indicator and $Y_i$ a binary outcome indicator for subject $i\in\{1,\ldots,N\}$. We denote by $Y_i(z)$ the potential outcome of subject $i$ under the experimental condition $Z_i = z$, where $z = 0$ indicates the control and $z=1$ the treatment condition.  Under the standard consistency assumption, we have that the observed outcome of subject $i$ equals the potential outcome associated to the experimental condition that subject $i$ has actually received, i.e., $Y_i = Y_i(Z_i)$. 
Throughout the paper, we adopt the convention that $Y_i=1$ denotes an adverse outcome, such as death or the contraction of a disease. 
We take a super-population perspective, and assume that subjects are independent and identically distributed (i.i.d.) draws from a common population. 
We assume complete randomization, which implies ignorability of the treatment assignment, $\{Y_i(1), Y_i(0)\} \indep Z_i$.  

\subsection{Marginal parameterization}
\label{sec:marginal}

When subjects are independently drawn from a common super-population and the treatment is assigned at random, it follows that the observed \emph{counts} of adverse outcomes in each treatment arm, 
\[
y_0 = \sum_{i=1}^N Y_i(1-Z_i), \qquad y_1 = \sum_{i=1}^N Y_iZ_i,
\]
follow independent binomial distributions (see Supplement A Section \ref{app:covariates} for derivation):
\begin{align*}
y_0 &\sim \text{Binomial}(N_0, \theta_0) \quad\indep\quad
y_1 \sim \text{Binomial}(N_1, \theta_1),
\end{align*}
where here, 
$\theta_1=\mathbb{P}(Y_i(1)= 1)$, $N_1=\sum_{i}Z_i$  denote the probability of an adverse outcome and the sample size of the treatment group, and $\theta_0=\mathbb{P}(Y_i(0)= 1)$,  $N_0  = N-N_1$ are the analogous quantities for the control group. 
We refer to the probabilities $\theta_0$ and $\theta_1$ as the \textit{baseline risk} and \textit{risk of treatment}, respectively.

This defines the likelihood under the marginal parameterization of a binary experiment, so called because the parameters $(\theta_0,\theta_1)$ are defined in terms of the marginal distribution of the potential outcomes $Y_i(0)$ and $Y_i(1)$:
\begin{equation}
L(\mathcal{D}|\theta_0,\theta_1) = \binom{N_0}{y_0}\theta_0^{y_0}(1-\theta_0)^{N_0-y_0}\times \binom{N_1}{y_1}\theta_1^{y_1}(1-\theta_1)^{N_1-y_1},
\label{eq:marg-lik}
\end{equation}
where hereafter we denote the observed data by $\mathcal{D}=(y_0,y_1,N_0,N_1)$. 
To determine the effect of treatment, if any, 
Bayesian inference is carried out using the posterior distribution of the parameters $(\theta_0,\theta_1)$, which requires specification of a prior distribution for $(\theta_0,\theta_1)$. There are two main parameterizations with accompanying priors currently in use, discussed extensively in \citet{agresti2005-intervals} and \citet{dablander2022}. These are the independent beta (IB) and logit transformation (LT) approaches.

\subsubsection{Independent beta (IB) approach}
\label{sec:ib}

The independent beta (IB) approach  \citep{jeffreys1935} assigns the prior
\begin{equation}
\theta_0 \sim\text{Beta}(a_0,b_0) \quad\indep\quad \theta_1 \sim\text{Beta}(a_1,b_1),
\label{eq:ib}    
\end{equation}
for some hyperparameters $a_0,b_0,a_1,b_1>0$.  We refer to (\ref{eq:ib}) as the $\text{IB}(a;b)$ prior, where $a=(a_0,a_1),b=(b_0,b_1)$. A common \emph{default} specification is $a_0=b_0=a_1=b_1=1$, which assigns a uniform distribution to $(\theta_0,\theta_1)$. This choice of flat priors is usually thought to encode ignorance of $(\theta_0,\theta_1)$ \textit{a priori}, though it makes strong implicit assumptions as we discuss next. 

The main advantage of the IB approach is its simplicity. As the beta prior is conjugate to the binomial likelihood, 
estimation and posterior simulation can be carried out exactly without resorting to approximate sampling algorithms, such as MCMC. Furthermore, marginal likelihoods and Bayes factors, which are widely used for Bayesian hypothesis testing and can be difficult to calculate in general (usually requiring numerical approximation or estimation via posterior simulation), can be calculated analytically \citep{kass1995}.

A drawback of the IB approach is the restrictive assumption of independence between $\theta_0$ and $\theta_1$. In most experimental settings, we would expect our knowledge about the risks in the control and treatment groups to be dependent. 
For example, if we know that the population prevalence of an infectious disease is approximately 1\%, we would expect the prevalence of the disease among those receiving a vaccine to be concentrated around 1\% or below,  reflecting the common prior belief that it is unlikely that the vaccine would cause the disease. The IB prior fails to accommodate this natural dependence between risks in each arm of the trial. Furthermore, since independence in the prior and the likelihood implies independence \textit{a posteriori}, this failure also extends to the posterior.  

\subsubsection{Logit Transformation (LT) approach}
\label{sec:lt}

The logit transformation (LT) approach \citep{kass1992,agresti2005,dablander2022} reparameterizes the model in terms of the logit-transformed risks, by defining the parameters $(\beta,\psi)$ satisfying
\begin{align*}
\log\left(\frac{\theta_0}{1-\theta_0}\right) = \beta-\frac{\psi}{2}, \qquad
\log\left(\frac{\theta_1}{1-\theta_1}\right) = \beta+\frac{\psi}{2}.
\end{align*}
Note this parameterization is equivalent to a logistic regression of the outcome on the treatment with the encoding $Z \in \{-1/2, 1/2\}$ \citep{gronau2021}. It then assigns an independent normal prior to $(\beta,\psi)$:
\begin{equation}
\beta \sim\text{Normal}(\mu_{\beta},\sigma_\beta^2) \quad\indep\quad \psi \sim\text{Normal}(\mu_{\psi},\sigma_\psi^2),
\label{eq:lt}    
\end{equation}
where $\mu = (\mu_{\beta}, \mu_{\psi})$ and $\sigma=(\sigma_\beta,\sigma_\psi)>0$ are hyperparameters. A common default choice is $\mu=(0,0)$ and $\sigma=(1,1)$. We refer to (\ref{eq:lt}) as the $\text{LT}(\mu; \sigma)$ prior. This prior encodes correlation between $\theta_0$ and $\theta_1$ through their shared dependence on $\beta$ and $\psi$. Figure~\ref{fig:pgm}  depicts probabilistic graphical models comparing the IB and LT parameterizations, as well as the other approaches we will later discuss. 

\begin{figure}[!t]
  \begin{subfigure}[t]{0.25\linewidth}
  \centering
          \begin{tikzpicture}[node distance = .7cm]
            \node[latent] (p01) {$\theta_0$};
            \node[obs, below = of p01] (n0) {$y_0$};
            \node[latent, right = 1cm of p01] (p11) {$\theta_1$};
            \node[obs, below = of p11] (n1) {$y_1$};
            \path (p01) edge  (n0);
            \path (p11) edge  (n1);
          \end{tikzpicture}   
          \caption{Independent Beta} \label{fig:ib}
  \end{subfigure}%
  \begin{subfigure}[t]{0.25\linewidth}
  \centering
          \begin{tikzpicture}[node distance = .7cm]
            \node[doublecircle] (p01) {$\theta_0$};
            \node[obs, below = of p01] (n0) {$y_0$};
            \node[doublecircle, right = 1cm of p01] (p11) {$\theta_1$};
            \node[obs, below = of p11] (n1) {$y_1$};
            \node[latent, above  = 1cm of p01] (a) {$\beta$};
            \node[latent, above  = 1cm  of p11] (b) {$\psi$};
            \path (p01) edge  (n0);
            \path (p11) edge  (n1);
            \path (a) edge  (p01);
            \path (a) edge  (p11);
            \path (b) edge  (p01);
            \path (b) edge  (p11);
          \end{tikzpicture}   
          \caption{Logit Transform} \label{fig:lt}
  \end{subfigure}%
  \begin{subfigure}[t]{0.25\linewidth}
  \centering
          \begin{tikzpicture}[node distance = .7cm]
            \node[doublecircle] (p01) {$\theta_0$};
            \node[obs, below = of p01] (n0) {$y_0$};
            \node[doublecircle, right = 1cm of p01] (p11) {$\theta_1$};
            \node[obs, below = of p11] (n1) {$y_1$};
            \node[latent, above = 1.25cm of $(p01)!0.5!(p11)$] (a) {$\mathbf{p}$};
            \path (p01) edge  (n0);
            \path (p11) edge  (n1);
            \path (a) edge  (p01);
            \path (a) edge  (p11);
          \end{tikzpicture}   
          \caption{Response Type} \label{fig:rt}
  \end{subfigure}%
  \begin{subfigure}[t]{0.25\linewidth}
  \centering
          \begin{tikzpicture}[node distance = .7cm]
            \node[latent] (p01) {$\theta_0$};
            \node[obs, below = of p01] (n0) {$y_0$};
            \node[doublecircle, right = 1cm of p01] (p11) {$\theta_1$};
            \node[obs, below = of p11] (n1) {$y_1$};
            \node[latent, above = 1cm of p11] (a) {$\eta_s$};
            \node[latent, above left = 1cm and 1cm of p11] (b) {$\eta_e$};
            \path (p01) edge  (n0);
            \path (p01) edge  (p11);
            \path (p11) edge  (n1);
            \path (a) edge  (p11);
            \path (b) edge  (p11);
          \end{tikzpicture}   
          \caption{BREASE} \label{fig:pns}
  \end{subfigure}
\caption{Probabilistic graphical models for different parameterizations and prior setups. Gray nodes denote observed variables, white nodes denote latent parameters, and double borders indicate that a node is a deterministic function of its parents. (a) Independent beta priors are placed directly on $\theta_0$ and $\theta_1$; 
(b) Independent Gaussian priors are placed on the log odds quantities $\beta$ and $\psi$; (c) A Dirichlet prior is placed on the response type probabilities $\mathbf{p}$; (d) Our proposal, independent beta priors are placed on $\theta_0$, $\eta_e$, and $\eta_s$.
}   
\label{fig:pgm}
\end{figure}
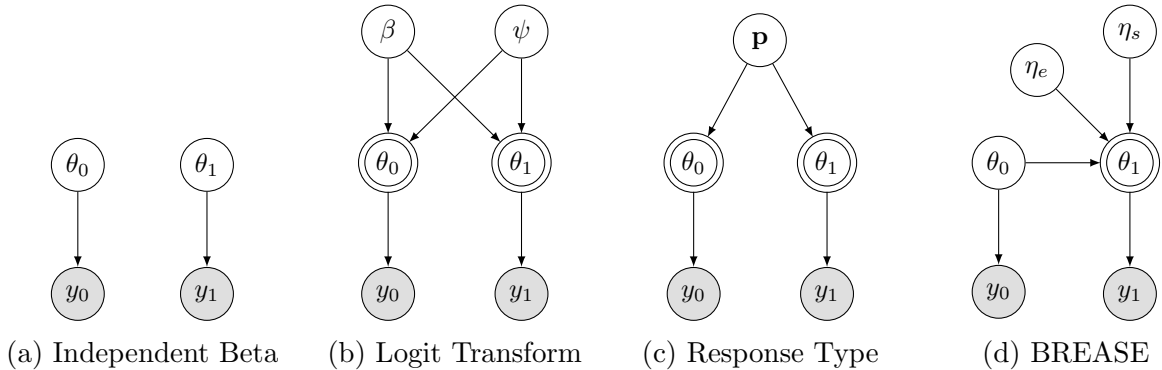

While the LT approach induces prior dependence between $\theta_0$ and $\theta_1$, this comes at the cost of a less intuitive parameterization.
Here $\beta$ is interpreted as the ``grand log odds,'' i.e, the average of the log odds across treatment arms, whereas $\psi$ is the log odds ratio. Odds ratios are notoriously difficult to understand, and thus reasoning about the prior means and variances of log odds---two unbounded hyperparameters---is often challenging in practice. 
The LT approach also has other computational disadvantages relative to the IB prior. Unlike the IB approach, marginal likelihoods and Bayes factors are not available analytically, and posterior sampling must be carried out approximately. 

\subsection{Response type (RT) parameterization}
\label{sec:strata}

The IB and LT approaches focus on the margins of the joint distribution of potential outcomes $(Y_i(0),Y_i(1))$. 
This focus is natural, as the observed data depends only upon the parameters $\theta_0$ and $\theta_1$. However, thinking in terms of their \emph{joint} distribution reveals alternative ways of inducing prior dependence between  these parameters. 
Specifically, the joint distribution of potential outcomes is fully characterized by four probabilities
\begin{equation}
p_{jk} = \mathbb{P}(Y_i(0)=j, Y_i(1)=k), \qquad j,k\in\{0,1\}.
\label{eq:rt}
\end{equation}
The probabilities $\mathbf{p}=\{p_{jk}\}_{j,k\in\{0,1\}}$ describe the frequencies of the four possible response types in the population \citep{copas1973,greenland1986}.
These include: (i)~the ``doomed'' $\{Y_i(0)=1,Y_i(1)=1\}$, for whom death occurs regardless of treatment; (ii)~the ``immune'' $\{Y_i(0)=0,Y_i(1)=0\}$, for whom death does not occur regardless of treatment; (iii)~the ``preventive'' $\{Y_i(0)=1,Y_i(1)=0\}$, for whom treatment \emph{prevents} death; and,  (iv)~the ``causal'' $\{Y_i(0)=0,Y_i(1)=1\}$, for whom treatment \emph{causes} death.  These probabilities are also sometimes referred to as ``probabilities of causation'' \citep{tian2000, pearl:2009}.  Here $\theta_0$ and $\theta_1$, which satisfy $\theta_0 = p_{10}+p_{11}$ and $\theta_1 = p_{01}+p_{11}$, define the margins of Table~\ref{tab:ps}.

\begin{table}[t]
\centering
\renewcommand{\arraystretch}{1.2}
\begin{tabular}{l|rr|r}
\toprule
            & ${Y_i(0)=0}$                      &  ${Y_i(0)=1}$                  &   Row Sum   \\ \cmidrule(lr){1-4} 
${Y_i(1)=0}$  & $p_{00}=(1-\eta_s)(1-\theta_0)$ & $p_{10}=\eta_e\theta_0$      & $1-\theta_1$ \\
${Y_i(1)=1}$  & $p_{01}=\eta_s(1-\theta_0)$     & $p_{11}=(1-\eta_e)\theta_0$  & $\theta_1$   \\ \cmidrule(lr){1-4} 
Column Sum  & $1-\theta_0$                    & $\theta_0$                   &              \\ 
\bottomrule
\end{tabular}
\caption{
$2\times 2$ contingency table of potential outcomes for a binary experiment. Only the margins of the table are identified from the observed data.
}
    \label{tab:ps}
\end{table}

Whereas in the marginal parameterization, independence of the likelihood and prior imply that estimation of $\theta_0$ is only informed by data in the control group (and similarly for $\theta_1$), the response type (RT) parameterization intertwines the data from each arm of the study. The shared dependence of $\theta_0$ and $\theta_1$ on the response type proportions reveals the link between outcomes in the control and treated groups. 

A Bayesian approach to modeling the  
response type probabilities $\mathbf{p}$ requires specification of a prior density supported on the probability simplex, making the Dirichlet distribution a natural candidate
\begin{equation}
\mathbf{p} = (p_{00},p_{10},p_{01},p_{11}) \sim \text{Dirichlet}(a_{00},a_{10},a_{01},a_{11}), \qquad a_{00},a_{10},a_{01},a_{11} > 0.
\label{eq:dirichlet}
\end{equation}
Indeed, priors of this type have been used 
in the analysis of partially identified quantities in randomized trials with non-compliance, such as in \citet{chickering1996};
see also \citet{imbens1997,madigan1999bayesian,hirano2000assessing}.
As we show next, the Dirichlet prior is a special case of our proposal, and our analysis not only extends it, but also clarifies its advantages and limitations as a means to induce the desired joint prior distribution on the two binomial proportions $(\theta_0, \theta_1)$. 

\section{The BREASE framework}
\label{sec:methods}

In this section we introduce the BREASE framework for the analysis of binary experiments. We start by parameterizing the likelihood in terms of the baseline risk,  efficacy, and  risk of adverse side effects of the treatment. We then propose a jointly independent beta prior distributions on these three parameters, which we show to be a generalization of the Dirichlet prior on the response types. Our proposal has a number of advantages.   From a statistical perspective, it induces dependence between the risks in the treatment and control groups, while also   enabling exact posterior sampling, and marginal likelihood calculations.  From a clinical perspective, this parameterization casts the model in terms of natural quantities appearing frequently in the clinician's vocabulary, thereby facilitating  interpretability, elicitation of prior knowledge, and sensitivity analyses. 

\subsection{Baseline risk, efficacy and adverse side effects}

To make things concrete, suppose $Y_i=1$ denotes death. We define the \textit{efficacy} of the treatment, $\eta_e$, as the probability that the treatment \emph{prevents} the death of a patient that would have otherwise died without it:
\begin{equation}
\eta_e = \mathbb{P}(Y_i(1)=0|Y_i(0)=1).
\label{eq:eta-e}
\end{equation}
Similarly, we define the risk of \textit{adverse side effects} of the treatment, $\eta_s$, as the probability that the treatment \emph{causes} the death of a patient that would have otherwise been healthy:
\begin{equation}
\eta_s = \mathbb{P}(Y_i(1)=1|Y_i(0)=0).
\label{eq:eta-s}
\end{equation}
Note that these are severe adverse side effects that  result in an outcome (e.g., death) opposite to the desired outcome of interest (i.e., survival). In the medical literature, this is sometimes called a ``paradoxical reaction'' \citep{smith2012paradoxical}. Such events could be the result not only of severe adverse biological reactions, but also of other forms of iatrogenesis, such as medical errors. 

These quantities can be interpreted as probabilities of sufficient causation \citep{tian2000, cinelli2021generalizing}, i.e., $\eta_e$ is the probability that treatment is sufficient to save or cure a patient,  while $\eta_s$ is the probability that treatment is sufficient to kill or hurt a patient. They correspond directly to the counterfactual interpretation of what clinicians colloquially refer to as ``efficacy'' and ``side effects'' of a drug or vaccine.  Indeed, a commonly used measure in clinical trials called ``efficacy'', defined as $1-\theta_1/\theta_0$, equals precisely $\eta_e$ under the assumption that treatment causes no harm ($\eta_s=0$).  

Applying the law of total probability, we can decompose the risk of treatment in terms of the baseline risk, efficacy, and risk of adverse side effects (BREASE), as 
\begin{equation}
\theta_1 = (1-\eta_e)\theta_0+\eta_s(1-\theta_0).
\label{eq:theta1}
\end{equation}
Table~\ref{tab:ps} shows how the response type probabilities $\mathbf{p}$ can be written as products of $\theta_0$, $\eta_s$, and $\eta_e$. As with the response type approach, this parameterization highlights the natural dependence between $\theta_0$ and $\theta_1$ that is easy to miss without framing the problem in the language of potential outcomes. For example, note that $\theta_0$ and $\theta_1$ are functionally independent only under the strong assumption that $\eta_e=1-\eta_s$, i.e., the probability of treatment saving a patient is equal to the probability that it does not kill one. 

\subsubsection{Likelihood}
\label{sec:lik}

Plugging in (\ref{eq:theta1}), we can rewrite the likelihood (\ref{eq:marg-lik}) in terms of $(\theta_0,\eta_e,\eta_s)$. 
\begin{theorem} 
\label{thm:lik}
Under (\ref{eq:marg-lik})  and (\ref{eq:eta-e})-(\ref{eq:theta1}), the likelihood is
\begin{align}
L (\mathcal{D}| \theta_0,\eta_e,\eta_s)
&= 
\binom{N_0}{y_0}\binom{N_1}{y_1}
\sum_{j=0}^{y_1}\sum_{k=0}^{N_1-y_1}\Bigg\{\binom{y_1}{j}\binom{N_1-y_1}{k} 
\theta_0^{y_0+j+k}(1-\theta_0)^{N-(y_0+j+k)} \nonumber\\
&\quad\times \eta_e^{k}(1-\eta_e)^{j} 
\eta_s^{y_1-j}(1-\eta_s)^{N_1-y_1-k}\Bigg\}, \qquad (\theta_0,\eta_e,\eta_s) \in [0,1]^3.
\label{eq:lik}
\end{align}
\end{theorem}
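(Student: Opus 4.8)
The plan is to treat this as a direct substitution-and-expansion argument, since (\ref{eq:theta1}) already gives $\theta_1$ as an explicit function of $(\theta_0,\eta_e,\eta_s)$. The starting point is the marginal likelihood (\ref{eq:marg-lik}), which factors into a control-arm piece $\binom{N_0}{y_0}\theta_0^{y_0}(1-\theta_0)^{N_0-y_0}$ depending only on $\theta_0$, and a treatment-arm piece $\binom{N_1}{y_1}\theta_1^{y_1}(1-\theta_1)^{N_1-y_1}$ depending on $\theta_1$. The control-arm factor needs no change, so the whole task reduces to re-expressing the treatment-arm factor.

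First I would record the complementary probability. From (\ref{eq:theta1}), $\theta_1=(1-\eta_e)\theta_0+\eta_s(1-\theta_0)$, and a one-line computation using $1=\theta_0+(1-\theta_0)$ gives $1-\theta_1=\eta_e\theta_0+(1-\eta_s)(1-\theta_0)$. Substituting both expressions, the treatment-arm factor becomes $\binom{N_1}{y_1}\big[(1-\eta_e)\theta_0+\eta_s(1-\theta_0)\big]^{y_1}\big[\eta_e\theta_0+(1-\eta_s)(1-\theta_0)\big]^{N_1-y_1}$, a product of two two-term binomial powers.

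Next I would apply the binomial theorem to each power separately, choosing the summation index so that the resulting exponents line up with the target. Expanding the first bracket in powers of $(1-\eta_e)\theta_0$ indexed by $j$ yields $\sum_{j=0}^{y_1}\binom{y_1}{j}(1-\eta_e)^{j}\theta_0^{j}\,\eta_s^{y_1-j}(1-\theta_0)^{y_1-j}$, and expanding the second in powers of $\eta_e\theta_0$ indexed by $k$ yields $\sum_{k=0}^{N_1-y_1}\binom{N_1-y_1}{k}\eta_e^{k}\theta_0^{k}\,(1-\eta_s)^{N_1-y_1-k}(1-\theta_0)^{N_1-y_1-k}$. Multiplying these two single sums, and reattaching the control-arm factor, produces the double sum over $j$ and $k$ carrying the binomial coefficients $\binom{y_1}{j}\binom{N_1-y_1}{k}$ and the $\eta_e,\eta_s$ powers exactly as displayed in (\ref{eq:lik}).

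The only remaining step is bookkeeping on the powers of $\theta_0$. Collecting the $\theta_0$ contributions from the control-arm factor ($y_0$), the first expansion ($j$), and the second expansion ($k$) gives $\theta_0^{y_0+j+k}$; collecting the $(1-\theta_0)$ contributions, $(N_0-y_0)+(y_1-j)+(N_1-y_1-k)$, and using $N=N_0+N_1$ simplifies this to $N-(y_0+j+k)$, matching (\ref{eq:lik}). There is no genuine analytic obstacle here, as the argument is pure algebra; the only point requiring care is fixing the two expansion directions consistently, since swapping which term in a bracket carries the index gives the correct numerical value but a superficially different-looking formula. I would therefore present the complement identity for $1-\theta_1$ and the two indexed binomial expansions explicitly, and leave the exponent arithmetic as a short verification.
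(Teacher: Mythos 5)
Your proposal is correct and follows exactly the paper's route: the paper proves Theorem~\ref{thm:lik} by substituting (\ref{eq:theta1}) into the treatment-arm factor of (\ref{eq:marg-lik}) and ``applying the binomial theorem twice,'' which is precisely your expansion in $j$ and $k$ with the exponent bookkeeping you describe. Your write-up simply makes explicit the complement identity $1-\theta_1=\eta_e\theta_0+(1-\eta_s)(1-\theta_0)$ and the collection of powers that the paper leaves to the reader.
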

Theorem~\ref{thm:lik} follows from applying the binomial theorem twice. 
As the likelihood (\ref{eq:lik}) is polynomial in $(\theta_0,\eta_e,\eta_s)$, any prior distribution $\pi(\theta_0,\eta_e,\eta_s)$ for which the moments can be explicitly calculated yields an analytical expression for the marginal likelihood. In particular, if
\[
\pi(\theta_0,\eta_e,\eta_s) \propto \theta_0^{\alpha_0-1}(1-\theta_0)^{\beta_0-1}\times\eta_e^{\alpha_e-1}(1-\eta_e)^{\beta_e-1}\times\eta_s^{\alpha_s-1}(1-\eta_s)^{\beta_s-1}
\]
is a product of independent beta distributions, as we will see in the next section, 
then the marginal likelihood is a weighted sum of beta function values.
Furthermore, the posterior distribution $\pi(\theta_0,\eta_e,\eta_s|\mathcal{D})$ will be a mixture of independent beta distributions, from which we can sample exactly via simulation.

\subsubsection{Partial identification and monotonicity}
\label{sec:partial-mono}

The counterfactual parameters $\eta_e$ and $\eta_s$ are only partially identified by the observed data.  That is, in the limit of infinite data, even though $\theta_0$ and $\theta_1$ are point identified, \eqref{eq:theta1} defines a single equation with two unknowns, $\eta_e$ and $\eta_s$, which cannot be solved uniquely. Without further assumptions, we thus have the bounds
\begin{align}
    \max \left\{0, 1-\frac{\theta_1}{\theta_0}\right\} \leq \eta_e \leq \min\left\{\frac{1-\theta_1}{\theta_0}, 1\right\},  \quad  \max \left\{0, \frac{\theta_1-\theta_0}{1-\theta_0}\right\}\leq \eta_s \leq \min\left\{\frac{\theta_1}{1-\theta_0}, 1\right\}.\nonumber
\end{align}
As the sample size increases, the posterior distribution of $\eta_s$ and $\eta_e$ will not concentrate in a point---rather, it will remain spread over its partially identified region \citep{richardson2011, gustafson:2015}. Notice, however, that this does not affect the behavior of the posterior distribution of $(\theta_0,\theta_1)$.  The BREASE parameterization thus explicitly separates the identified and partially identified parameters---$(\theta_0,\theta_1)$ and $(\eta_e,\eta_s)$, respectively. Even if interest does not lie in the counterfactual probabilities $(\eta_s,\eta_e)$ \emph{per se}, assigning a prior to those quantities can be thought of as a causally principled way to specify a joint prior on the identified target parameters $(\theta_0, \theta_1)$. 

Finally, a common assumption in the potential outcomes literature is called \textit{monotonicity}, which states that the treatment does no harm. In our framework, this corresponds to the constraint $\eta_s=0$. This assumption may be reasonable in many clinical settings. Under monotonicity, the efficacy of the treatment is in fact point identified, and given by $\eta_e = 1-\theta_1/\theta_0$. The quantity $\theta_1/\theta_0$ is known as the risk ratio, and the quantity $1-\theta_1/\theta_0$ is indeed known as ``efficacy'' in the clinical trials literature.  In cases where side-effects are not expected to be exactly zero, but are expected to be small, the BREASE approach allows one to instead place an informative prior on $\eta_s$.

\subsection{Prior specification}
\label{sec:brease-prior}

Bayesian inference with the likelihood (\ref{eq:lik}) requires specifying a prior distribution on three separate and variation independent probabilities, i.e, $(\theta_0, \eta_e, \eta_s) \in [0,1]^3$ \citep{basu1977nuisance}.  We propose setting jointly independent beta prior distributions on these parameters:
\begin{equation}
\theta_0 \sim \text{Beta}^*(\mu_0,n_0) \quad\indep\quad \eta_e \sim \text{Beta}^*(\mu_e,n_e) \quad\indep\quad \eta_s \sim \text{Beta}^*(\mu_s,n_s),
\label{eq:gd-betas}
\end{equation}
where here $\text{Beta}^*(\mu,n)$ denotes a $\text{Beta}(a,b)$ distribution,
with mean $\mu = a/(a+b)$ and prior ``sample size'' $n = a+b$. We refer to (\ref{eq:gd-betas}) as the $\text{BREASE}(\bm{\mu};\bm{n})$ prior, where
$\bm{\mu} = (\mu_0,\mu_e,\mu_s)$, $\bm{n} = (n_0,n_e,n_s)$.

Since (\ref{eq:gd-betas}) defines a jointly independent beta prior on $(\theta_0,\eta_e,\eta_s)$, the discussion in Section \ref{sec:lik} applies. In particular, the posterior of $(\theta_0,\eta_e,\eta_s)$ is a mixture of independent betas, which permits exact sampling via simulation, 
and the marginal likelihood is available analytically as a weighted sum of beta functions, as we show in Sections~\ref{sec:posterior-sampling}~and~\ref{sec:gd-testing}.

\paragraph{Connections to the (generalized) Dirichlet.}

The prior (\ref{eq:gd-betas}) induces a \emph{generalized} Dirichlet distribution \citep{dickey1983,dickey1987,tian2003} on the vector of potential outcomes probabilities $\mathbf{p}$---see Supplement A Section~\ref{sec:gd} for derivation and further discussion.
In particular, the \emph{generalized} Dirichlet reduces to the \emph{traditional} Dirichlet distribution (\ref{eq:dirichlet}) for the following restricted choice of prior sample sizes
\begin{equation}
n_e = \mu_0 n_0, \quad n_s=(1-\mu_0)n_0.
\label{eq:equal-confidence}
\end{equation} 
Moreover, since $\theta_1=p_{01}+p_{11}$, by the aggregation property of the Dirichlet  \citep{ng2011dirichlet}, marginally we have
\begin{equation}
\theta_1\sim\text{Beta}^*\left((1-\mu_e)\mu_0+\mu_s(1-\mu_0),n_0\right),
\label{eq:theta1-dirichlet}
\end{equation}
which resembles the decomposition~(\ref{eq:theta1}). The BREASE approach thus reveals an implicit ``equal confidence'' assumption of the \emph{traditional} Dirichlet: the prior spread for $\theta_0$ determines the spread of the distributions of $\eta_e$, $\eta_s$, and $\theta_1$ \textit{a~priori}. Hence, the \emph{traditional}  Dirichlet is underparameterized, and unsuitable for cases in which, say, we have ample knowledge of the baseline risk but relatively little information about the possible efficacy or side effects of the treatment (or vice-versa), such as in clinical trials with historical control information \citep{schmidli2014robust}. Casting the likelihood in terms of the BREASE parameters makes such choices explicit, by allowing the  hyperparameters governing $\theta_0$, $\eta_e$ and $\eta_s$ to be set independently.

\subsubsection{Induced prior distribution of \texorpdfstring{$(\theta_0, \theta_1)$}{(theta0, theta1)}}
\label{sec:theta1-dist}

As mentioned in Section~\ref{sec:partial-mono}, our goal with the BREASE approach 
is primarily to induce causally sound priors on the identified parameters of interest, the two binomial proportions $(\theta_0$, $\theta_1)$. Thus we now discuss the induced marginal and conditional distribution of the risk of treatment, $\theta_1$, under the BREASE prior~(\ref{eq:gd-betas}). 

From equation (\ref{eq:theta1}) we see that $\theta_1$, conditionally on $\theta_0$,  is distributed as 
a convex combination of independent beta random variables \textit{a priori}. This distribution was studied in \citet{pham-gia1998} and is given in terms of Appell's first hypergeometric function $F_1$---in Supplement A Section~\ref{app:theta1-dist} we derive the explicit formula and provide further discussion. From here, the marginal prior on $\theta_1$ can be obtained as
$\pi(\theta_1) = \int_0^1 \pi(\theta_1|\theta_0)\pi(\theta_0)d\theta_0.$
While the general formula for $\pi(\theta_1|\theta_0)$  may look unwieldy, and 
the integration in  $\pi(\theta_1)$ 
prohibitive, 
there are noteworthy specific cases. 

\paragraph{Equal confidence.} As noted in the previous discussion, under the equal confidence assumption, $n_e = \mu_0 n_0$, $n_s=(1-\mu_0)n_0$, the marginal prior induced on $\theta_1$ is the beta distribution~in~(\ref{eq:theta1-dirichlet}). In particular, to obtain equal marginal priors for the treatment and control groups, i.e., $\theta_z\sim\text{Beta}^*(\mu_0,n_0)$ for $z\in\{0,1\}$, it suffices to set $\mu_s = (\mu_0/(1-\mu_0))\mu_e$, with $0\leq \mu_e \leq \min(1, (1-\mu_0)/\mu_0)$.  Choosing $\mu_0=1/2$, $n_0=2$, and $\mu_e=\mu_s=\mu$ 
results in marginal uniform priors with prior correlation $\text{Cor}(\theta_0,\theta_1)=1-2\mu$.

\paragraph{Monotonicity.} Under the ``no harm'' monotonicity assumption, $\eta_s=0$, we have $\theta_1=(1-\eta_e)\theta_0$, in which case $\theta_1$ is a product of independent beta random variables \textit{a priori}. 
\citet{springer1970} derived the form of this distribution, with the density given as a Meijer $G$-function.
In particular, if 
$n_e = \mu_0n_0$, we can show that $\theta_1 \sim \text{Beta}((1-\mu_e)n_e,\mu_en_e+(1-\mu_0)n_0).$
For another example, if $(\theta_0,\eta_e)\sim\text{Uniform}(0,1)^2$, we have $\pi(\theta_1) = -\log\theta_1.$
Regarding the conditional prior $\pi(\theta_1|\theta_0)$ under the ``no harm'' assumption, it is clearly a scaled beta distribution, since $\theta_1=(1-\eta_e)\theta_0$. 
If $\eta_e\sim\text{Uniform}(0,1)$, we have $\theta_1|\theta_0\sim\text{Uniform}(0,\theta_0)$. Similarly, under the ``no benefit'' assumption $\eta_e=0$, we have that $\theta_1=\theta_0+\eta_s(1-\theta_0)$, which is a scaled and shifted beta random variable conditional on $\theta_0$. If $\eta_s\sim\text{Uniform}(0,1)$, then $\theta_1|\theta_0\sim\text{Uniform}(\theta_0,1)$. 

\paragraph{Moments.} The joint density $\pi(\theta_0,\theta_1)$ induced by the BREASE$(\mu;n)$ prior is generally complicated, but its moments are easily computed in terms of the hyperparameters $(\mu,n)$ as $\theta_1$ is a polynomial in $(\theta_0,\eta_e,\eta_s)$, which are beta distributed \textit{a priori}. For example, 
the prior covariance has a simple form, $\text{Cov}(\theta_0,\theta_1) = \frac{\mu_0(1-\mu_0)}{n_0+1}(1-\mu_e-\mu_s)$.
This implies the following directions of the prior correlation, 
\begin{equation}
\arraycolsep=1.5pt\def\arraystretch{.75}
\text{Cor}(\theta_0,\theta_1) 
\left\{ 
\begin{array}{ll}
<0, & \qquad\mu_e+\mu_s > 1, \\
=0, & \qquad\mu_e+\mu_s = 1, \\
>0, & \qquad\mu_e+\mu_s < 1.
\end{array}\right.
\label{eq:cor}
\end{equation}
In words, $\theta_0$ and $\theta_1$ are positively correlated \textit{a priori} when the expected harm and benefit of treatment are small, and negatively correlated otherwise. 

\paragraph{Default prior.} 
While we encourage the use of  informative priors, it is useful to have reasonable defaults to start the analysis. 
If we would like to put $\theta_0$ and $\theta_1$ on equal footing, the 
$\text{BREASE}(1/2,\mu,\mu;2, 1,1)$ is thus the natural choice, with the following properties: (i) puts flat uniform priors on $\theta_0$ and $\theta_1$ (as with the IB approach); (ii) induces prior correlation between parameters (as with the LT approach); (iii) assumes no effect of treatment, on average (as with the IB and LT approaches); and, (iv)  depends on a single, easily interpretable parameter $\mu$ denoting the expected  benefits (efficacy) or harm (side effects) of the treatment.
When $\mu > 1/2$, $\theta_1$ and $\theta_0$ become anti-correlated, and thus for most cases, $\mu~\leq~1/2$ is a reasonable choice. 
Our preferred specification uses $\mu=0.3$ as the default. As Figure~\ref{fig:density-pns} in Supplement A shows, this (weakly) encodes the expectation of moderate effects and concentrates mass on the diagonal $\theta_0=\theta_1$. This quality is useful in the context of Bayesian hypothesis testing. 
When testing a null hypothesis $H_0$ (e.g., no effect of treatment on average, $H_0:\theta_0=\theta_1$) nested within an alternative $H_1$, it is desirable for the prior under $H_1$ to concentrate mass around the null model \citep{jeffreys1961,gunel1974,casella2009}.

\subsection{Posterior sampling}
\label{sec:posterior-sampling}

\subsubsection{Exact sampling}

The posterior under (\ref{eq:gd-betas}) is given by the following mixture of independent betas
\begin{align}
\pi&(\theta_0,\eta_e,\eta_s|\mathcal{D})
\propto 
\sum_{j=0}^{y_1}\sum_{k=0}^{N_1-y_1}
\Bigg\{ 
\binom{y_1}{j}\binom{N_1-y_1}{k}
\theta_0^{y_0+j+k+\mu_0n_0}(1-\theta_0)^{N-(y_0+j+k)+(1-\mu_0)n_0} \nonumber\\
&\quad \times\eta_e^{k+\mu_en_e}(1-\eta_e)^{j+(1-\mu_e)n_e}
\eta_s^{y_1-j+\mu_sn_s}(1-\eta_s)^{N_1-y_1-k+(1-\mu_s)n_s}\Bigg\}.
\label{eq:post}
\end{align}
As with the prior, this posterior falls into the family of generalized Dirichlet distributions on the vector of potential outcomes probabilities $\mathbf{p}$.   
While some posterior quantities can be obtained analytically (see Supplement A Section~\ref{sec:post-quant}), 
working  with the posterior density can be cumbersome; we now describe how to sample exactly from the posterior via simulation.
See Supplement A Section~\ref{sec:sampling-proof} for a full derivation of Theorem \ref{thm:sampling}.

\begin{algorithm}[t]
\caption{BREASE posterior---exact sampling algorithm}
\label{algo:sampling}
\begin{algorithmic}
\item[]\textbf{Input:} Data $\mathcal{D}=(y_0,y_1,N_0,N_1)$, hyperparameters $(\mu_0,\mu_e,\mu_s,n_0,n_e,n_s)$, and desired number of posterior samples $T$.
\item[]\textbf{Iterate:} For sample $t\in\{1,\ldots,T\}$, 
\begin{enumerate}[(i)]
    \item Sample 
    $P_1\in\{0,\ldots,N_1-y_1\}$ 
    conditional on $\mathcal{D}$ 
    with probability, as per~(\ref{eq:mix-weights}),
    \[
    \pi(P_1|\mathcal{D}) =\sum_{C_1=0}^{y_1} \pi(C_1,P_1|\mathcal{D}).
    \]
    \item Sample 
    $C_1\in\{0,\ldots,y_1\}$ 
    conditional on 
    $(P_1,\mathcal{D})$
    with probability, as per~(\ref{eq:mix-weights}),
    \[
    \pi(C_1|P_1,\mathcal{D}) \propto \pi(C_1,P_1|\mathcal{D}).
    \]
    \item Sample $(\theta_0,\eta_e,\eta_s)$ 
    conditional on
    $(C_1,P_1,\mathcal{D})$ 
    from the 
    distribution (\ref{eq:mix-post}).
\end{enumerate}
\item[]\textbf{Output:} Posterior samples $\{(\theta_0^{(t)},\eta_e^{(t)},\eta_s^{(t)})\}_{t\in\{1,\ldots,T\}}$.
\end{algorithmic} 
\end{algorithm}

\begin{theorem}
Let $(\theta_0,\eta_e,\eta_s)$ be random variables drawn according to Algorithm \ref{algo:sampling}. Then $(\theta_0,\eta_e,\eta_s)$ are distributed according to the BREASE posterior (\ref{eq:post}).
\label{thm:sampling}
\end{theorem}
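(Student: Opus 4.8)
The plan is to recognize Algorithm~\ref{algo:sampling} as a data-augmentation scheme: the double sum in the posterior (\ref{eq:post}) is read as a marginalization over a pair of latent response-type counts, and the three sampling steps are then shown to reproduce the chain-rule factorization of the resulting augmented posterior. Concretely, I would exhibit an augmented density $\pi(\theta_0,\eta_e,\eta_s,C_1,P_1\mid\mathcal{D})$ whose summands are exactly the terms of (\ref{eq:post}), verify that step~(iii) samples its continuous conditional and steps~(i)--(ii) sample its discrete marginal, and conclude by the law of total probability.

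First I would introduce the latent counts explicitly. Writing $j=C_1$ and $k=P_1$, I would identify $C_1\in\{0,\ldots,y_1\}$ with the number of ``doomed'' subjects among the $y_1$ treated deaths (those with $Y_i(0)=1$) and $P_1\in\{0,\ldots,N_1-y_1\}$ with the number of ``preventive'' subjects among the $N_1-y_1$ treated survivors (those with $Y_i(0)=1$). Under this reading the causal count is $y_1-j$ and the immune count is $N_1-y_1-k$, the binomial coefficients $\binom{y_1}{j}$ and $\binom{N_1-y_1}{k}$ count the latent type-assignments consistent with $(C_1,P_1)$, and using the cell probabilities of Table~\ref{tab:ps} one checks that the exponents of $\theta_0$, $\eta_e$, and $\eta_s$ in the summand of (\ref{eq:post}) match this bookkeeping exactly. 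Hence the summand is precisely $\pi(\theta_0,\eta_e,\eta_s,C_1,P_1\mid\mathcal{D})$, so summing over $C_1$ and $P_1$ returns (\ref{eq:post}) by construction.

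Next I would read off the two pieces that the algorithm samples. Holding $(C_1,P_1)=(j,k)$ fixed, each of the three factors in the summand is an unnormalized beta density in $\theta_0$, $\eta_e$, $\eta_s$ respectively, so the conditional $\pi(\theta_0,\eta_e,\eta_s\mid C_1,P_1,\mathcal{D})$ is a product of three independent betas with the updated shape parameters read from (\ref{eq:post}); this is exactly the component distribution (\ref{eq:mix-post}) drawn in step~(iii). Integrating the summand over $(\theta_0,\eta_e,\eta_s)\in[0,1]^3$ replaces each factor by its beta-function normalizer, giving the joint weight $\pi(C_1,P_1\mid\mathcal{D})$ of (\ref{eq:mix-weights}) up to the global constant of (\ref{eq:post}). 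Matching this to the chain-rule factorization $\pi(C_1,P_1\mid\mathcal{D})=\pi(P_1\mid\mathcal{D})\,\pi(C_1\mid P_1,\mathcal{D})$, step~(i) draws $P_1$ from the marginal $\sum_{C_1}\pi(C_1,P_1\mid\mathcal{D})$ and step~(ii) draws $C_1$ from $\pi(C_1\mid P_1,\mathcal{D})\propto\pi(C_1,P_1\mid\mathcal{D})$, so steps~(i)--(iii) jointly produce a draw from $\pi(\theta_0,\eta_e,\eta_s,C_1,P_1\mid\mathcal{D})$; marginalizing over the discrete pair then yields (\ref{eq:post}), which is the claim.

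The main obstacle is bookkeeping rather than conceptual. I must confirm that each factor of the summand is correctly recognized as an unnormalized beta under the $\text{Beta}^*(\mu,n)$ convention (with shape parameters $\mu n$ and $(1-\mu)n$, so the exponent-to-shape shifts are tracked consistently), that the index ranges for $C_1$ and $P_1$ are exactly those carried by the binomial coefficients, and that the product of the three beta-function normalizers assembles into the weights (\ref{eq:mix-weights}). Once these identifications are verified, the remainder is the standard fact that a finite mixture is sampled correctly by first drawing its component index---here factored sequentially as $P_1$ and then $C_1\mid P_1$---and then drawing from the selected component.
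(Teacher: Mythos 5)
Your overall strategy is sound, and it is essentially the route the paper itself points to: the paper's proof sketch remarks that its derivation ``provides a counterfactual interpretation of the mixture weights that result from directly normalizing the kernels in (\ref{eq:post}),'' which is exactly the algebraic reading you propose (normalize each summand of (\ref{eq:post}) into a weight times a product-of-betas component, then sample index-then-component). The paper's own proof runs in the generative direction instead: it defines the response-type counts, shows they are multinomial given $(\theta_0,\eta_e,\eta_s)$, deduces that $C_1$ and $P_1$ are conditionally independent binomials given the data, and then obtains (\ref{eq:mix-post}) by Bayes' rule and (\ref{eq:mix-weights}) by integration, so that the mixture identity follows from the law of total probability rather than from an algebraic match against (\ref{eq:post}).

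There is, however, a concrete error in your identification of the latent counts, and it sits precisely at the step you defer as ``bookkeeping.'' You set $j=C_1$ with $C_1$ the number of \emph{doomed} subjects among the $y_1$ treated deaths. But in Algorithm~\ref{algo:sampling} and in (\ref{eq:mix-post})--(\ref{eq:mix-weights}), $C_1$ is the \emph{causal} count: the $\eta_s$ factor of (\ref{eq:mix-post}) is $\text{Beta}\left(\eta_s;\,C_1+\mu_sn_s,\,N_1-y_1-P_1+(1-\mu_s)n_s\right)$, whereas the $\eta_s$ kernel in the $(j,k)$ summand of (\ref{eq:post}) is $\eta_s^{y_1-j+\mu_sn_s}(1-\eta_s)^{N_1-y_1-k+(1-\mu_s)n_s}$; these agree only if $C_1=y_1-j$, not $C_1=j$. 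The same mismatch appears in the $\eta_e$ and $\theta_0$ factors (e.g., $(1-\eta_e)^{j+(1-\mu_e)n_e}$ versus the second shape $y_1-C_1+(1-\mu_e)n_e$ in (\ref{eq:mix-post})) and hence in the weights, so the verifications you assert---``this is exactly the component distribution (\ref{eq:mix-post})'' and ``the normalizers assemble into (\ref{eq:mix-weights})''---would fail as stated. Your multinomial accounting against Table~\ref{tab:ps} is itself correct: the summation index $j$ \emph{is} the doomed count and $k$ the preventive count. The repair is therefore only a relabeling: take $C_1=y_1-j$ (causal) and $P_1=k$, note $\binom{y_1}{j}=\binom{y_1}{C_1}$, and the double sum in (\ref{eq:post}) becomes, term by term, the algorithm's mixture over $(C_1,P_1)$; with that substitution the rest of your argument (chain-rule factorization of the index distribution, then the component draw, then marginalization) goes through.
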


\begin{proof}[Sketch of proof] 
We define the counterfactual counts
\begin{align*}
C_1 &= \sum_{i=1}^N I(Z_i=1,Y_i(1)=1,Y_i(0)=0), \qquad
P_1 = \sum_{i=1}^N I(Z_i=1,Y_i(1)=0,Y_i(0)=1),
\end{align*}
which are unobserved quantities.
Here, $C_1$ is the number of ``causal'' subjects in the treatment group, i.e., those who died under treatment but would have survived if untreated. Similarly, $P_1$ is the number of ``preventive'' subjects in the treatment group, i.e., those who survived under treatment but would have died if untreated. 
The BREASE posterior can then be expressed as a mixture distribution:
\begin{align}
&\pi(\theta_0,\eta_e,\eta_s|\mathcal{D}) = 
\sum_{C_1=0}^{y_1}\sum_{P_1=0}^{N_1-y_1} \pi(\theta_0,\eta_e,\eta_s|C_1,P_1,\mathcal{D}) 
\times \pi(C_1,P_1|\mathcal{D}).
\label{eq:mix}    
\end{align}
Hence, we can sample from the posterior by first drawing from the distribution of unobserved counts
$(C_1,P_1)$ 
conditional on the observed data $\mathcal{D}$. This distribution has probability mass function
\begin{align}
\pi&(C_1,P_1|\mathcal{D})\propto 
\binom{y_1}{C_1}\binom{N_1-y_1}{P_1}
\text{B}(P_1+\mu_en_e,y_1-C_1+(1-\mu_e)n_e)
\nonumber\\
&\quad\times \text{B}(y_0+y_1-C_1+P_1+\mu_0n_0,N-(y_0+y_1-C_1+P_1)+(1-\mu_0)n_0) \nonumber\\
&\quad\times \text{B}(C_1+\mu_sn_s,N_1-y_1-P_1+(1-\mu_s)n_s).
\label{eq:mix-weights}
\end{align}
We then sample the parameters $(\theta_0,\eta_e,\eta_s)$, which have an independent beta distribution conditional on the augmented data 
$(C_1,P_1,\mathcal{D})$: 
\begin{align}
\pi& (\theta_0,\eta_e,\eta_s |C_1,P_1,\mathcal{D})= \text{Beta}(\eta_e; P_1+\mu_e n_e, y_1-C_1+(1-\mu_e)n_e) \nonumber\\
&\quad\times\text{Beta}(\theta_0;y_0+y_1-C_1+P_1+\mu_0n_0,N-(y_0+y_1-C_1+P_1)+(1-\mu_0)n_0) \nonumber\\
&\quad\times\text{Beta}(\eta_s; C_1+\mu_sn_s,N_1-y_1-P_1+(1-\mu_s)n_s).
\label{eq:mix-post}
\end{align}    
Note that this derivation of the distribution \eqref{eq:mix-weights} provides a counterfactual interpretation of the mixture weights that result from directly normalizing the kernels in \eqref{eq:post}. 
\end{proof}

\subsubsection{Data augmentation (DA) algorithm}

\begin{algorithm}[!ht]
\caption{BREASE posterior---data augmentation algorithm}
\label{algo:gibbs}
\begin{algorithmic}
\item[]\textbf{Input:} Data $\mathcal{D}=(y_0,y_1,N_0,N_1)$, hyperparameters $(\mu_0,\mu_e,\mu_s,n_0,n_e,n_s)$, desired number of posterior samples $T$, number of burn-in iterations $B$, and BREASE parameter initialization $(\theta_0^{(0)},\eta_e^{(0)},\eta_s^{(0)})\in(0,1)^3$.
\item[]\textbf{Iterate:} For sample $t\in\{1,\ldots,T\}$, 
\begin{enumerate}[(i)]
    \item Sample 
    $(C_1^{(t)},P_1^{(t)})$ 
    conditional on $(\theta_0^{(t-1)},\eta_e^{(t-1)},\eta_s^{(t-1)},\mathcal{D})$ from the independent binomial distributions
    \begin{align}
    \textstyle
        C_1^{(t)} \sim \text{Binomial}\left(y_1,\frac{(1-\theta_0^{(t-1)})\eta_s^{(t-1)}}{\theta_1^{(t-1)}}\right), 
        P_1^{(t)} \sim \text{Binomial}\left(N_1-y_1,\frac{\theta_0^{(t-1)}\eta_e^{(t-1)}}{1-\theta_1^{(t-1)}}\right), \nonumber
    \end{align}
    where $\theta_1^{(t-1)} = \theta_0^{(t-1)}(1-\eta_e^{(t-1)})+(1-\theta_0^{(t-1)})\eta_s^{(t-1)}$.
    \item Sample $(\theta_0^{(t)},\eta_e^{(t)},\eta_s^{(t)})$ conditional on 
    $(C_1^{(t)},P_1^{(t)},\mathcal{D})$
    from the independent beta distributions \eqref{eq:mix-post}.
\end{enumerate}
\item[]\textbf{Output:} Posterior samples after burn-in $\{(\theta_0^{(t)},\eta_e^{(t)},\eta_s^{(t)})\}_{t\in\{B+1,\ldots,T\}}$.
\end{algorithmic} 
\end{algorithm}

We now derive a Gibbs sampler targeting the BREASE posterior \eqref{eq:post} based on the data augmentation scheme introduced for Algorithm~\ref{algo:sampling}.
Algorithm \ref{algo:gibbs} defines the Gibbs sampler. It consists of two steps: (i) first, we sample the counterfactual counts 
$C_1$ and $P_1$ 
conditional on the BREASE parameters; and, (ii) we sample $\theta_0, \eta_e, \eta_s$ conditional on the augmented data.
In numerical experiments, we find that the algorithm converges to the BREASE posterior quickly, often mixing within a few hundred iterations, and the sampling is also quite fast. The conditional distribution
of the unobserved counts 
$(C_1,P_1)|(\theta_0,\eta_e,\eta_s,\mathcal{D})$ 
is derived in Supplement A Section~\ref{sec:sampling-proof}.

\subsubsection{Pathological sampling}
\label{sec:pathological}

To demonstrate the utility of our posterior sampling algorithms, 
we now turn to an example for which RJAGS \citep{rjags} and RStan \citep{rstan}, two popular MCMC software packages, 
fail to sample from the BREASE posterior.  
We use the data $y_0~=~20$,  $N_0~=~1000$, $y_1~=~40$,  $N_1~=~1000$, 
and the hyperparameters 
$\mu_0~=~0.5$,  $n_0~=~2$, $\mu_e~=~0.5$,  $n_e~=~2, \mu_s~=~0.01$, $n_s~=~1$.
The prior distributions for $\theta_0$ and $\eta_e$ are 
vague independent 
$\text{Uniform}(0,1)$ distributions.  
On the other hand, the prior on the risk of side effects $\eta_s$ is concentrated near 0 with mean $\mu_s=0.01$. This prior encodes a quasi-monotonicity assumption on the treatment that is clearly in conflict with the data.

\begin{figure}[t]
\centering
\includegraphics[scale=.38]{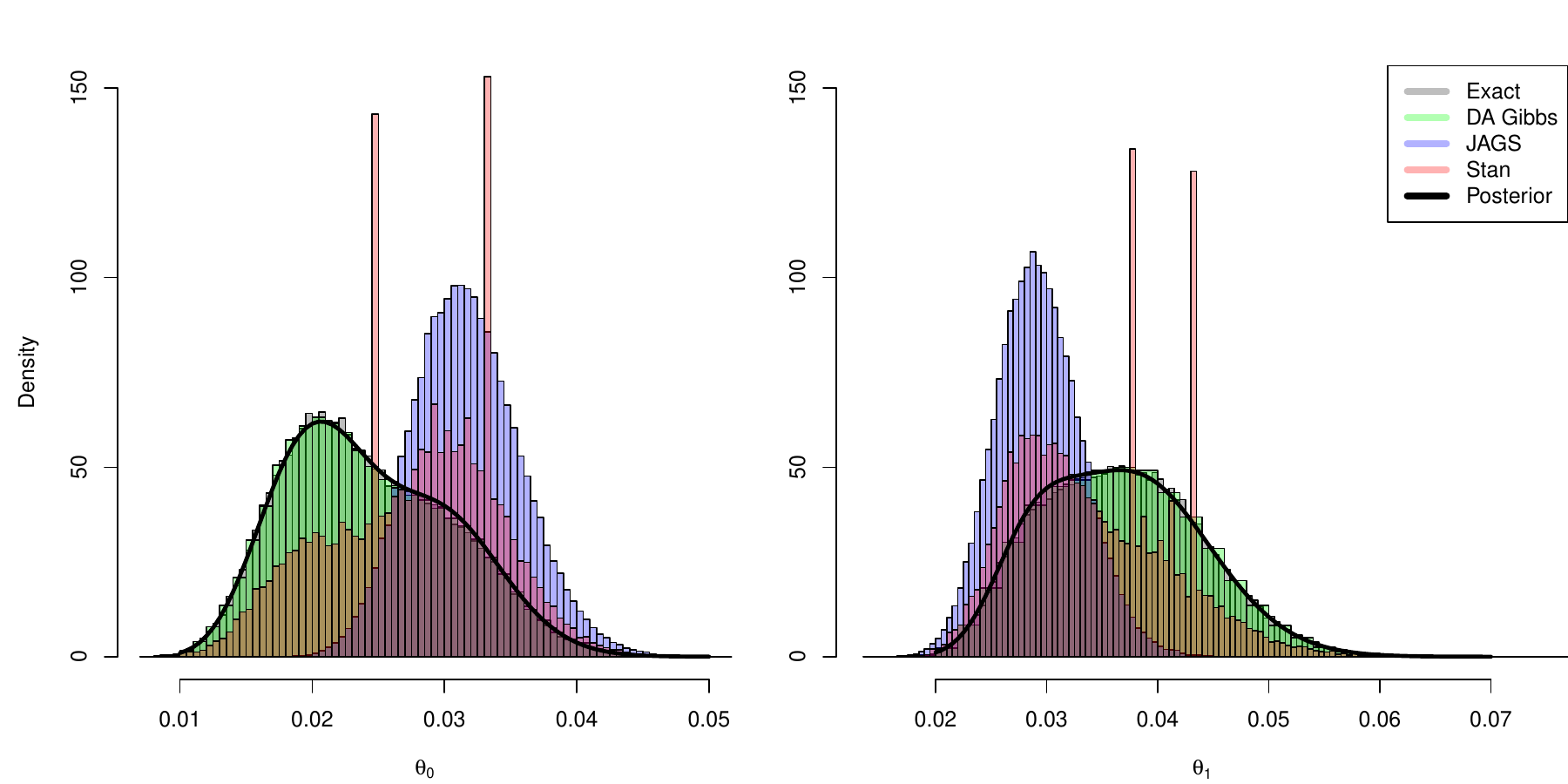}
\caption{Pathological MCMC posterior sampling exhibited in posterior histograms of the baseline risk $\theta_0$ (left) and treatment risk $\theta_1$ (right). The marginal posterior of $\theta_1$ (black curve) was approximated using numerical integration.}   
\label{fig:sampling}
\end{figure}

Prior-data conflict, which arises when the prior is concentrated on parameter values that are unlikely given the data, is a common culprit when diagnosing pathological MCMC sampling \citep{evans2006}. 
It is also a salient issue in the Bayesian analysis of clinical trials, particularly when historical information or clinical expertise are brought to bear on the design and analysis of the study \citep{schmidli2014robust}.
This example is no exception. 
Figure \ref{fig:sampling} shows histograms of 100,000 posterior samples of $\theta_0$ and $\theta_1$ drawn using Algorithm \ref{algo:sampling} (grey), 
Algorithm \ref{algo:gibbs} (green), 
JAGS (blue), and Stan (red). The marginal posterior density is plotted in black for reference. 
The posterior of $\theta_0$, which is a mixture of beta distributions, is exhibited in the left panel of Figure \ref{fig:sampling}. While Algorithms \ref{algo:sampling} and \ref{algo:gibbs}
produce posterior samples that fully capture the distribution, JAGS and Stan fail to adequately explore the left half of the distribution.
Although Stan 
manages to deviate from the right half
as compared to JAGS, 
its chains get stuck at $\theta_0\approx 0.024$ and $\theta_0\approx 0.033$ 
when the sampler rejects numerous proposal draws. 
The story is much the same for~$\theta_1$.

This example demonstrates that it is useful to have bespoke algorithms that perform well, even in adversarial settings. In particular, the algorithms we provide here may prove useful for future extensions of the model, as we will later discuss. Nevertheless, we note that JAGS and Stan do work well for this model in most cases---indeed, this is a pathological example designed to be challenging. Furthermore, in the case of prior-data conflict (or more generally when a sampler is struggling), a reassessment of the prior may be warranted, perhaps in favor of a more robust approach \citep{schmidli2014robust}. In Section \ref{sec:pathological-supp} of Supplement A, we further investigate the numerical issues causing the sampling difficulties in JAGS and Stan and discuss solutions.

\paragraph{Monotonicity.} Posterior sampling under monotonicity constraints can be obtained with similar procedures. See Theorems~\ref{thm:sampling-noharm}-\ref{thm:sampling-nobenefit} of Supplement~A, Section~\ref{sec:sampling-appendix}.

\subsection{Marginal likelihoods and Bayes factors}
\label{sec:gd-testing}

From a Bayesian perspective, hypothesis testing is essentially a model comparison exercise \citep{jeffreys1961,dickey1970,kass1995}. Consider two competing hypotheses, $H_0$ and $H_1$.  For each  hypothesis $H_k$, $k \in \{0,1\}$, the Bayesian approach requires postulating a fully specified model $M_k$, with likelihood $L_k(\mathcal{D}|\theta)$ and prior $\pi_k(\theta)$, respecting the constraints of the hypothesis the model is intended to represent. 
Evidence in favor of $H_1$ relative to $H_0$ is then quantified using the Bayes factor $\text{BF}_{10}$, given by the ratio of the marginal likelihoods of the observed data under each model, $\text{BF}_{10} = L_1(\mathcal{D})/L_0(\mathcal{D})$, where $L_k(\mathcal{D}) = \int L_k(\mathcal{D}|\theta)\pi_k(\theta)d\theta$. Given prior model probabilities ${\mathbb{P}(M_0)}$, ${\mathbb{P}(M_1)}$, the posterior odds of $M_1$ and $M_0$ are then
${\mathbb{P}(M_1|\mathcal{D})}/{\mathbb{P}(M_0|\mathcal{D})} = \text{BF}_{10}\times{\mathbb{P}(M_1)}/{\mathbb{P}(M_0)}.$
In this section we show how to formulate such models instantiating a number of relevant statistical hypotheses with the BREASE approach, and provide analytical formulae for the marginal likelihoods. 
For all models considered here the 
likelihood is the same, so we focus the discussion on the formulation of the prior. 

Let us first consider testing the null hypothesis $H_0: \theta_1 = \theta_0$ against the alternative hypothesis $H_1:\theta_1\neq \theta_0$. For $H_1$, we propose using the unconstrained model $M_1$, with the BREASE prior in (\ref{eq:gd-betas}) and equation (\ref{eq:theta1}),
\begin{align}
M_1: (\theta_0, \eta_e, \eta_s) \sim \text{BREASE}(\mu; n), \qquad \theta_1 = (1-\eta_e)\theta_0+\eta_s(1-\theta_0).     \label{eq:m1}
\end{align}
As for the null hypothesis $H_0:\theta_1= \theta_0$,  we  instantiate it with the null model, 
\begin{align}
M_0: \theta_0 \sim \text{Beta}^*(\mu_0, n_0), \qquad \theta_1 = \theta_0.     \label{eq:m0}
\end{align}
One benefit of $M_0$ is that its prior is logically consistent with the marginal distribution of $\theta_0$ under $M_1$, both implying $\theta_0 \sim \text{Beta}^*(\mu_0, n_0)$ \emph{a priori}. Note that the prior (\ref{eq:m0}) emerges naturally from $M_1$ in at least two ways: (i) when postulating that the treatment does not work at all, by setting $\eta_s=\eta_s=0$; or, (ii) by noting that, if the treatment has no effect on average (i.e, the efficacy of the treatment precisely offsets its side effects), one can side-step thinking about $\eta_s$ and $\eta_e$ altogether. In both cases, we borrow the prior of $\theta_0$ from $M_1$, and simply set $\theta_1$ equal to $\theta_0$.
We discuss alternative prior formulations for $H_0$ in Supplement A Section~\ref{app:h0}.

Other relevant hypothesis one may wish to test are that the treatment is beneficial $H_-: \theta_1 < \theta_0$ or that the treatment is harmful $H_+: \theta_1 > \theta_0$, on average. 
A straightforward approach to specify models for such hypotheses is to note that $M_1$ already induces positive probabilities to the events postulated in $H_-$ and $H_+$. Thus, we can borrow this knowledge, already elicited when forming $M_1$, to define the priors $\pi_-$ and $\pi_+$,
\begin{align}
\pi_-(\theta_0, \eta_e, \eta_s) :=    \pi_1(\theta_0, \eta_e, \eta_s | \theta_1 < \theta_0), \quad
\pi_+(\theta_0, \eta_e, \eta_s) :=    \pi_1(\theta_0, \eta_e, \eta_s | \theta_1 > \theta_0). 
\label{eq:m-prior}
\end{align}
The priors $\pi_-$ and $\pi_+$ result in the models $M_-$ and $M_+$, for $H_-$ and $H_+$ respectively. Similarly to $M_0$, one benefit of these models is that the induced priors on $(\theta_0, \eta_e, \eta_s)$ are logically consistent with the beliefs expressed in $M_1$, under the constraints $H_-$ and $H_+$.  The same strategy employed here can be used for interval hypotheses of the type $H_0^\delta: |\theta_1 - \theta_0| \leq \delta$, with $\delta > 0$ (or, more generally, for any event with nonzero probability under $M_1$). Alternative models for $H_-$ and $H_+$, leveraging instead monotonicity constraints, such as $\eta_s=0$, are discussed in Supplement A Section~\ref{app:h-h+}.

In all cases above, 
the marginal likelihood can be obtained using analytical formulae and simple Monte Carlo approximation, facilitating the computation of Bayes factors.

\begin{theorem} The marginal likelihood of the data under $M_0$ is given by a  beta-binomial distribution. Under $M_1$, it is given by a weighted sum of beta functions:
\begin{align}
L_1(\mathcal{D}) &= 
\binom{N_0}{y_0}\binom{N_1}{y_1}
\sum_{j=0}^{y_1}\sum_{k=0}^{N_1-y_1}\binom{y_1}{j}\binom{N_1-y_1}{k}
\times \frac{\text{B}(k+\mu_en_e,j+(1-\mu_e)n_e)}{\text{B}(\mu_en_e,(1-\mu_e)n_e)}  \nonumber  \\
&\qquad\times \frac{\text{B}(y_0+j+k+\mu_0n_0,N-(y_0+j+k)+(1-\mu_0)n_0)}{\text{B}(\mu_0n_0,(1-\mu_0)n_0)}\nonumber \\
&\qquad\times \frac{\text{B}(y_1-j+\mu_sn_s,N_1-y_1-k+(1-\mu_s)n_s)}{\text{B}(\mu_sn_s,(1-\mu_s)n_s)}.
\label{eq:ml1}
\end{align}
Under $M_-$ and $M_+$, it can be obtained from $L_1(\mathcal{D})$ as follows,
\begin{align}
L_-(\mathcal{D}) = L_1(\mathcal{D})\times \frac{\pi_1(\theta_1 < \theta_0|\mathcal{D})}{\pi_1(\theta_1 < \theta_0)},       \qquad
L_+(\mathcal{D}) = L_1(\mathcal{D})\times \frac{\pi_1(\theta_1 > \theta_0|\mathcal{D})}{\pi_1(\theta_1 > \theta_0)}.      \label{eq:ml+}
\end{align}
\end{theorem}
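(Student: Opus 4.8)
The plan is to handle the three marginal likelihoods in turn, each by direct integration of the likelihood against the relevant prior, exploiting the fact that the likelihood is polynomial in the parameters. None of the three requires machinery beyond standard beta integrals plus one elementary rewriting for the truncated models.

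For $M_0$, I would first substitute $\theta_1 = \theta_0$ into the marginal-parameterization likelihood (\ref{eq:marg-lik}). This collapses the two binomial kernels into a single factor $\theta_0^{y_0+y_1}(1-\theta_0)^{N-(y_0+y_1)}$, with $N = N_0+N_1$. Integrating against the $\text{Beta}^*(\mu_0,n_0)$ prior is then a single beta integral, yielding $L_0(\mathcal{D}) = \binom{N_0}{y_0}\binom{N_1}{y_1}\,\text{B}(y_0+y_1+\mu_0 n_0,\, N-(y_0+y_1)+(1-\mu_0)n_0)/\text{B}(\mu_0 n_0,(1-\mu_0)n_0)$, which is exactly the beta-binomial mass (up to the binomial normalizers). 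This is just the statement that a binomial count with a beta-distributed success probability is marginally beta-binomial.

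For $M_1$, I would start from the polynomial form of the likelihood in Theorem~\ref{thm:lik}, equation (\ref{eq:lik}), and integrate it against the jointly independent beta prior (\ref{eq:gd-betas}). Because both the summand and the prior factor across $(\theta_0,\eta_e,\eta_s)$, and integration is linear, I can exchange the integral with the double sum and split each term into three one-dimensional beta integrals. Integrating $\eta_e^{k}(1-\eta_e)^{j}$, $\theta_0^{y_0+j+k}(1-\theta_0)^{N-(y_0+j+k)}$, and $\eta_s^{y_1-j}(1-\eta_s)^{N_1-y_1-k}$ against their respective $\text{Beta}^*$ priors produces precisely the three beta-function ratios in (\ref{eq:ml1}). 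The only real work is index bookkeeping to confirm the beta parameters line up, and this is immediate from the explicit exponents in (\ref{eq:lik}).

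For $M_-$ and $M_+$, I would use that the priors $\pi_-,\pi_+$ in (\ref{eq:m-prior}) are just the unconstrained prior $\pi_1$ restricted to the events $\{\theta_1 < \theta_0\}$ and $\{\theta_1 > \theta_0\}$ and renormalized. Writing $\pi_-(\theta_0,\eta_e,\eta_s) = \pi_1(\theta_0,\eta_e,\eta_s)\,I(\theta_1 < \theta_0)/\pi_1(\theta_1 < \theta_0)$ and substituting into $L_-(\mathcal{D}) = \int L(\mathcal{D}\mid\cdot)\,\pi_-(\cdot)$, the integrand becomes $L(\mathcal{D}\mid\cdot)\,\pi_1(\cdot)\,I(\theta_1 < \theta_0)$ divided by $\pi_1(\theta_1 < \theta_0)$. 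Recognizing $L(\mathcal{D}\mid\cdot)\,\pi_1(\cdot) = L_1(\mathcal{D})\,\pi_1(\cdot\mid\mathcal{D})$ by Bayes' rule, the integral of the indicator against the posterior is exactly $\pi_1(\theta_1 < \theta_0\mid\mathcal{D})$, giving (\ref{eq:ml+}); the case of $L_+$ is identical. The whole argument is elementary once Theorem~\ref{thm:lik} is in hand, so I do not anticipate a genuine obstacle. The one step worth stating carefully is this last identity: the importance-sampling-type rewriting that turns a truncated marginal likelihood into $L_1(\mathcal{D})$ scaled by the ratio of the posterior to prior probability of the constraint event under $M_1$---both factors of which are already computable, the numerator by Monte Carlo over the exact posterior draws of Algorithm~\ref{algo:sampling} and the denominator over prior draws.
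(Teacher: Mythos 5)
Your proposal is correct and takes essentially the same route as the paper's own (much terser) proof: the standard beta-binomial integral for $M_0$, term-by-term integration of the polynomial likelihood (\ref{eq:lik}) against the independent beta prior (\ref{eq:gd-betas}) for $L_1(\mathcal{D})$, and the Bayes'-rule identity applied to the truncated-and-renormalized priors (\ref{eq:m-prior}) for $L_-(\mathcal{D})$ and $L_+(\mathcal{D})$. You have simply filled in the bookkeeping that the paper leaves implicit.
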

\begin{proof}
    The result for $M_0$ is well-known. $ L_1(\mathcal{D})$ in (\ref{eq:ml1}) follows directly from integration of~(\ref{eq:lik}) under the prior~(\ref{eq:gd-betas}). $L_-(\mathcal{D})$ and $L_+(\mathcal{D})$ in (\ref{eq:ml+}) follow from Bayes'~rule. 
\end{proof}

\begin{remark}
The prior and posterior probabilities $\pi_1(\theta_1 < \theta_0)$ and $\pi_1(\theta_1 < \theta_0|\mathcal{D})$ 
can be  approximated using Monte Carlo integration with exact samples, as per Section~\ref{sec:posterior-sampling}.
\end{remark}

\begin{remark}
As per (\ref{eq:ml+}), if one postulates prior model probabilities  $\mathbb{P}(M_-|M_1) = \pi_1(\theta_1 < \theta_0)$ and $\mathbb{P}(M_+|M_1) = \pi_1(\theta_1 > \theta_0)$, the Bayes factor testing $H_0:\theta_1 = \theta_0$ against $H_1:\theta_1 \neq \theta_0$ (using $M_1$) conveniently decomposes into the weighted average of the Bayes factors testing  $H_0$ against $H_-$  (using $M_-$)  and $H_0$  against $H_+$ (using $M_+$)---though, of course, users can postulate prior probabilities for the models $M_-$ and $M_+$ as they wish.     
\end{remark}

\subsection{Extension to covariates}
\label{sec:covariates}

We conclude this section by demonstrating how the BREASE approach can be extended to accommodate discrete covariates. By extending the method in this way, we can address a number of important applications, which include: estimating conditional average treatment effects in randomized experiments; accounting for stratification in randomized experiments, or measured confounding in observational studies; and pooling evidence across multiple trials. We leave extensions to continuous covariates to future~work.

\subsubsection{Likelihood}

Suppose we observe i.i.d. samples $(Y_i,Z_i, X_i),~i \in \{1,\ldots,N\}$, where, as before, $Y_i$ and $Z_i$ denote the binary outcome and treatment indicators for subject $i$ and $X_i$ is a discrete pre-treatment covariate taking values in $\mathcal{X}$. We allow for the possibility of selection into treatment based on $X_i$. Hence, we now assume that randomization of the treatment holds only within strata of $X_i$ (also known as \emph{conditional} ignorability) \mbox{$Y_i(z)\indep Z_i | X_i$}. 

Let $y_{z,x}$ denote the observed death count and  $N_{z,x}$  the corresponding sample size for each stratum $x \in \mathcal{X}$ and study arm $z \in \{0, 1\}$. Further define the total count for stratum $x$ as $N_{x} = N_{0,x} + N_{1,x}$ and the total population size $N = \sum_{x \in \mathcal{X}} N_{x}$. We use boldface to indicate vectors, $\bm{N}=\{N_{z,x}\}_{z\in\{0, 1\}, x\in \mathcal{X}}$ and $\bm{y}=\{y_{z,x}\}_{z\in\{0, 1\}, x\in \mathcal{X}}$.  Finally, let $\mathcal{D}=(\bm{y}, \bm{N})$ denote the full data and $(\bm{\theta},\bm{\eta}, \bm{\delta},\bm{p_X})$ parameters, 
\[
\bm{\theta}=\{\theta_{z,x}\}_{z\in\{0, 1\}, x\in \mathcal{X}}, 
\quad \bm{\eta} = \{\eta_{e,x}, \eta_{s, x}\}_{x\in \mathcal{X}},
\quad \bm{\delta}=\{\delta_x\}_{x\in \mathcal{X}}, 
\quad \bm{p_X}=\{p_x\}_{x\in \mathcal{X}},
\]
where $\bm{\theta}$ and $\bm{\eta}$ collect the risks, efficacy and side effects for each stratum;   $\delta_{x}:= P(Z_i=1|X_i=x)$ denotes the propensity score for each stratum $x$; and $p_{x}:=P(X_i = x)$ denotes the marginal probability of $X_i = x$. 

The full likelihood is then given by (see Supplement A Section \ref{app:covariates} for derivation)
\begin{align*}
L(\mathcal{D}| \bm{\theta}_0, \bm{\eta}, \bm{\delta}, \bm{p_X}) &=
\prod_{x \in \mathcal{X}} \Bigg[
    \binom{N_{0,x}}{y_{0,x}} \binom{N_{1,x}}{y_{1,x}} 
    \sum_{j=0}^{y_{1,x}} \sum_{k=0}^{N_{1,x} - y_{1,x}} 
    \Bigg\{
        \binom{y_{1,x}}{j} \binom{N_{1,x}-y_{1,x}}{k}
        \theta_{0,x}^{y_{0,x} + j + k} \\
&\times (1 - \theta_{0,x})^{N_{x} - (y_{0,x} + j + k)}
        \eta_{e,x}^{k} (1 - \eta_{e,x})^{j} 
        \eta_{s,x}^{y_{1,x} - j} (1 - \eta_{s,x})^{N_{1,x} - y_{1,x} - k}
    \Bigg\}
\Bigg] \\
&\times \prod_{x \in \mathcal{X}} 
\binom{N_{x}}{N_{1,x}} 
\delta_x^{N_{1,x}} (1 - \delta_x)^{N_{0,x}} 
\times \frac{N!}{\prod_{x \in \mathcal{X}} N_{x}!} \prod_{x \in \mathcal{X}} p_x^{N_{x}}.
\end{align*}
The first component above corresponds to the BREASE likelihood  (\ref{eq:lik}) for each stratum $x \in \mathcal{X}$; the second component corresponds to the binomial likelihood for the treatment assignment, again for each stratum $x \in \mathcal{X}$; the final component is the marginal likelihood of $X$, which is a multinomial distribution.

\subsubsection{Priors and posterior sampling}
The likelihood factorizes into three independent components,  corresponding to the BREASE parameters $(\bm{\theta}, \bm{\eta})$, to the propensity score parameters $\bm{\delta}$, and finally to the parameters of the marginal distribution of the observed covariates $\bm{p_X}$. Thus, if the priors for these components are also mutually independent, this independence extends to the posterior, allowing the parameters of each component to be sampled independently. 
We make this assumption going forward in our discussion of prior specification. We propose two priors for $(\bm{\theta}, \bm{\eta})$: (i) an independent BREASE prior for each stratum $x \in \mathcal{X}$; and, (ii) a hierarchical prior that pools information across strata. 

\paragraph{Independent BREASE prior.} The simplest prior for this setup is to assign independent BREASE priors to the within-stratum parameters $(\theta_{0,x}, \eta_{e,x}, \eta_{s,x})$. Given that the strata are also independent in the likelihood, posterior samples can be drawn independently for each stratum using either the exact sampler (Algorithm \ref{algo:sampling}) or the data-augmented Gibbs sampler (Algorithm \ref{algo:gibbs}).

\paragraph{Hierarchical BREASE prior.} One drawback of independent priors is that they prevent information from being shared across strata. For example, learning about the efficacy of a vaccine in males would have no impact on our inferences about its efficacy in females. To overcome this, hierarchical priors can be introduced to partially pool information across different categories of $X_i$. This approach also supports meta-analyses across studies, with $X_i$ representing a study indicator.  A natural hierarchical prior would be
\begin{align*}
\theta_{0,x}\sim \text{Beta}^*(\mu_0,n_0), \quad &\mu_0\sim\text{Beta}^*(\lambda_0,\nu_0), \quad n_0\sim \text{Gamma}(\alpha_0,\beta_0),\\
\eta_{e,x}\sim \text{Beta}^*(\mu_e,n_e), \quad &\mu_e\sim\text{Beta}^*(\lambda_e,\nu_e), \quad n_e\sim \text{Gamma}(\alpha_e,\beta_e),\\
\eta_{s,x}\sim \text{Beta}^*(\mu_s,n_s), \quad &\mu_s\sim\text{Beta}^*(\lambda_s,\nu_s), \quad n_s\sim \text{Gamma}(\alpha_s,\beta_s).
\end{align*}
Hence, we specify a BREASE($\bm{\lambda},\bm{\nu}$) prior on the hierarchical BREASE parameters $(\mu_0,\mu_e,\mu_s)$ and Gamma priors on the random effects precision parameters $(n_0,n_e,n_s)$.  
Posterior sampling can proceed in two stages: (i) conditional on the hierarchical parameters, an independent BREASE update for the BREASE parameters; and (ii) conditional on the BREASE parameters, a Metropolis-Hastings update for the hierarchical parameters. We leave for future work the study of other priors and sampling algorithms.

\paragraph{Population effects.} The two procedures described above give us posterior samples of the within-stratum parameters $(\theta_{0,x}, \eta_{e,x}, \eta_{s,x})$, which  allow us to obtain posterior samples of conditional treatment effects, such as the conditional risk ratio, $\tau_x := \theta_{1,x}/\theta_{0,x}$, as well as any contrasts of such effects (e.g., $\tau_x - \tau_{x'}$). To recover population (marginal) effects, we need to average over the marginal distribution of $X$, e.g., $\theta_0 = \sum_{x \in \mathcal{X}} \theta_{0,x} p_x$ and $\theta_1 = \sum_{x \in \mathcal{X}} \theta_{1,x} p_x$. Since  $\bm{p_X}$ and $(\bm{\theta}, \bm{\eta})$ are independent \emph{a posteriori}, this averaging can be done at any point in the analysis by simply generating independent posterior samples of $\bm{p_X}$ (e.g, using a conjugate Dirichlet prior for $\bm{p_X}$).

\section{Empirical Examples}
\label{sec:results}

We now demonstrate the utility of our approach in three empirical examples. 
We show how the BREASE framework can be used to facilitate Bayesian estimation, hypothesis testing, and sensitivity analysis of the results of binary experiments. Concretely, the examples illustrate how our proposal can: (i) help analysts distinguish robust from fragile findings; (ii)~clarify what one needs to believe in order to claim that a treatment is effective; and (iii)~reconcile disparate results obtained from different methods.  See Supplement A Section \ref{sec:testing-ib-lt} for  details of the calculation of Bayes factors for the IB and LT approaches.

\subsection{The effect of aspirin on fatal myocardial infarction}
\label{sec:aspirin}

Cardiovascular disease is the leading cause of death in the United States, responsible for more than one in four deaths \citep{aspirin-recommendation}. The Physicians' Health Study (PHS), a large-scale, randomized, placebo-controlled trial conducted in the 1980s, was designed in part to investigate whether low-dose aspirin reduces the risk of cardiovascular mortality \citep{aspirin}. This landmark study reported significant reductions in both fatal and nonfatal myocardial infarctions in the treatment group, findings that played a crucial role in the widespread adoption of aspirin for heart attack prevention. Here, we revisit the aspirin component of the PHS, applying the BREASE framework to assess the sensitivity of its results to prior specification.

During the study, $y_0=26$ out of $N_0=$ 11,034 subjects in the placebo group experienced fatal myocardial infarction compared to $y_1=10$ out of $N_1=$ 11,037 prescribed aspirin. Using maximum likelihood estimation, the estimated risk ratio $\theta_1/\theta_0$ is 0.38, with 95\% confidence interval (based on inverting Fisher's exact test) $\text{CI}(95\%)=[0.17, 0.82]$. Consequently, we reject the null hypothesis of zero effect, $H_0:\theta_1=\theta_0$, with $p$-value $0.008$. Results based on asymptotic Wald and Pearson tests are nearly identical. Hence, a frequentist would confidently conclude that low-dose aspirin significantly reduces cardiovascular mortality in this population.

Bayesian estimation using \emph{default} priors under the alternative hypothesis (i.e, with a  prior that gives zero probability to the null hypothesis of zero effect) yields qualitatively similar, though more conservative answers. The $\text{BREASE}(1/2, \mu, \mu;  2, 1, 1)$ prior with $\mu=0.3$ yields a posterior median of the risk ratio of 0.44 with a wider 95\% credible interval of $\text{CrI}(95\%)=[0.2, 0.96]$. Results for the default IB and LT priors are qualitatively similar, though less conservative: the LT$(0,0;1,\sigma_\psi)$ with $\sigma_\psi=1$ results in a posterior median of 0.48 and $\text{CrI}(95\%)=[0.25, 0.87]$; the IB$(a,a;a,a)$ with $a=1$ returns posterior median 0.4 and $\text{CrI}(95\%)=[0.18, 0.79]$. But how sensitive are these results to the prior?

Varying the prior hyperparameter $\mu$ of the default BREASE prior (keeping prior sample sizes fixed at $n_e = n_s = 1$) shows that the results are indeed very sensitive to the prior. Credible intervals include the null of no effect as soon as $\mu \leq 0.2$. That is, unless \emph{a priori} we weakly expect efficacy or side-effects to be about 20\% or more, credible intervals would not exclude the null hypothesis of zero effect. This sensitivity also shows up, though it is less apparent, with the IB and LT parameterizations. For the LT prior, this happens when $\sigma_\psi \leq 0.4$. However, notice how the variance of the log odds ratio is harder to be directly interpreted than $\mu$.
For the IB, this happens only when $a \geq 17$; this prior specifies 17 deaths in the control and treatment groups, which is on par with the number of deaths observed in the data. Also notice that, in this example, inferences under an independent prior are less conservative than those under dependent priors. This is to be expected, because the LT and BREASE priors shrink estimates toward the null of no effect whereas the IB does not.

One may also be interested in performing a Bayesian hypothesis test based on the Bayes factor, which assigns nonzero prior probability to $H_0$. As we will see, prior sensitivity is even more pronounced in this case. Here we focus on the exact null, but we note that researchers can also specify an interval null hypothesis, such as $|\theta_1-\theta_0| < \delta$, as per the discussion in Section~\ref{sec:gd-testing}. Perhaps surprisingly, a test based on the IB approach yields a Bayes factor $\text{BF}_{01} = 20.27$, now suggesting that the data provide strong evidence \emph{in favor} of $H_0$. On the other hand, the Bayes factor under the LT approach is $\text{BF}_{10} = 5.24$, which suggests  moderate evidence in favor of $H_1:\theta_1\neq\theta_0$. Finally, the default BREASE prior results in $\text{BF}_{10}=1.2$ providing essentially little evidence in favor of one hypothesis or the other. Hence, when considering Bayes factors, unlike in the previous case, the IB prior results in more conservative inferences compared to the BREASE and LT priors. This occurs, however, for the same reason: under $H_1$, the IB prior assigns a substantial amount of mass to unreasonably large effect sizes. 

How can we make sense of these disparate results? One benefit of the BREASE approach is that it allows one to  clearly encode prior assumptions in terms of  the expected efficacy and side effects of aspirin, and to easily examine how sensitive the BF is to those assumptions, over the whole range possible values. For example, starting with $\mu_s$, aspirin is an over-the-counter medicine, with ample usage, and it would thus be \emph{unreasonable} to expect that aspirin would \emph{cause} myocardial infarction in a large fraction of otherwise healthy patients. Figure~\ref{fig:aspirin-side} inspects how the Bayes factor is affected as we vary the prior expectation of side effects, ranging from 0.01\% (reasonable) to 50\% (unreasonable), while still keeping relatively vague priors on the baseline risk and efficacy. The dashed red, orange, and blue lines denote (slightly modified) Jeffreys' thresholds for weak $(1\leq \text{BF}_{10} \leq 3)$,  
moderate $(3\leq \text{BF}_{10} \leq 10)$, and strong  $(\text{BF}_{10} \geq 10)$  evidence against $H_0$, respectively \citep{jeffreys1961,kass1995}. Indeed, as the plot shows, the results are sensitive to the choice of $\mu_s$. Setting the expected value of side effects to 1\% results in $\text{BF}_{10}=13.45$, yielding strong evidence in favor of $H_1$, while setting it to 50\% results in $\text{BF}_{01}=2.66$, yielding weak evidence in favor of $H_0$.

\begin{figure}[t]
\begin{subfigure}[t]{0.5\linewidth}
    \centering
    \includegraphics[scale = .55]{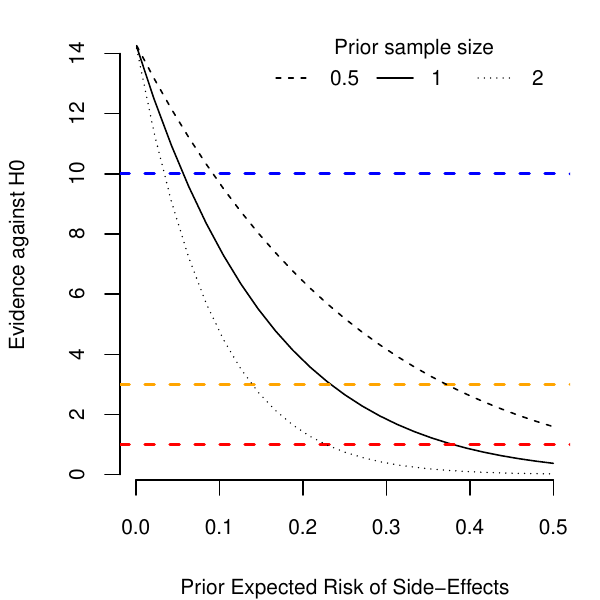}
    \caption{Sensitivity plot varying $\mu_s$ and $n_s$.}
    \label{fig:aspirin-side}
\end{subfigure}%
\begin{subfigure}[t]{0.5\linewidth}
    \centering
    \includegraphics[scale = .55]{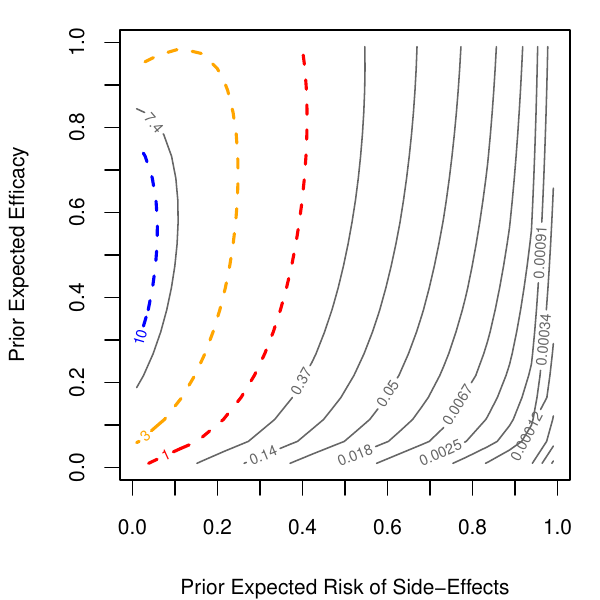}
    \caption{Sensitivity contours varying $\mu_s$ and $\mu_e$.}
    \label{fig:aspirin-contour}
\end{subfigure}
\caption{Sensitivity analysis of $\text{BF}_{10}$ for the aspirin trial.}
\end{figure}

We now conduct a sensitivity analysis with respect to both hyperparameters simultaneously. 
Figure~\ref{fig:aspirin-contour} shows the contour lines of $\text{BF}_{10}$ as a function of  $(\mu_e,\mu_s)\in(0,1)^2$ over their full range of possible values, while keeping $n_e=n_s=1$ fixed.  
Overall, only when (i)~side effects are expected to be small ($< 1\%$), and (ii)~the efficacy is expected to be relatively large (between 30\% and 70\%), does the Bayes factor provide strong evidence against the null of no effect. For all other combinations of prior hyperparameters, the evidence is either moderate, weak, or favors the null. In this light, the results of the trial are ambiguous, and the conclusion that aspirin is effective for primary prevention of fatal heart attack strongly depends on the prior. Note that this need not always be the case, as we show in our reanalysis of the Pfizer-BioNTech COVID-19 vaccine trial. 

\paragraph{Combining data from multiple trials.} Following the PHS, numerous subsequent trials in different study populations have later found mixed evidence for a reduction in cardiovascular events due to aspirin, along with increased risk of major hemorrhage \citep{aspirin-women,aspirin-arrive,aspirin-ascend} and, in older age groups, increased all-cause mortality \citep{aspirin-aspree}. Consequently,  
several organizations recommended against aspirin therapy for primary prevention of cardiovascular disease in elderly patients \citep{aspirin-acc,aspirin-recommendation}. 
In light of these findings, we now demonstrate how to pool evidence across multiple trials using the BREASE approach. Specifically, we focus on the risk of myocardial infarction (both fatal and non fatal), combining data from thirteen trials as analyzed in \citet{zheng2019association},  encompassing a total of 161,680 participants. 

Starting with a \emph{complete pooling} analysis, the default BREASE prior yields a posterior median for the risk ratio of 0.90, with CrI(95\%)=[0.84, 0.97]. The Bayes factor is 2.43, indicating only weak evidence against the null hypothesis.  Despite the large sample size, results are still very sensitive to the prior. For example, the 95\% credible interval includes the null of 1 as soon as $\mu_s<0.1$.  Next we apply a hierarchical BREASE prior, as discussed in Section~\ref{sec:covariates}, to \emph{partially pool} information across studies. We set a BREASE($\lambda$, $\nu$) prior on the hierarchical proportions $(\mu_0,\mu_e,\mu_s)$, with $\lambda = (.5, .5, .5)$, $\nu=(10, 10,10)$, and independent Gamma(10,.1) priors on $(n_0,n_e,n_s)$. As Table~\ref{tab:aspirin-meta-analysis} of Supplement~A shows, there is considerable effect heterogeneity across trials. The posterior median for the average effect is 0.9, with CrI(95\%)=[0.78, 1.13].

\subsection{The Pfizer-BioNTech COVID-19 vaccine trial}
\label{sec:covid}

We now  reexamine the results of the Pfizer-BioNTech mRNA COVID-19 vaccine study \citep{covid-vaccine}.
The experiment was a global multi-phase randomized placebo-controlled trial designed, in part, to evaluate the efficacy of the BNT162b2 vaccine candidate in preventing COVID-19. Vaccine development and evaluation were carried out in rapid response to the emerging SARS-CoV-2 pandemic. The results of the trial were definitive and precipitated the U.S. Food and Drug Administration's emergency use authorization for widespread dissemination of the vaccine \citep{fda-eua}. 

During the study, $y_1 = 9$ out of $N_1=$ 19,965 subjects  contracted COVID-19 subsequent to the second dose of the vaccine, while there were $y_0=169$  cases out of $N_0=$  20,172 subjects receiving placebo injections. In their paper, \citeauthor{covid-vaccine} adopted a Bayesian approach,  focusing particularly on evaluating the vaccine efficacy (VE), defined in the study as  the estimand $\text{VE}:=1-\theta_1/\theta_0$.  The efficacy of the vaccine was estimated at 0.95, with credible interval $\text{CrI}(95\%)= [0.90, 0.97]$. Frequentist estimates are similar, with a point estimate of 0.95, confidence interval $\text{CI}(95\%)= [0.90, 0.97]$, and a $p$-value for testing the null hypothesis of zero effect of the order $6\times10^{-33}$.

\citet{covid-vaccine} estimate VE as the efficacy of the vaccine, but, as per Section~\ref{sec:partial-mono}, this only has the counterfactual interpretation of efficacy (i.e., $\eta_e=1-\theta_1/\theta_0$) under the assumption of monotonicity. Using the BREASE approach we can easily encode the monotonicity assumption by setting $\eta_s=0$ and then proceed with estimation. The default BREASE prior, with the monotonicity constraint, results in posterior median and 95\% credible interval for $\eta_e = 1-\theta_1/\theta_0$ that are essentially the same as the previous results, namely, 0.94  and $\text{CrI}(95\%)= [0.90, 0.97]$. In the absence of the monotonicity assumption, we have that VE is in fact a lower bound on $\eta_e$. Again using the default BREASE prior, results are virtually unchanged, with posterior median and 95\% credible interval for VE of 0.94 and $\text{CrI}(95\%)= [0.90, 0.97]$. Conclusions from the IB(1,1;1,1) prior are practically equivalent: the posterior median of VE is 0.94 with $\text{CrI}(95\%)~=~[0.90,~0.97]$. Under the LT(0,0;1,1) prior, however, we obtain posterior median 0.91 and $\text{CrI}(95\%)=[0.86, 0.95]$, owing to the fact that it not only shrinks $\theta_0$ and $\theta_1$ toward each other, but also toward 0.5---see Figure 3 of \citet{dablander2022}.

\begin{figure}[t]
\begin{subfigure}[t]{0.5\linewidth}
    \centering
    \includegraphics[scale = .55]{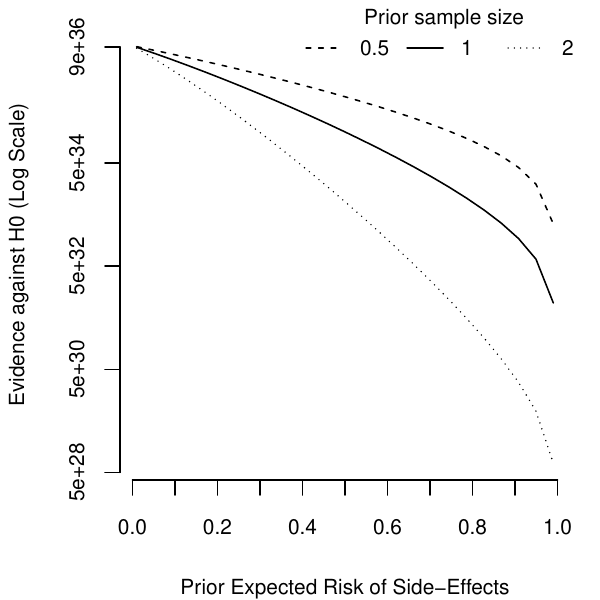}
    \caption{Sensitivity plot varying $\mu_s$ and $n_s$.}
    \label{fig:covid-side}
\end{subfigure}%
\begin{subfigure}[t]{0.5\linewidth}
    \centering
    \includegraphics[scale = .55]{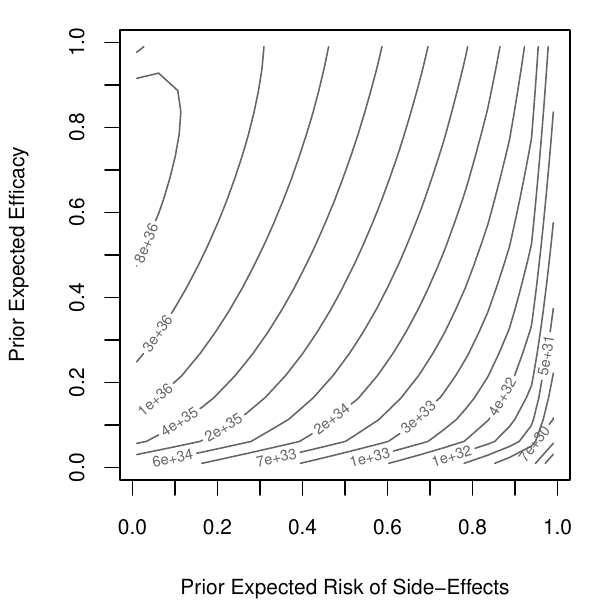}
    \caption{Sensitivity contours varying $\mu_s$ and $\mu_e$.}
    \label{fig:covid-contour}
\end{subfigure}
\caption{Sensitivity analysis of $\text{BF}_{10}$ for the COVID-19 vaccine trial.}
\label{fig:covid-sens}
\end{figure}

Turning to hypothesis testing, differently from the aspirin study, here all approaches point to the same direction, with overwhelming evidence against $H_0$. The Bayes factors against the null hypothesis of zero effect are $9\times 10^{33}$, $5\times 10^{34}$ and $4\times 10^{35}$ for the IB, LT and BREASE default priors, respectively. Further, sensitivity analyses reveal the Bayes factor is in fact robust to variations in the hyperparameters across the whole range of prior expected efficacy and side effects of the vaccine, i.e., $(\mu_e,\mu_s) \in(0,1)^2$. Figure~\ref{fig:covid-sens} replicates the same sensitivity plots of the aspirin study for the COVID-19 trial. Notice that, in all scenarios, the posterior probability of the null hypothesis is essentially zero even if we posit equal prior odds for $H_0$ and $H_1$.

\paragraph{Conditional vaccine efficacy.} 
In addition to overall VE for their sample, \citet{covid-vaccine} report estimates of VE across subgroups stratified by age, sex, race, ethnicity, and country. In many subgroups, sample sizes were too small to establish efficacy of the vaccine at the 30\% threshold prespecified by \citet{covid-vaccine}. For example, in the oldest age group of individuals 75 years or older---who face the greatest risk of death from COVID-19---the 95\% CrI for VE reported by \citet{covid-vaccine} ranges from -13.1\% to 100.0\%, which allows for the possibility that vaccination increases the risk of infection. Similarly, an age-stratified analysis using the independent BREASE prior discussed in Section \ref{sec:covariates} with our choice of default hyperparameters  yields a 95\% credible interval ranging from -10.8\% to 99.4\% for this age group. 

The situation improves if we allow for some pooling of information across age groups using a hierarchical prior, as described in Section \ref{sec:covariates}. Here we use the same hyperparameters as discussed in the aspirin example. Table \ref{tab:covariates} in Supplement A reports estimates of the Pfizer-BioNTech COVID-19 vaccine efficacy stratified by age, race, and country using the independent and hierarchical BREASE priors. With partial pooling, VE in the 75 and older age group now ranges from 45.0\% to 97\%, surpassing the 30\% threshold.

\subsection{Null results in the \textit{New England Journal of Medicine}}
\label{sec:nejm}

\begin{figure}[t]
\begin{subfigure}[t]{0.5\linewidth}
    \centering
    \includegraphics
    [scale = .33]
    {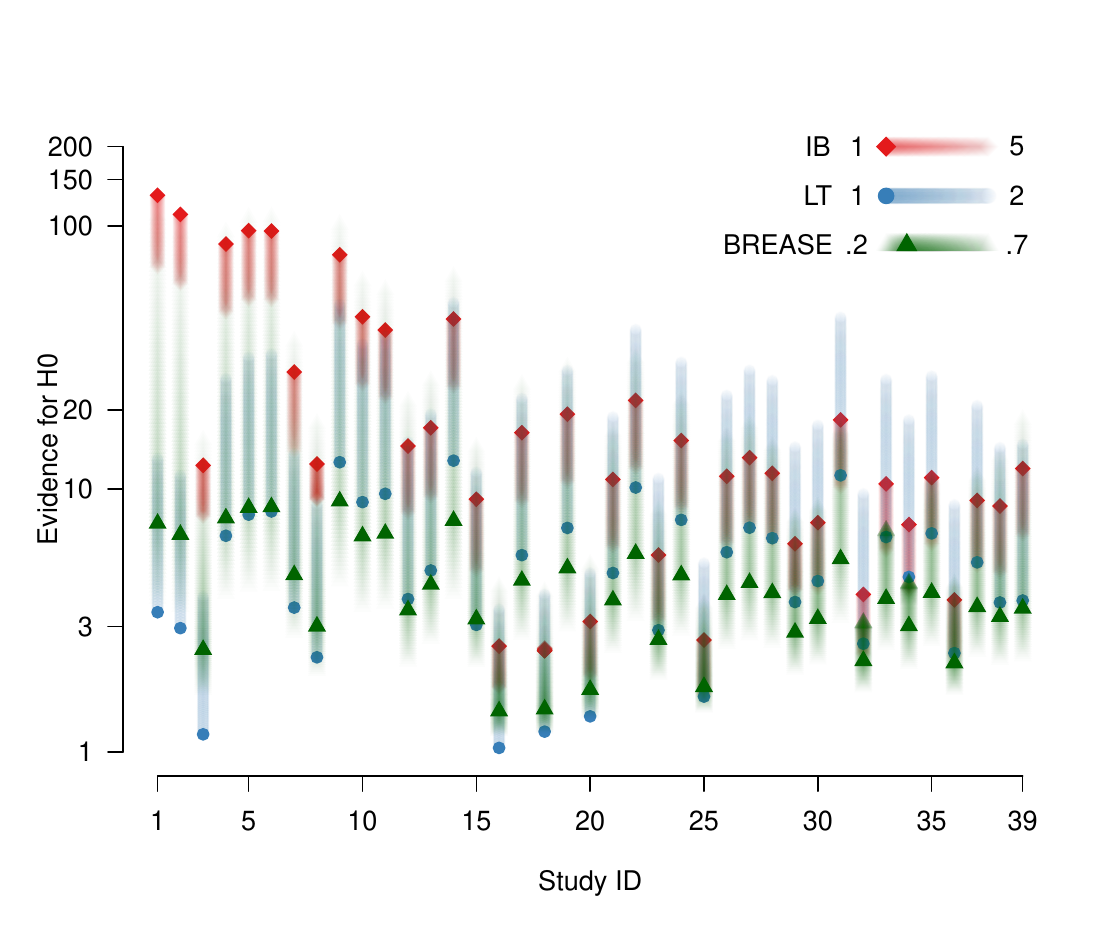}
    \caption{Bayes factors.}
    \label{fig:nejm-bf}
\end{subfigure}%
\begin{subfigure}[t]{0.5\linewidth}
    \centering
    \includegraphics
    [scale = .33]
    {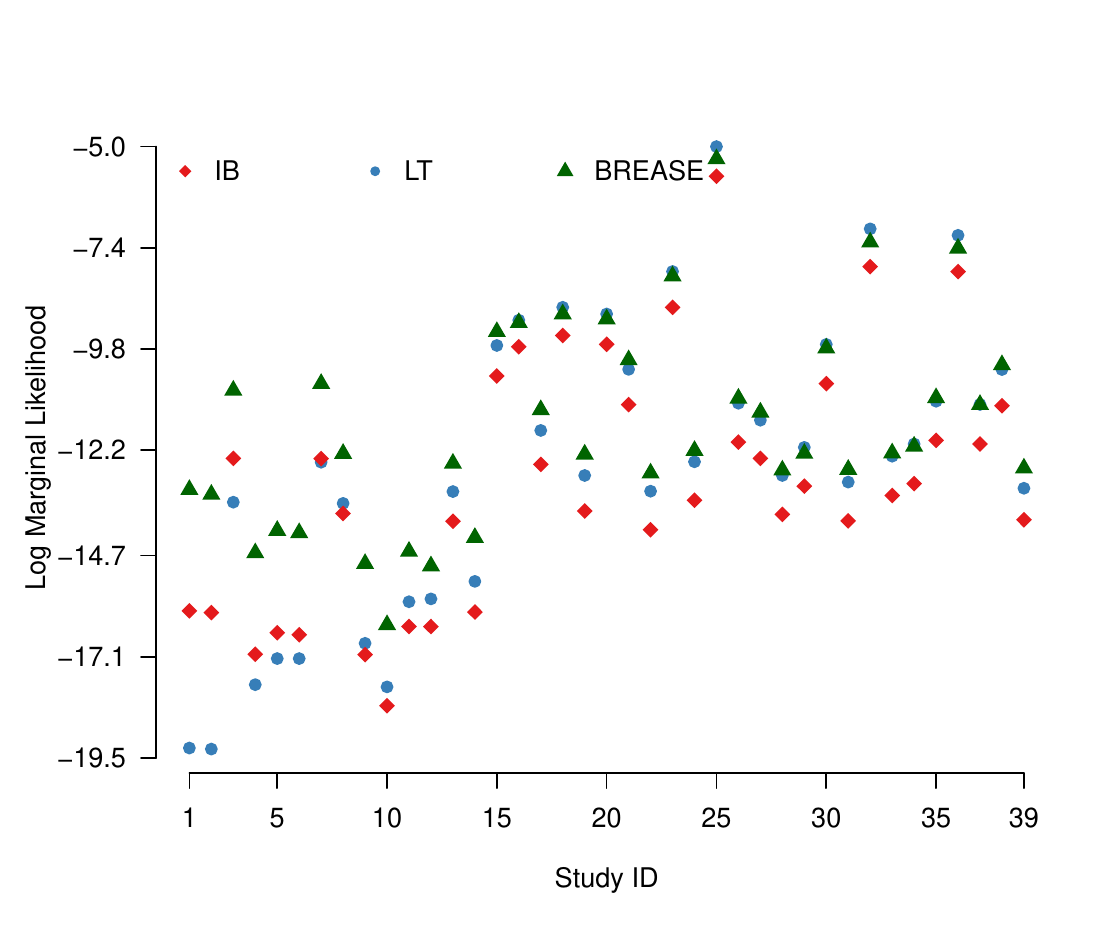}
    \caption{Log marginal likelihoods.}
    \label{fig:nejm-lml}
\end{subfigure}
\caption{Comparisons of log marginal likelihoods and Bayes factors across 39 NEJM studies, for the IB, LT and BREASE priors.}
\label{fig:nejm}
\end{figure}

\citet{dablander2022} conducted a Bayesian reanalysis of 39 binary experiments reporting null results (claiming absence or nonsignificance of an effect of treatment) in the \textit{New England Journal of Medicine} (NEJM). They were particularly concerned with distinguishing between \textit{absence of evidence} and \textit{evidence of absence} of an effect when outcomes in the treatment and control groups are similar. Finding that Bayes factors calculated using the IB approach often strongly favored the null hypothesis (leaning heavily toward \textit{evidence of absence}) whereas LT Bayes factors were generally equivocal, \citeauthor{dablander2022} concluded that the LT approach should be preferred for Bayesian tests for an equality of proportions. 
In our final empirical example, we expand their reanalysis to include the BREASE approach, and we show how it can easily 
address the concerns of \citeauthor{dablander2022}
while also providing a better fit to the data in most cases.

Figure~\ref{fig:nejm-bf} contrasts the Bayes factors in favor of the null hypothesis using: (i) the $\text{IB}(a,a;a,a)$ prior varying $a\in[1,5]$ (red diamonds); (ii) the $\text{LT}(0,0;1,\sigma_{\psi})$ prior varying $\sigma_{\psi} \in [1,2]$ (blue circles); and, the $\text{BREASE}(1/2,\mu,\mu;2, 1,1)$ prior varying $\mu \in [.2, .7]$ (green triangles). The solid color stands for the proposed default values of each method, namely $a=1$ for the IB, $\sigma_{\psi}=1$ for the LT and $\mu=.3$ for the BREASE. Note that the  Bayes factors of the BREASE and LT default priors (solid triangle and circles) are similar across studies. Moreover,
\citet{dablander2022} noted that, in many examples, the Bayes factors of the IB and LT approaches could not 
be easily reconciled,
even when reasonably varying their hyperparameters. The BREASE approach shows that this behavior is a mere artifact of those parameterizations. Indeed, for {all} studies, the BREASE prior easily interpolates between the two regimes, thus solving the apparent contradiction between the results of the LT and IB approaches. Finally, Figure~\ref{fig:nejm-lml} compares the predictive performance of the default IB, LT, and BREASE priors via the log marginal likelihood. The BREASE prior exhibits superior performance in \emph{every study} when compared to the IB prior, and in more than 74\% of the studies when compared to the LT prior. Thus, in this setting, our default prior provides both a more sensible parameterization and a better fit to the data.

\section{Conclusion}
\label{sec:discussion}

We have introduced the BREASE framework for the Bayesian analysis of randomized controlled trials with a binary treatment and outcome.  Framing the problem in the language of potential outcomes, we reparameterized the likelihood in terms of clinically meaningful quantities---the baseline risk,  efficacy, and risk of adverse side effects of the treatment---and proposed a simple, yet flexible  jointly independent beta prior distribution on these parameters. We provided algorithms for exact posterior sampling, an accurate and fast data-augmented Gibbs sampler, as well as analytical formulae for marginal likelihoods, Bayes factors, and other quantities. Finally, we showed with three empirical examples how our proposal facilitates estimation, hypothesis testing, and sensitivity analysis of treatment effects in binary experiments.

Many interesting extensions of our framework are possible. One interesting direction is to incorporate continous covariates in the model. For example, one possibility is to model BREASE parameters as functions of covariates on the logit scale, and use a Gibbs sampler that alternates between our data-augmentation algorithm for the BREASE parameters, and a specialized algorithm for logistic models, such as the P\'{o}lya-Gamma augmentation of \citet{polson2013bayesian}. Another important avenue for future work is handling noncompliance in clinical trials. In Section~\ref{app:iv} of Supplement~A, we lay the groundwork for such extension, and  show how the joint distribution of compliance and response types is naturally amenable to the BREASE parameterization and prior.

Beyond binary experiments, we may also consider trials with nonbinary outcomes or more than two arms. For example, with ordinal outcomes, one option is to replace $\eta_e$ and $\eta_s$ with the probability that treatment improves the outcome by one step and worsens the outcome by one step, respectively. 
In trials with more than two arms, we may again define the baseline risk $\theta_0$ in the control or standard of care group. Then, for each treatment arm $z$, we can introduce treatment-specific efficacy and side effect parameters, $\eta_e^z$ and $\eta_s^z$, respectively, and again place independent beta priors on each parameter to yield a tractable mixture posterior. If the treatments share some feature---e.g., they derive from a common family of therapeutics---and we have reason to believe that treatment effects are similar, we could instead place hierarchical priors on $\eta_e^z$ and $\eta_s^z$ to partially pool information across treatment arms, as described in Section~\ref{sec:covariates}.

Finally, while we have demonstrated how to apply our framework to pool evidence across multiple trials, many interesting questions remain open in that area. For example, under certain assumptions, data from multiple sites may allow one to point identify, or at least narrow the bounds on the fraction of people who benefit from or are harmed by the intervention. These counterfactual probabilities play an important role in public health and legal contexts. In a similar vein,  another possibility is to study our framework in the context of crossover trials. Under certain assumptions of temporal homogeneity, the efficacy and side effects of the treatment may again be identifiable, making our parameterization and prior proposal natural candidates for the study of treatment effects in such designs.

\paragraph{Acknowledgments. }
Irons’s research was supported by a Shanahan Endowment Fellowship, a Eunice Kennedy Shriver NICHD training grant, T32 HD101442-01, to the University of Washington Center for Studies in Demography \& Ecology, the Florence Nightingale Bicentenary Fellowship in Computational Statistics and Machine Learning from the University of Oxford Department of Statistics and the Leverhulme Centre for Demographic Science and the Leverhulme Trust (Grant RC-2018-003). Cinelli's research was supported in part by the Royalty Research Fund at the University of Washington, and by the National Science Foundation under Grant No. MMS-2417955.

\bibliographystyle{ba}
\bibliography{ref}

\appendix 

\section{Supplement}

\subsection{Implied prior on \texorpdfstring{$\theta_1$}{theta1}}
\label{app:theta1-dist}

Let the prior  of $(\theta_0, \eta_e, \eta_s)$ consist of  independent beta distributions with PDFs denoted by $\theta_0 \sim \pi_{\theta_0}(\theta_0)$, $\eta_s \sim \pi_{s}(\eta_s)$, and $\eta_e \sim \pi_{e}(\eta_e)$.  By the law of total probability, the conditional distribution of $\theta_1$ given $\theta_0$ can be written as
\begin{align}
    \pi(\theta_1\mid\theta_0) = \int_{0}^{1} \pi(\theta_1\mid\theta_0, \eta_e)\pi_{e}(\eta_e) d \eta_e, \label{eq:int}
\end{align}
where here we make use of the fact that $\eta_e$ and $\theta_0$ are \textit{a priori} independent. Note that, conditional on $\theta_0$ and $\eta_e$, $\theta_1$ is simply a linear transformation of $\eta_s$, namely $\theta_1 = \theta_0(1-\eta_e) + (1-\theta_0)\eta_s$. We can thus  write the density of $\theta_1$ in terms of the density of $\eta_s$ as
$$
\pi(\theta_1\mid\theta_0, \eta_e)= \left(\frac{1}{1-\theta_0}\right)\pi_{s}\left( \frac{\theta_1 - \theta_0(1-\eta_e)}{1-\theta_0}\right),
$$
where we make use of the fact that $\displaystyle \frac{d\eta_s}{d\theta_1} = \frac{1}{1-\theta_0}$. Substituting this back into Eq.~\ref{eq:int}, we have the following integral
\begin{align}
    \pi(\theta_1\mid\theta_0) = \left(\frac{1}{1-\theta_0}\right)\int_{0}^{1} \pi_{s}\left( \frac{\theta_1 - \theta_0(1-\eta_e)}{1-\theta_0}\right)\pi_{e}(\eta_e) d \eta_e .
\end{align}
For the special case where $\eta_e$ is uniformly distributed, $\pi_e(\eta_e)=1$, the integral simplifies,
\begin{align}
    \pi(\theta_1\mid\theta_0) &= \left(\frac{1}{1-\theta_0}\right)\int_{0}^{1} \pi_{s}\left( \frac{\theta_1 - \theta_0(1-\eta_e)}{1-\theta_0}\right) d \eta_e \\
    &=\left(\frac{1}{\theta_0}\right)\int_{\frac{\theta_1-\theta_0}{1-\theta_0}}^{\frac{\theta_1}{1-\theta_0}} \pi_{s}\left( \eta_s \right) d \eta_s \\
    &= \left(\frac{1}{\theta_0}\right)\left(F_{s}\left( \frac{\theta_1}{1-\theta_0}\right) - F_{s}\left( \frac{\theta_1-\theta_0}{1-\theta_0}\right)\right),
\end{align}
where the second equality follows from change of variables,  noting $d\eta_e = (1-\theta_0)/\theta_0 d\eta_s$. Here $F_s(\cdot)$ denotes the CDF of the beta distribution, which is given by the the regularized incomplete beta function. 

For special cases the expression above simplifies. For instance, when $\eta_s$ is also uniformly distributed, we have that $F_s(x)=x$, and we obtain a simple closed form expression for the conditional density. Specifically, for $\theta_0 \leq 1/2$, 
\begin{align}
    \pi(\theta_1\mid \theta_0) = 
\begin{cases}
    \dfrac{\theta_1}{\theta_0(1-\theta_0)} & \text{if } 0 \leq \theta_1 < \theta_0, \\[10pt]
    \dfrac{1}{1-\theta_0} & \text{if } \theta_0 \leq \theta_1 < 1-\theta_0, \\[10pt]
    \dfrac{1-\theta_1}{\theta_0(1-\theta_0)} & \text{if } 1-\theta_0 \leq \theta_1 \leq 1,
\end{cases}
\end{align}
and zero, otherwise. Analogously, for $\theta_0 \geq 1/2$, 
\begin{align}
    \pi(\theta_1\mid \theta_0) = 
\begin{cases}
    \dfrac{\theta_1}{\theta_0(1-\theta_0)} & \text{if } 0 \leq \theta_1 < 1-\theta_0, \\[10pt]
    \dfrac{1}{\theta_0} & \text{if } 1-\theta_0 \leq \theta_1 < \theta_0, \\[10pt]
    \dfrac{1-\theta_1}{\theta_0(1-\theta_0)} & \text{if } \theta_0 \leq \theta_1 \leq 1,
\end{cases}
\end{align}
and zero, otherwise. 
Notice this is a piece-wise linear function of $\theta_1$. Remarkably,  however, integrating each region over $\theta_0$ results in the following marginal distribution of $\pi(\theta_1)$, 
$$
\pi(\theta_1) = 2(-\theta_1\log \theta_1 - (1-\theta_1)\log(1-\theta_1)),
$$
for $\theta_1\in [0,1]$, and zero otherwise, which is twice the entropy of the $\text{Bernoulli}(\theta_1)$ distribution.

More generally, the distribution of linear combinations of beta random variables was studied in \citet{pham-gia1998} and is given in terms of Appell's first hypergeometric function $F_1$, which is an infinite series in two variables:
\begin{equation}
F_1(x,y;a;b_1,b_2;c) = \sum_{m_1=0}^\infty\sum_{m_2=0}^\infty \frac{\Gamma(a+m_1+m_2)\Gamma(b_1+m_1)\Gamma(b_2+m_2)\Gamma(c)}{\Gamma(a)\Gamma(b_1)\Gamma(b_2)\Gamma(c+m_1+m_2)}\frac{x^{m_1}}{m_1!}\frac{y^{m_2}}{m_2!}.
\label{eq:appell-sum}
\end{equation}
Appell's function also has an integral representation given by
\begin{equation}
F_1(x,y;a;b_1,b_2;c) = \text{B}(a,c-a)^{-1}\int_0^1 u^{a-1}(1-u)^{c-a-1}(1-ux)^{-b_1}(1-uy)^{-b_2} du.
\label{eq:appell-int}
\end{equation}
Applying the results of \citet{pham-gia1998} to our setup, the prior on $\theta_1$ conditional on $\theta_0$ induced by the BREASE prior can be obtained as the following piecewise function: (i) for $\theta_0\le 1/2$, we have

\begin{align}
\pi(\theta_1|\theta_0) &= I(0\le \theta_1\le \theta_0) \nonumber \\
&\qquad\times
\frac{\theta_1^{(1-\mu_e)n_e+\mu_sn_s-1}(\theta_0-\theta_1)^{\mu_en_e-1}\text{B}(\mu_sn_s,(1-\mu_e)n_e)}{\theta_0^{n_e-1}(1-\theta_0)^{\mu_sn_s}\text{B}(\mu_sn_s,(1-\mu_s)n_s)\text{B}((1-\mu_e)n_e,\mu_en_e)} \nonumber\\
&\qquad\times
F_1\left(\frac{-\theta_1}{\theta_0-\theta_1},\frac{\theta_1}{1-\theta_0};\mu_sn_s;1-\mu_en_e,1-(1-\mu_s)n_s;(1-\mu_e)n_e+\mu_sn_s\right) \nonumber \\
&+ I(\theta_0\le \theta_1\le 1-\theta_0) \nonumber \\
&\qquad\times \frac{(\theta_1-\theta_0)^{\mu_sn_s-1}(1-\theta_1)^{(1-\mu_s)n_s-1}}{(1-\theta_0)^{n_s-1}\text{B}(\mu_sn_s,(1-\mu_s)n_s)} \nonumber\\
&\qquad\times F_1\left(\frac{-\theta_0}{\theta_1-\theta_0},\frac{\theta_0}{1-\theta_1};\mu_en_e;1-\mu_sn_s,1-(1-\mu_s)n_s;n_e\right) \nonumber\\
&+ I(1-\theta_0\le \theta_1\le 1) \nonumber \\
&\qquad\times \frac{(1-\theta_1)^{\mu_en_e+(1-\mu_s)n_s-1}(\theta_1-\theta_0)^{\mu_sn_s-1}\text{B}(\mu_en_e,(1-\mu_s)n_s)}{\theta_0^{\mu_en_e}(1-\theta_0)^{n_s-1}\text{B}(\mu_sn_s,(1-\mu_s)n_s)\text{B}((1-\mu_e)n_e,\mu_en_e)}\nonumber\\
&\qquad\times F_1\left(\frac{1-\theta_1}{\theta_0},\frac{\theta_1-1}{\theta_1-\theta_0};\mu_en_e;1-(1-\mu_e)n_e,1-\mu_sn_s;\mu_en_e+(1-\mu_s)n_s\right).
\end{align}
Similarly, (ii) for $\theta_0\ge 1/2$, we have
\begin{align}
\pi(\theta_1|\theta_0) &= I(0\le \theta_1\le 1-\theta_0) \nonumber \\
&\qquad\times
\frac{\theta_1^{(1-\mu_e)n_e+\mu_sn_s-1}(1-\theta_0-\theta_1)^{(1-\mu_s)n_s-1}\text{B}((1-\mu_e)n_e,\mu_sn_s)}{(1-\theta_0)^{n_s-1}\theta_0^{(1-\mu_e)n_e}\text{B}((1-\mu_e)n_e,\mu_en_e)\text{B}(\mu_sn_s,(1-\mu_s)n_s)} \nonumber\\
&\qquad\times
F_1\left(\frac{-\theta_1}{1-\theta_0-\theta_1},\frac{\theta_1}{\theta_0};(1-\mu_e)n_e;1-(1-\mu_s)n_s,1-\mu_en_e;(1-\mu_e)n_e+\mu_sn_s\right) \nonumber \\
&+ I(1-\theta_0\le \theta_1\le \theta_0) \nonumber \\
&\qquad\times \frac{(\theta_1-(1-\theta_0))^{(1-\mu_e)n_e-1}(1-\theta_1)^{\mu_en_e-1}}{\theta_0^{n_e-1}\text{B}((1-\mu_e)n_e,\mu_en_e)} \nonumber\\
&\qquad\times F_1\left(\frac{-(1-\theta_0)}{\theta_1-(1-\theta_0)},\frac{1-\theta_0}{1-\theta_1};(1-\mu_s)n_s;1-(1-\mu_e)n_e,1-\mu_en_e;n_s\right) \nonumber\\
&+ I(\theta_0\le \theta_1\le 1) \nonumber \\
&\qquad\times \frac{(1-\theta_1)^{\mu_en_e+(1-\mu_s)n_s-1}(\theta_1-(1-\theta_0))^{(1-\mu_e)n_e-1}\text{B}((1-\mu_s)n_s,\mu_en_e)}{(1-\theta_0)^{(1-\mu_s)n_s}\theta_0^{n_e-1}\text{B}((1-\mu_e)n_e,\mu_en_e)\text{B}(\mu_sn_s,(1-\mu_s)n_s)}\nonumber\\
&\qquad\times F_1\left(\frac{1-\theta_1}{1-\theta_0},\frac{\theta_1-1}{\theta_1-(1-\theta_0)};(1-\mu_s)n_s;1-\mu_sn_s,1-(1-\mu_e)n_e;\mu_en_e+(1-\mu_s)n_s\right).
\end{align}

\paragraph{Monotonicity.} Under the ``no harm'' monotonicity assumption $\eta_s=0$ we have $\theta_1=(1-\eta_e)\theta_0$, in which case $\theta_1$ is a product of independent beta random variables \textit{a priori}. 
\citet{springer1970} derived the form of this distribution, with the density given as a Meijer $G$-function.
In general, this function is expressed as a contour integral in the complex plane. However, when $a_e=\mu_en_e$, $b_e=(1-\mu_e)n_e$, $a_0=\mu_0n_0$, and $b_0=(1-\mu_0)n_0$ are integers, the prior on $\theta_1$ can be expressed in closed form as 
\[
\pi(\theta_1) = \frac{\Gamma(n_0)\Gamma(n_e)}{\Gamma(\mu_0n_0)\Gamma((1-\mu_e)n_e)}\sum_{k=1}^m\sum_{j=0}^{e_k-1}\frac{K_{kj}\theta_1^{d_k-1}(-\log\theta_1)^{e_k-j-1}}{\Gamma(e_k-j)\Gamma(j+1)},
\]
where $\{d_1,\ldots,d_m\}$ denote the $m$ different integers occurring with multiplicity $\{e_1,\ldots,e_m\}$, respectively, among the sets $\{a_0-1,\ldots,a_0+b_0-2\}$ and $\{a_e-1,\ldots,a_e+b_e-2\}$, and 
\[
K_{kj} = \sum_{r=0}^j\sum_{q\in\{1,\ldots,m\},q\neq k}(-1)^{r+1}\binom{j}{r}\frac{\Gamma(r+1)e_q}{(d_q-d_k)^{r+1}}.
\] 
In particular, if 
$a_e+b_e = a_0$ (equivalently $n_e = \mu_0n_0$, an implicit assumption of the Dirichlet prior), 
we have 
\[
\theta_1 \sim \text{Beta}((1-\mu_e)n_e,\mu_en_e+(1-\mu_0)n_0).
\]
For another example, if $(\theta_0,\eta_e)\sim\text{Uniform}(0,1)^2$, we have
\[
\pi(\theta_1) = -\log\theta_1.
\]
Regarding the conditional prior $\pi(\theta_1|\theta_0)$ under the ``no harm'' assumption, it is clearly a scaled beta distribution, since $\theta_1=(1-\eta_e)\theta_0$. 
If $\eta_e\sim\text{Uniform}(0,1)$, we then have  that $\theta_1|\theta_0\sim\text{Uniform}(0,\theta_0)$. Similarly, under the ``no benefit'' assumption $\eta_e=0$, we have that $\theta_1=\theta_0+\eta_s(1-\theta_0)$, which is a scaled and shifted beta random variable conditional on $\theta_0$. If $\eta_s\sim\text{Uniform}(0,1)$, then $\theta_1|\theta_0\sim\text{Uniform}(\theta_0,1)$.

As for the moments,  applying the law of total covariance to the terms involving $\theta_1$ by conditioning on $\theta_0$ and making use of equation (\ref{eq:theta1}), we obtain
\begin{align*}
\text{Cov}(\theta_0,\theta_1) &= \frac{\mu_0(1-\mu_0)}{n_0+1}(1-\mu_e-\mu_s), \\
\text{Var}(\theta_0) &= \frac{\mu_0(1-\mu_0)}{n_0+1}, \\
\text{Var}(\theta_1) &=  \frac{\mu_0(1-\mu_0)}{n_0+1}(1-\mu_e-\mu_s)^2 \\
&\qquad+ \frac{\mu_e(1-\mu_e)}{n_e+1}\left\{\frac{\mu_0(1-\mu_0)}{n_0+1}+\mu_0^2\right\} \\
&\qquad+ \frac{\mu_s(1-\mu_s)}{n_s+1}\left\{\frac{\mu_0(1-\mu_0)}{n_0+1}+(1-\mu_0)^2\right\}.
\end{align*}
This can be used to obtain the prior correlation,
\begin{align*}
\text{Cor}(\theta_0,\theta_1) &= \frac{\text{Cov}(\theta_0,\theta_1)}{\sqrt{\text{Var}(\theta_0)\text{Var}(\theta_1)}}.
\end{align*}

\begin{figure}[p]
    \includegraphics[width=\textwidth]{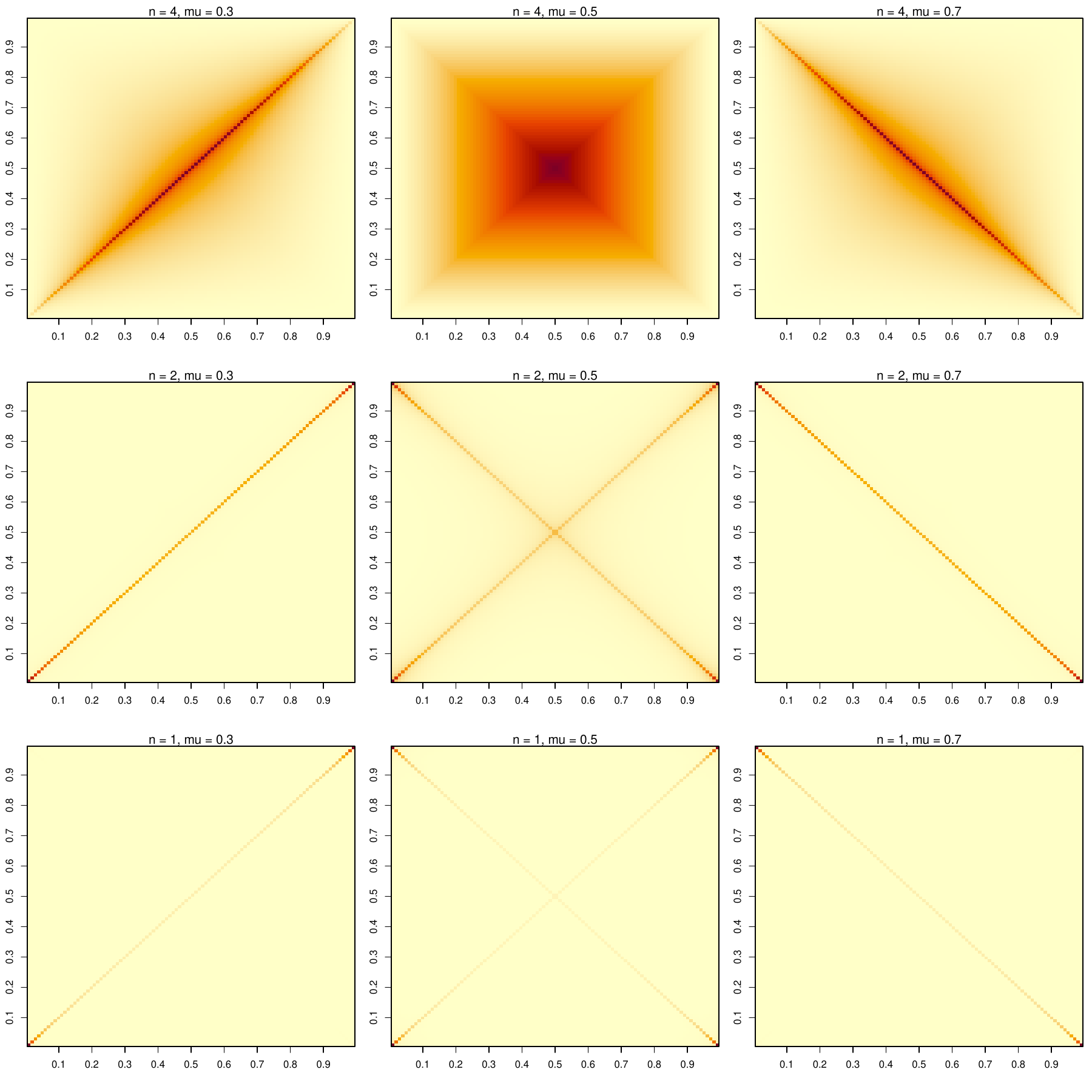}
    \caption{Heatmaps of the joint density of $(\theta_0,\theta_1)$ under the $\text{BREASE}(1/2,\mu,\mu;n,n/2,n/2)$ prior varying $n$ and $\mu$. Our proposed default prior takes $n=2$ and $\mu=.3$. As the plot shows, this: (i) leads to uniform marginals on $\theta_0$ and $\theta_1$; (ii) assumes zero treatment effect on average; (iii) concentrates mass on the diagonal $\theta_0 = \theta_1$; (iv) favors small (or large) proportions, instead of proportions around the center, which is expected when one quantifies rare outcomes such as death (proportions would be small) or, its complement, survival (in which case proportions would be large).}
    \label{fig:density-pns}
\end{figure}

\subsection{The generalized Dirichlet distribution on \texorpdfstring{$\mathbf{p}$}{p}}
\label{sec:gd}

Given a vector of probabilities $\mathbf{p}= (p_1, \dots, p_k)$, such that $\sum_{i=1}^{k}p_i=1$, the generalized Dirichlet distribution \citep{tian2003} is defined as,
\begin{align}
\pi(\mathbf{p}) \propto \prod_{i=1}^{k}p_{i}^{a_i -1} \prod_{j=1}^{m}\left(\sum_{i=1}^{k}\gamma_{ij}p_i\right)^{b_j-1} \label{eq:GD}
\end{align}
where $\Gamma = (\gamma_{ij})$ is a $k\times m$ known scale matrix.  We refer to (\ref{eq:GD}) as $\text{GD}(a, b, \Gamma)$. Now consider the vector of potential outcomes $\mathbf{p}=(p_{00}, p_{01}, p_{10}, p_{11})$.
By  change of variables arguments, if $(\theta_0,\eta_e,\eta_s)\sim \text{BREASE}(\mu;n)$ as in (\ref{eq:gd-betas}), it is easy to show that  $\mathbf{p}$ has density
\begin{align}
\pi(\mathbf{p}) \propto 
p_{00}^{(1-\mu_s)n_s-1}  p_{01}^{\mu_sn_s-1} p_{10}^{\mu_en_e-1} p_{11}^{(1-\mu_e)n_e-1} 
(p_{00}+p_{01})^{(1-\mu_0)n_0-n_s} (p_{10}+p_{11})^{\mu_0n_0-n_e}, 
\label{eq:grd-prior}
\end{align}
which is a $\text{GD}(a, b, \Gamma)$ distribution with parameters
\begin{align}
a &= (\mu_s n_s, (1-\mu_s)n_s, \mu_e n_e, (1-\mu_e)n_e), \nonumber \\
b &= ((1-\mu_0)n_0 - n_s + 1, \mu_0 n_0 - n_e + 1, 1, 1), \nonumber \\
\Gamma &= 
\begin{bmatrix}
1 & 0 & 1 & 0 \\
1 & 0 & 0 & 1 \\
0 & 1 & 1 & 0 \\
0 & 1 & 0 & 1
\end{bmatrix}.    \nonumber
\end{align}
The prior (\ref{eq:grd-prior}) is also a grouped Dirichlet distribution, as defined  in \citet{tian2003} and \citet{ng2008grouped} (which is a special case of the generalized Dirichlet). Similarly, the posterior in (\ref{eq:post}) induces the following posterior distribution on the vector~$\mathbf{p}$,
\begin{align*}
\pi(\mathbf{p}|\mathcal{D}) &\propto 
p_{00}^{(1-\mu_s)n_s-1}  p_{01}^{\mu_sn_s-1} p_{10}^{\mu_en_e-1} p_{11}^{(1-\mu_e)n_e-1} \\
&\times (p_{00}+p_{01})^{N_0-y_0+(1-\mu_0)n_0-n_s} (p_{10}+p_{11})^{y_0+\mu_0n_0-n_e}\\
&\times (p_{00}+p_{10})^{N_1-y_1} (p_{01}+p_{11})^{y_1},
\end{align*}
which is again a generalized Dirichlet distribution, $\text{GD}(a, b', \Gamma)$, with parameters $a$ and $\Gamma$ as in the prior, and updated parameter $b'$ given by
\begin{align}
b' &= (N_0 + y_0 + (1-\mu_0)n_0 - n_s + 1, y_0 + \mu_0 n_0 - n_e + 1, N_1-y_1 + 1, y_1 + 1). \nonumber 
\end{align}

The generalized Dirichlet distribution of \citet{dickey1983}, as well as special cases, such as the grouped Dirichlet and Dirichlet-beta, have been proposed for the Bayesian analysis of categorical data and contingency tables with missing observations \citep{karson1970,antelman1972,kaufman1973,gunel1984,dickey1987,tian2003,ng2008grouped}. 
The history of statistical analysis of contingency tables is extensive; \citet{killion1976} and \citet{agresti2005} provide comprehensive reviews. Along the lines of relevant studies already mentioned,
\citet{tian2003} and \citet{ng2008grouped} identify special cases of Dickey's generalized Dirichlet 
which admit alternative stochastic representations and simplified computation of posterior quantities. 
(Less relevant to our proposed methodology, other priors used to model contingency table proportions have been proposed in \citet{leonard1972,leonard1975,albert1982,basu1982,albert1983-prior,albert1983-dirichlet,albert1985-nonresponse,park1994}.)
These studies largely focused on the derivation of closed-form expressions (when available) and accurate approximations for posterior moments and predictive probabilities used in estimation and inference. 
They did not address the parameterization and interpretation of the generalized Dirichlet in terms of the baseline risk, efficacy, and side effects; algorithms for exact posterior simulation; testing for an effect of treatment and sensitivity analysis using analytical formulae; or the specific application to and prior elicitation for binary experiments.

\paragraph{The Dirichlet as a product of independent betas.}

To better understand the connection of the BREASE prior with the traditional Dirichlet distribution, it is instructive 
to first derive the distribution of $(\theta_0,\eta_e,\eta_s)$ induced by a Dirichlet prior on the response type probabilities~$\mathbf{p}$.  The 
BREASE parameters
can be expressed
as
\begin{align*}
\theta_0 = p_{10}+p_{11}, \qquad
\eta_e = \frac{p_{10}}{p_{10}+p_{11}}, \qquad
\eta_s = \frac{p_{01}}{p_{00}+p_{01}}.
\end{align*}
Elementary properties 
of the Dirichlet distribution then imply that these quantities are mutually independent beta random variables
\citep{ng2011dirichlet}
\begin{equation}
\theta_0 \sim \text{Beta}(a_{10}+a_{11},a_{00}+a_{01}) \quad\indep\quad \eta_e \sim \text{Beta}(a_{10},a_{11}) \quad\indep\quad \eta_s \sim \text{Beta}(a_{01},a_{00}).
\label{eq:dirichlet-betas}
\end{equation}
Similarly, since $\theta_1 = p_{01}+p_{11}$, we also have that $\theta_1\sim\text{Beta}(a_{01}+a_{11},a_{00}+a_{10})$ marginally.

While the Dirichlet density seems like a natural choice for the probability vector 
$\mathbf{p}$, the implied distribution on $(\theta_0,\eta_e,\eta_s)$ reveals some implicit assumptions. In particular, this prior has the peculiar (and potentially undesirable) feature that once we have decided on the parameters underlying the marginal distribution of the efficacy and side effects of treatment $(\eta_e,\eta_s)$---which requires specifying $(a_{00},a_{10},a_{01},a_{11})$---we have fully determined the joint prior on $(\theta_0,\eta_e,\eta_s)$. 
In this sense, the Dirichlet distribution is underparametrized. 

This underparameterization becomes clearer with an
alternative representation of the beta distribution,
in terms of the prior mean and prior ``sample size.'' 
For $\mu = a/(a+b)$ and $n = a+b$, we write $\text{Beta}^*(\mu,n)$ to denote a $\text{Beta}(a,b)$ distribution,
with mean $\mu$ and sample size $n$. 
The Dirichlet joint prior on $(\theta_0,\eta_e,\eta_s)$ has then the following alternative stochastic representation, 
\begin{equation}
\theta_0 \sim \text{Beta}^*(\mu_0,n_0) \quad\indep\quad \eta_e \sim \text{Beta}^*(\mu_e,\mu_0 n_0) \quad\indep\quad \eta_s \sim \text{Beta}^*(\mu_s,(1-\mu_0)n_0),
\label{eq:dirichlet-betas*}
\end{equation}
which is equivalent to the BREASE prior imposing a restriction on the choice of prior sample sizes $n_e$ and $n_s$. Marginally, we also have 
\begin{equation}
\theta_1\sim\text{Beta}^*\left((1-\mu_e)\mu_0+\mu_s(1-\mu_0),n_0\right),
\end{equation}
which resembles the decomposition~(\ref{eq:theta1}).

\subsection{Posterior sampling}
\label{sec:sampling-appendix}

\subsubsection{Proof of Theorem \ref{thm:sampling}}
\label{sec:sampling-proof}

We now describe in greater detail how to sample exactly from the BREASE posterior distribution via simulation. 

\begin{proof}[Proof of Theorem \ref{thm:sampling}]

For treatment or control group $j\in\{1,0\}$, respectively, define the 
response type (doomed, immune, preventive, and causal) counts in that group as
\begin{align*}
D_j &= \sum_{i=1}^N I(Z_i=j,Y_i(0)=1, Y_i(1) = 1),\\
I_j &= \sum_{i=1}^N I(Z_i=j,Y_i(0)=0, Y_i(1) = 0),\\
P_j &= \sum_{i=1}^N I(Z_i=j,Y_i(0)=1, Y_i(1) = 0),\\
C_j &= \sum_{i=1}^N I(Z_i=j,Y_i(0)=0, Y_i(1) = 1).
\end{align*}
The posterior can then be expressed as a mixture distribution:
\begin{align} 
&\pi(\theta_0,\eta_e,\eta_s|\mathcal{D}) = \sum_{C_1=0}^{y_1}\sum_{P_1=0}^{N_1-y_1} \pi(\theta_0,\eta_e,\eta_s|C_1,P_1,\mathcal{D}) 
\times \pi(C_1,P_1|\mathcal{D}).  
\end{align}
We will derive each term in the sum.
A straightforward calculation shows that
\begin{align*}
(D_j,I_j,P_j,C_j)&|(\theta_0,\eta_e,\eta_s,N_j) \sim \\
&\text{Multinomial}_{N_j}(\theta_0(1-\eta_e),(1-\theta_0)(1-\eta_s),\theta_0\eta_e,(1-\theta_0)\eta_s), \qquad j\in\{0,1\},
\end{align*}
and the two distributions are independent: $(D_0,I_0,P_0,C_0)\indep (D_1,I_1,P_1,C_1)$.
Since 
\[
y_1=D_1+C_1 \qquad\text{and}\qquad N_1-y_1= I_1+P_1,
\]
it follows that
\begin{align*}
C_1|(y_1,\theta_0,\eta_e,\eta_s) &\sim \text{Binomial}\left(y_1,\frac{(1-\theta_0)\eta_s}{\theta_1}\right), \\
P_1|(y_1,N_1,\theta_0,\eta_e,\eta_s) &\sim \text{Binomial}\left(N_1-y_1,\frac{\theta_0\eta_e}{1-\theta_1}\right),
\end{align*}
independently. Consequently, we have
\begin{align*}
\pi& (\theta_0,\eta_e,\eta_s|C_1,P_1,\mathcal{D}) \\
&\propto \pi(C_1,P_1,\mathcal{D}|\theta_0,\eta_e,\eta_s)\times \pi(\theta_0,\eta_e,\eta_s) \\
&= \pi(C_1,P_1|\mathcal{D},\theta_0,\eta_e,\eta_s)\times\pi(\mathcal{D}|\theta_0,\eta_e,\eta_s)\times \pi(\theta_0,\eta_e,\eta_s) \\
&= \pi(C_1|y_1,\theta_0,\eta_e,\eta_s) \times\pi(P_1|y_1,N_1,\theta_0,\eta_e,\eta_s)\\
&\qquad\times \pi(\mathcal{D}|\theta_0,\eta_e,\eta_s)\times\pi(\theta_0,\eta_e,\eta_s) \\
&= \text{Binomial}\left(C_1;y_1,\frac{(1-\theta_0)\eta_s}{\theta_1}\right)
\times \text{Binomial}\left(P_1;N_1-y_1,\frac{\theta_0\eta_e}{1-\theta_1}\right) \\ 
&\quad\times \text{Binomial}(y_0;N_0,\theta_0) 
\times \text{Binomial}(y_1;N_1,\theta_1) \\
&\quad\times\text{Beta}(\theta_0;\mu_0 n_0,(1-\mu_0)n_0) 
\times\text{Beta}(\eta_e;\mu_e n_e,(1-\mu_e)n_e) 
\times\text{Beta}(\eta_s;\mu_s n_s,(1-\mu_s)n_s) \\
&\propto 
\theta_0^{y_0+y_1-C_1+P_1+\mu_0n_0-1} (1-\theta_0)^{N-(y_0+y_1-C_1+P_1)+(1-\mu_0)n_0-1} \\
&\qquad\times \eta_e^{P_1+\mu_e n_e-1}(1-\eta_e)^{y_1-C_1+(1-\mu_e)n_e-1} \\
&\qquad\times \eta_s^{C_1+\mu_sn_s-1}(1-\eta_s)^{N_1-y_1-P_1+(1-\mu_s)n_s-1}.
\end{align*}
It follows that
\begin{align}
\pi& (\theta_0,\eta_e,\eta_s |C_1,P_1,\mathcal{D}) \nonumber\\
&=\text{Beta}(\theta_0;y_0+y_1-C_1+P_1+\mu_0n_0,N-(y_0+y_1-C_1+P_1)+(1-\mu_0)n_0) \nonumber\\
&\qquad\times\text{Beta}(\eta_e; P_1+\mu_e n_e, y_1-C_1+(1-\mu_e)n_e) \nonumber\\
&\qquad\times\text{Beta}(\eta_s; C_1+\mu_sn_s,N_1-y_1-P_1+(1-\mu_s)n_s).
\end{align}
Similarly, for the mixture weights we have
\begin{align}
\pi&(C_1,P_1|\mathcal{D}) =  \int \pi(C_1,P_1,\theta_0,\eta_e,\eta_s|\mathcal{D})d\theta_0 d\eta_e d\eta_s \nonumber\\
&= \int \pi(C_1,P_1|\theta_0,\eta_e,\eta_s,\mathcal{D})\pi(\theta_0,\eta_e,\eta_s|\mathcal{D})d\theta_0 d\eta_e d\eta_s \nonumber\\
&\propto \binom{y_1}{C_1}\binom{N_1-y_1}{P_1}
\text{B}(P_1+\mu_en_e,y_1-C_1+(1-\mu_e)n_e)
\nonumber\\
&\qquad\times \text{B}(y_0+y_1-C_1+P_1+\mu_0n_0,N-(y_0+y_1-C_1+P_1)+(1-\mu_0)n_0) \nonumber\\
&\qquad\times \text{B}(C_1+\mu_sn_s,N_1-y_1-P_1+(1-\mu_s)n_s).
\end{align}
Hence, we can sample from the mixture distribution (\ref{eq:mix}) as follows:
\begin{enumerate}[(i)]
    \item Sample 
    $P_1\in\{0,\ldots,N_1-y_1\}$ 
    conditional on $\mathcal{D}$ 
    with probability
    \[
    \pi(P_1|\mathcal{D}) =\sum_{C_1=0}^{y_1} \pi(C_1,P_1|\mathcal{D}).    
    \]
    according to (\ref{eq:mix-weights})
    \item Sample 
    $C_1\in\{0,\ldots,y_1\}$
    conditional on 
    $(P_1,\mathcal{D})$
    with probability
    \[
    \pi(C_1|P_1,\mathcal{D}) \propto \pi(C_1,P_1|\mathcal{D}).
    \]
    according to (\ref{eq:mix-weights}),
    \item Sample $(\theta_0,\eta_e,\eta_s)$ 
    conditional on 
    $(C_1,P_1,\mathcal{D})$ 
    from the independent beta (\ref{eq:mix-post}).
\end{enumerate}
    
\end{proof}

\subsubsection{Sampling under monotonicity: no harm}
\label{sec:sampling-noharm}

Here we derive the BREASE posterior sampling algorithm under the ``no harm'' $(\eta_s=0)$ monotonicity model $M_{-}^{'}$ (\ref{eq:m-'}).

\begin{theorem}
Let $(\theta_0,\eta_e)$ be random variables drawn according to Algorithm \ref{algo:sampling-noharm}. Then $(\theta_0,\eta_e)$ are distributed according to the posterior of model $M_{-}^{'}$ (\ref{eq:m-'}).
\label{thm:sampling-noharm}
\end{theorem}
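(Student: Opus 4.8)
The plan is to mirror the data-augmentation argument used in the proof of Theorem~\ref{thm:sampling}, specializing it to the ``no harm'' constraint $\eta_s = 0$ of model $M_{-}^{'}$~(\ref{eq:m-'}). The key simplification is that under this assumption there can be no ``causal'' subjects, so the causal count $C_1$ is identically zero and the only latent quantity that needs to be imputed is the preventive count in the treatment group, $P_1 = \sum_{i=1}^N I(Z_i = 1, Y_i(1) = 0, Y_i(0) = 1)$. Concretely, setting $\eta_s = 0$ in the likelihood~(\ref{eq:lik}) forces the factor $\eta_s^{y_1 - j}$ to vanish unless $j = y_1$, collapsing the double sum to a single sum over $k$, which will play the role of $P_1$.

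First I would record the conditional sampling distribution of the response-type counts in the treatment group. With $\eta_s = 0$, the doomed/immune/preventive/causal probabilities reduce to $(\theta_0(1-\eta_e),\, 1-\theta_0,\, \theta_0\eta_e,\, 0)$, so $(D_1, I_1, P_1) \mid (\theta_0, \eta_e, N_1)$ is trinomial; every treatment-group death is ``doomed'' ($y_1 = D_1$), while the $N_1 - y_1$ survivors split into immune and preventive. Using $\theta_1 = (1-\eta_e)\theta_0$ from~(\ref{eq:theta1}), this yields
\[
P_1 \mid (y_1, N_1, \theta_0, \eta_e) \sim \text{Binomial}\left(N_1 - y_1,\, \frac{\theta_0\eta_e}{1-\theta_1}\right),
\]
which is exactly the conditional used in the augmentation step of Algorithm~\ref{algo:sampling-noharm}.

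Next I would write the target posterior of $M_{-}^{'}$ as a mixture over $P_1$, in parallel with~(\ref{eq:mix}). Combining the collapsed likelihood with the independent $\text{Beta}^*(\mu_0, n_0)$ and $\text{Beta}^*(\mu_e, n_e)$ priors and matching exponents shows that, conditional on $P_1$, the parameters are independent betas,
\begin{align*}
\pi(\theta_0, \eta_e \mid P_1, \mathcal{D}) &= \text{Beta}(\theta_0;\, y_0 + y_1 + P_1 + \mu_0 n_0,\, N - (y_0 + y_1 + P_1) + (1-\mu_0)n_0) \\
&\quad \times \text{Beta}(\eta_e;\, P_1 + \mu_e n_e,\, y_1 + (1-\mu_e)n_e),
\end{align*}
and that the mixture weight $\pi(P_1 \mid \mathcal{D})$ is proportional to $\binom{N_1 - y_1}{P_1}$ times the product of the two normalizing beta functions of these kernels. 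These are precisely the conditional and marginal laws that Algorithm~\ref{algo:sampling-noharm} samples from in its two steps, so composing them reproduces the mixture and establishes the claim.

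The calculations here are routine once the structure is set up; I expect the main conceptual point is recognizing that imposing $\eta_s = 0$ \emph{eliminates} $C_1$ rather than merely constraining it, and verifying $y_1 = D_1$ (no causal units survive the constraint) together with tracking which binomial coefficient survives the collapse of the double sum in~(\ref{eq:lik}). After that step, the remaining exponent bookkeeping follows the template of Theorem~\ref{thm:sampling} verbatim.
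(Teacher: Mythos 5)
Your proposal is correct and follows essentially the same route as the paper's proof: the same mixture decomposition of the posterior over the preventive count $P_1$, the same binomial conditional for $P_1$ given $(\theta_0,\eta_e,\mathcal{D})$ (which the paper obtains by specializing the multinomial latent-count structure from Theorem~\ref{thm:sampling}, just as you do via the trinomial reduction with $C_1\equiv 0$), and the same identification of the conditional independent-beta kernels and beta-function mixture weights. One minor mislabeling: the binomial conditional on the parameters is the augmentation step of the Gibbs sampler (Algorithm~\ref{algo:gibbs-noharm}), not of the exact sampler (Algorithm~\ref{algo:sampling-noharm}), which draws $P_1$ marginally from $\pi(P_1\mid\mathcal{D})$; this does not affect the validity of your argument, since the conditional still serves its role as an ingredient in deriving $\pi(\theta_0,\eta_e\mid P_1,\mathcal{D})$.
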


\begin{proof}
In this case, we make use of the posterior mixture representation
\begin{equation}
\pi(\theta_0,\eta_e|\mathcal{D}) = \sum_{P_1=0}^{N_1-y_1} \pi(\theta_0,\eta_e|P_1,\mathcal{D})\times\pi(P_1|\mathcal{D}).
\label{eq:mix-noharm}
\end{equation}
As discussed in Section \ref{sec:sampling-proof}, we have
\[
P_1|(y_1,N_1,\theta_0,\eta_e) \sim \text{Binomial}\left(N_1-y_1,\frac{\theta_0\eta_e}{1-\theta_1}\right).
\]
Note that $\theta_1=(1-\eta_e)\theta_0$ by hypothesis. Consequently, we have
\begin{align*}
\pi& (\theta_0,\eta_e|P_1,\mathcal{D}) \\
&\propto \pi(P_1,\mathcal{D}|\theta_0,\eta_e)\times \pi(\theta_0,\eta_e) \\
&= \pi(P_1|\mathcal{D},\theta_0,\eta_e)\times\pi(\mathcal{D}|\theta_0,\eta_e)\times \pi(\theta_0,\eta_e) \\
&= \pi(P_1|y_1,N_1,\theta_0,\eta_e)\times \pi(\mathcal{D}|\theta_0,\eta_e)\times\pi(\theta_0,\eta_e) \\
&= \text{Binomial}\left(P_1;N_1-y_1,\frac{\theta_0\eta_e}{1-\theta_1}\right) \times \text{Binomial}(y_0;N_0,\theta_0) 
\times \text{Binomial}(y_1;N_1,\theta_1) \\
&\qquad\times\text{Beta}(\theta_0;\mu_0 n_0,(1-\mu_0)n_0) 
\times\text{Beta}(\eta_e;\mu_e n_e,(1-\mu_e)n_e) \\
&\propto 
\theta_0^{y_0+y_1+P_1+\mu_0n_0-1} (1-\theta_0)^{N-(y_0+y_1+P_1)+(1-\mu_0)n_0-1} \\
&\qquad\times \eta_e^{P_1+\mu_e n_e-1}(1-\eta_e)^{y_1+(1-\mu_e)n_e-1}.
\end{align*}
It follows that
\begin{align}
\pi (\theta_0,\eta_e|& P_1,\mathcal{D}) \nonumber\\
&=\text{Beta}(\theta_0;y_0+y_1+P_1+\mu_0n_0,N-(y_0+y_1+P_1)+(1-\mu_0)n_0) \nonumber\\
&\qquad\times\text{Beta}(\eta_e; P_1+\mu_e n_e, y_1+(1-\mu_e)n_e).
\label{eq:mix-post-noharm}
\end{align}
Similarly, for the mixture weights we have
\begin{align}
\pi(P_1|& \mathcal{D}) =  \int \pi(P_1,\theta_0,\eta_e|\mathcal{D})d\theta_0 d\eta_e \nonumber\\
&= \int \pi(P_1|\theta_0,\eta_e,\mathcal{D})\pi(\theta_0,\eta_e|\mathcal{D})d\theta_0 d\eta_e \nonumber\\
&\propto \binom{N_1-y_1}{P_1}
\text{B}(P_1+\mu_en_e,y_1+(1-\mu_e)n_e)
\nonumber\\
&\qquad\times \text{B}(y_0+y_1+P_1+\mu_0n_0,N-(y_0+y_1+P_1)+(1-\mu_0)n_0).
\label{eq:mix-weights-noharm}
\end{align}
Algorithm \ref{algo:sampling-noharm} defines the procedure to sample from the distribution \ref{eq:mix-noharm} based on these calculations. Algorithm \ref{algo:gibbs-noharm} defines the corresponding data-augmented Gibbs sampler.
\end{proof}

\begin{algorithm}
\caption{``No harm'' $(\eta_s=0)$ exact posterior sampling algorithm}
\label{algo:sampling-noharm}
\begin{algorithmic}
\item[]\textbf{Input:} Data $\mathcal{D}=(y_0,y_1,N_0,N_1)$, hyperparameters $(\mu_0,\mu_e,n_0,n_e)$, and desired number of posterior samples $T$.
\item[]\textbf{Iterate:} For sample $t\in\{1,\ldots,T\}$,
\begin{enumerate}[(i)]
    \item Sample $P_1\in\{0,\ldots,N_1-y_1\}$ 
    with probability $\pi(P_1|\mathcal{D})$ given by (\ref{eq:mix-weights-noharm}).
    \item Sample $(\theta_0,\eta_e)$ 
    conditional on $(P_1,\mathcal{D})$ 
    from the independent beta distribution (\ref{eq:mix-post-noharm}).
\end{enumerate}
\item[]\textbf{Output:} Posterior samples $\{(\theta_0^{(t)},\eta_e^{(t)})\}_{t\in\{1,\ldots,T\}}$.
\end{algorithmic} 
\end{algorithm}

\begin{algorithm}[t]
\caption{``No harm'' $(\eta_s=0)$ data-augmentation algorithm}
\label{algo:gibbs-noharm}
\begin{algorithmic}
\item[]\textbf{Input:} Data $\mathcal{D}=(y_0,y_1,N_0,N_1)$, hyperparameters $(\mu_0,\mu_e,n_0,n_e)$, desired number of posterior samples $T$, number of burn-in iterations $B$, and parameter initialization $(\theta_0^{(0)},\eta_e^{(0)})\in(0,1)^2$.
\item[]\textbf{Iterate:} For sample $t\in\{1,\ldots,T\}$, 
\begin{enumerate}[(i)]
    \item Sample $P_1^{(t)}$ 
    conditional on $(\theta_0^{(t-1)},\eta_e^{(t-1)},\mathcal{D})$ from the binomial distribution
    \begin{align}
        P_1^{(t)} &\sim \text{Binomial}\left(N_1-y_1,\frac{\theta_0^{(t-1)}\eta_e^{(t-1)}}{1-\theta_1^{(t-1)}}\right), 
    \end{align}
    where $\theta_1^{(t-1)} = \theta_0^{(t-1)}(1-\eta_e^{(t-1)})$.
    \item Sample $(\theta_0^{(t)},\eta_e^{(t)})$ conditional on $(P_1^{(t)},\mathcal{D})$
    from the independent beta (\ref{eq:mix-post-noharm}).
\end{enumerate}
\item[]\textbf{Output:} Posterior samples after burn-in $\{(\theta_0^{(t)},\eta_e^{(t)})\}_{t\in\{B+1,\ldots,T\}}$.
\end{algorithmic} 
\end{algorithm}

\subsubsection{Sampling under monotonicity: no benefit}
\label{sec:sampling-nobenefit}

Here we derive the BREASE posterior sampling algorithm under the ``no benefit'' $(\eta_e=0)$ monotonicity model $M_{+}^{'}$ (\ref{eq:m+'}).

\begin{theorem}
Let $(\theta_0,\eta_s)$ be random variables drawn according to Algorithm \ref{algo:sampling-nobenefit}. Then $(\theta_0,\eta_s)$ are distributed according to the posterior of model $M_{+}^{'}$ (\ref{eq:m+'}).
\label{thm:sampling-nobenefit}
\end{theorem}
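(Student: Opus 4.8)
The plan is to mirror the ``no harm'' derivation of Theorem~\ref{thm:sampling-noharm}, exploiting the symmetry between the efficacy and side-effect parameters under reflection of the response types. Under the ``no benefit'' constraint $\eta_e=0$, the efficacy probability $\mathbb{P}(Y_i(1)=0\mid Y_i(0)=1)$ vanishes, so there are no ``preventive'' subjects and $P_j=0$ for $j\in\{0,1\}$. Consequently the only latent count needed for data augmentation is the ``causal'' count $C_1$, and I would work with the reduced mixture representation
\[
\pi(\theta_0,\eta_s\mid\mathcal{D}) = \sum_{C_1=0}^{y_1} \pi(\theta_0,\eta_s\mid C_1,\mathcal{D})\times\pi(C_1\mid\mathcal{D}),
\]
which is the $M_+^{'}$ analogue of~(\ref{eq:mix-noharm}).

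First I would recall from Section~\ref{sec:sampling-proof} that $C_1\mid(y_1,\theta_0,\eta_s)\sim\text{Binomial}(y_1,(1-\theta_0)\eta_s/\theta_1)$, now with $\theta_1=\theta_0+\eta_s(1-\theta_0)$ by hypothesis. I would then apply Bayes' rule to assemble $\pi(\theta_0,\eta_s\mid C_1,\mathcal{D})$ from this augmentation binomial, the observed-data likelihood $\text{Binomial}(y_0;N_0,\theta_0)\times\text{Binomial}(y_1;N_1,\theta_1)$, and the independent beta priors on $(\theta_0,\eta_s)$. The key algebraic step---the analogue of the $\theta_1=(1-\eta_e)\theta_0$ identity used in the no-harm proof---is the factorization $1-\theta_1=(1-\theta_0)(1-\eta_s)$, which holds precisely because $\eta_e=0$. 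Combined with the cancellation of the $\theta_1^{y_1}$ factors between the augmentation binomial and the treatment-arm likelihood, this collapses the kernel into a product of a pure $\theta_0$ term and a pure $\eta_s$ term, yielding the independent betas
\begin{align*}
\theta_0\mid C_1,\mathcal{D} &\sim \text{Beta}(y_0+y_1-C_1+\mu_0n_0,\,N-(y_0+y_1-C_1)+(1-\mu_0)n_0), \\
\eta_s\mid C_1,\mathcal{D} &\sim \text{Beta}(C_1+\mu_sn_s,\,N_1-y_1+(1-\mu_s)n_s).
\end{align*}
Integrating these kernels over $(\theta_0,\eta_s)$ then produces the mixture weights $\pi(C_1\mid\mathcal{D})$ as a binomial coefficient times a product of two beta functions, exactly paralleling~(\ref{eq:mix-weights-noharm}).

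Finally I would package the two-step scheme---sample $C_1$ from its marginal weights, then draw $(\theta_0,\eta_s)$ from the conditional independent betas above---into Algorithm~\ref{algo:sampling-nobenefit}, with the corresponding data-augmented Gibbs sampler following immediately by alternating the draw of $C_1\mid(\theta_0,\eta_s)$ with the draw of $(\theta_0,\eta_s)\mid C_1$. I do not anticipate a genuine obstacle here, since the entire argument is the reflection of the no-harm case under the correspondence $(\eta_e,P_1)\leftrightarrow(\eta_s,C_1)$ together with the $Y_i(0)\leftrightarrow$ survival relabeling; the only real care needed is in the bookkeeping of which arm contributes the $N_1-y_1$ treatment-group survivors as the ``failure'' count in the $\eta_s$ update (rather than as the $\eta_e$ update of the no-harm model), and in verifying that the doomed count $D_1=y_1-C_1$ feeds correctly into the $\theta_0$ update.
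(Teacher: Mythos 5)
Your proposal is correct and follows essentially the same route as the paper's proof: the same mixture representation over the causal count $C_1$, the same conditional binomial $C_1\mid(y_1,\theta_0,\eta_s)\sim\text{Binomial}\bigl(y_1,(1-\theta_0)\eta_s/\theta_1\bigr)$ with $\theta_1=\theta_0+\eta_s(1-\theta_0)$, the same Bayes'-rule assembly collapsing the kernel into the product of independent betas, and the same integration yielding the mixture weights as a binomial coefficient times two beta functions. Your explicit identification of the factorization $1-\theta_1=(1-\theta_0)(1-\eta_s)$ as the key algebraic step is a nice clarification of what the paper's computation does implicitly, and your bookkeeping (treatment-group survivors $N_1-y_1$ as the failure count for $\eta_s$, doomed count $y_1-C_1$ entering the $\theta_0$ update) matches the paper's formulas exactly.
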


\begin{proof}
In this case, we make use of the posterior mixture representation
\begin{equation}
\pi(\theta_0,\eta_s|\mathcal{D}) = \sum_{C_1=0}^{y_1} \pi(\theta_0,\eta_s|C_1,\mathcal{D})\times\pi(C_1|\mathcal{D}).
\label{eq:mix-nobenefit}
\end{equation}
As discussed in Section \ref{sec:sampling-proof}, we have
\[
C_1|(y_1,\theta_0,\eta_s) \sim \text{Binomial}\left(y_1,\frac{(1-\theta_0)\eta_s}{\theta_1}\right).
\]
Note that $\theta_1=\theta_0+(1-\theta_0)\eta_s$ by hypothesis. Consequently, we have
\begin{align*}
\pi(\theta_0,\eta_s|& C_1,\mathcal{D}) \\
&\propto \pi(C_1,\mathcal{D}|\theta_0,\eta_s)\times \pi(\theta_0,\eta_s) \\
&= \pi(C_1|\mathcal{D},\theta_0,\eta_s)\times\pi(\mathcal{D}|\theta_0,\eta_s)\times \pi(\theta_0,\eta_s) \\
&= \pi(C_1|y_1,\theta_0,\eta_s)\times \pi(\mathcal{D}|\theta_0,\eta_s)\times\pi(\theta_0,\eta_s) \\
&= \text{Binomial}\left(C_1;y_1,\frac{(1-\theta_0)\eta_s}{\theta_1}\right)
\times \text{Binomial}(y_0;N_0,\theta_0) 
\times \text{Binomial}(y_1;N_1,\theta_1) \\
&\qquad\times\text{Beta}(\theta_0;\mu_0 n_0,(1-\mu_0)n_0)
\times\text{Beta}(\eta_s;\mu_s n_s,(1-\mu_s)n_s) \\
&\propto 
\theta_0^{y_0+y_1-C_1+\mu_0n_0-1} (1-\theta_0)^{N-(y_0+y_1-C_1)+(1-\mu_0)n_0-1} \\
&\qquad\times \eta_s^{C_1+\mu_sn_s-1}(1-\eta_s)^{N_1-y_1+(1-\mu_s)n_s-1}.
\end{align*}
It follows that
\begin{align}
\pi(\theta_0,\eta_s |& C_1,\mathcal{D}) \nonumber\\
&=\text{Beta}(\theta_0;y_0+y_1-C_1+\mu_0n_0,N-(y_0+y_1-C_1)+(1-\mu_0)n_0) \nonumber\\
&\qquad\times\text{Beta}(\eta_s; C_1+\mu_sn_s,N_1-y_1+(1-\mu_s)n_s).
\label{eq:mix-post-nobenefit}
\end{align}
Similarly, for the mixture weights we have
\begin{align}
\pi(C_1|& \mathcal{D}) =  \int \pi(C_1,\theta_0,\eta_s|\mathcal{D})d\theta_0 d\eta_s \nonumber\\
&= \int \pi(C_1|\theta_0,\eta_s,\mathcal{D})\pi(\theta_0,\eta_s|\mathcal{D})d\theta_0 d\eta_s \nonumber\\
&\propto \binom{y_1}{C_1} \text{B}(y_0+y_1-C_1+\mu_0n_0,N-(y_0+y_1-C_1)+(1-\mu_0)n_0) \nonumber\\
&\qquad\times \text{B}(C_1+\mu_sn_s,N_1-y_1+(1-\mu_s)n_s).
\label{eq:mix-weights-nobenefit}
\end{align}
Algorithm \ref{algo:sampling-nobenefit} defines the procedure to sample from the distribution \ref{eq:mix-nobenefit} based on these calculations. Algorithm \ref{algo:gibbs-nobenefit} defines the corresponding data-augmented Gibbs sampler.
\end{proof}

\begin{algorithm}
\caption{``No benefit'' $(\eta_e=0)$ exact posterior sampling algorithm}
\label{algo:sampling-nobenefit}
\begin{algorithmic}
\item[]\textbf{Input:} Data $\mathcal{D}=(y_0,y_1,N_0,N_1)$, hyperparameters $(\mu_0,\mu_s,n_0,n_s)$, and desired number of posterior samples $T$.
\item[]\textbf{Iterate:} For sample $t\in\{1,\ldots,T\}$,
\begin{enumerate}[(i)]
    \item Sample 
    $C_1\in\{0,\ldots,y_1\}$ 
    conditional on $\mathcal{D}$ 
    with probability 
    $\pi(C_1|\mathcal{D})$ 
    given by (\ref{eq:mix-weights-nobenefit}).
    \item Sample $(\theta_0,\eta_s)$ 
    conditional on 
    $(C_1,\mathcal{D})$ 
    from the independent beta distribution (\ref{eq:mix-post-nobenefit}).
\end{enumerate}
\item[]\textbf{Output:} Posterior samples $\{(\theta_0^{(t)},\eta_s^{(t)})\}_{t\in\{1,\ldots,T\}}$.
\end{algorithmic} 
\end{algorithm}

\begin{algorithm}[t]
\caption{``No benefit'' $(\eta_e=0)$ data-augmentation algorithm}
\label{algo:gibbs-nobenefit}
\begin{algorithmic}
\item[]\textbf{Input:} Data $\mathcal{D}=(y_0,y_1,N_0,N_1)$, hyperparameters $(\mu_0,\mu_s,n_0,n_s)$, desired number of posterior samples $T$, number of burn-in iterations $B$, and parameter initialization $(\theta_0^{(0)},\eta_s^{(0)})\in(0,1)^2$.
\item[]\textbf{Iterate:} For sample $t\in\{1,\ldots,T\}$, 
\begin{enumerate}[(i)]
    \item Sample $C_1^{(t)}$ 
    conditional on $(\theta_0^{(t-1)},\eta_s^{(t-1)},\mathcal{D})$ from the binomial distribution
    \begin{align}
        C_1^{(t)} &\sim \text{Binomial}\left(y_1,\frac{(1-\theta_0^{(t-1)})\eta_s^{(t-1)}}{\theta_1^{(t-1)}}\right), 
    \end{align}
    where $\theta_1^{(t-1)} = \theta_0^{(t-1)} + (1-\theta_0^{(t-1)})\eta_s^{(t-1)}$.
    \item Sample $(\theta_0^{(t)},\eta_s^{(t)})$ conditional on $(C_1^{(t)},\mathcal{D})$
    from the independent beta (\ref{eq:mix-post-nobenefit}).
\end{enumerate}
\item[]\textbf{Output:} Posterior samples after burn-in $\{(\theta_0^{(t)},\eta_s^{(t)})\}_{t\in\{B+1,\ldots,T\}}$.
\end{algorithmic} 
\end{algorithm}

\subsubsection{Sampling with an alternate prior under 
\texorpdfstring{$H_0:\theta_0=\theta_1$}{H0} 
}
\label{sec:sampling-agg-dir-h0}

We now derive a sampling algorithm for the aggregated Dirichlet prior under $H_0$ introduced in Section \ref{sec:agg-dir-h0}:
\[
\mathbf{p}^* = (p_{00},p^*_{10},p_{11}) \sim \text{Dirichlet}((1-\mu_s)n_s,\mu_en_e+\mu_sn_s,(1-\mu_e)n_e), \qquad p^*_{10} = p_{10}+p_{01}.
\]
The algorithm is based on the posterior decomposition
\begin{align}
\pi(\mathbf{p}^*|\mathcal{D}) = \sum_{w(0)=0}^{y_0+y_1}\sum_{w(1)=0}^{N_0+N_1-y_0-y_1}&
\pi(\mathbf{p}^*|w(0),w(1),\mathcal{D})
\times\pi(w(0),w(1)|\mathcal{D}),
\label{eq:mix-post-h0}
\end{align}
where 
\[
w(0) = P_0+C_1, \qquad w(1) = C_0+P_1.
\]
We have
\begin{align*}
(P_0,D_0,I_0,C_0)&|(\mathbf{p}^*,N_0) \sim \text{Multinomial}_{N_0}(p^*_{10}/2,p_{11},p_{00},p^*_{10}/2), \\
(C_1,D_1,I_1,P_1)&|(\mathbf{p}^*,N_1) \sim \text{Multinomial}_{N_1}(p^*_{10}/2,p_{11},p_{00},p^*_{10}/2),
\end{align*}
and the two distributions are independent. 
It follows that
\begin{align*}
P_0|(y_0,\mathbf{p}^*) &\sim \text{Binomial}\left(y_0,\frac{p^*_{10}}{p^*_{10}+2p_{11}}\right), \\
C_0|(y_0,N_0,\mathbf{p}^*) &\sim \text{Binomial}\left(N_0-y_0,\frac{p^*_{10}}{p^*_{10}+2p_{00}}\right), \\
C_1|(y_1,\mathbf{p}^*) &\sim \text{Binomial}\left(y_1,\frac{p^*_{10}}{p^*_{10}+2p_{11}}\right), \\
P_1|(y_1,N_1,\mathbf{p}^*) &\sim \text{Binomial}\left(N_1-y_1,\frac{p^*_{10}}{p^*_{10}+2p_{00}}\right),
\end{align*}
independently. Hence, $w(0)$ and $w(1)$ are distributed independently as
\begin{align*}
w(0)
|(y_0,y_1,\mathbf{p}^*) 
&\sim \text{Binomial}\left(y_0+y_1,\frac{p^*_{10}}{p^*_{10}+2p_{11}}\right), \\
w(1)
|(\mathcal{D},\mathbf{p}^*) 
&\sim \text{Binomial}\left(N_0+N_1-y_0-y_1,\frac{p^*_{10}}{p^*_{10}+2p_{00}}\right),
\end{align*}
Consequently, we have
\begin{align*}
\pi (\mathbf{p}^*|w(0),& w(1),\mathcal{D}) \\
&\propto \pi(w(0),w(1),\mathcal{D}|\mathbf{p}^*)\times \pi(\mathbf{p}^*) \\
&= \pi(w(0),w(1)|\mathcal{D},\mathbf{p}^*)\times\pi(\mathcal{D}|\mathbf{p}^*)\times \pi(\mathbf{p}^*) \\
&= \pi(w(0)|y_0,y_1,\mathbf{p}^*) \times\pi(w(1)|\mathcal{D},\mathbf{p}^*) \\
&\qquad\times \pi(\mathcal{D}|\mathbf{p}^*)\times\pi(\mathbf{p}^*) \\
&= \text{Binomial}\left(w(0);y_0+y_1,\frac{p^*_{10}}{p^*_{10}+2p_{11}}\right) \\
&\qquad\times \text{Binomial}\left(w(1);N_0+N_1-y_0-y_1,\frac{p^*_{10}}{p^*_{10}+2p_{00}}\right) \\
&\qquad\times \text{Binomial}(y_0;N_0,p^*_{10}/2+p_{11}) 
\times \text{Binomial}(y_1;N_1,p^*_{10}/2+p_{11}) \\
&\qquad\times (p^*_{10})^{\mu_en_e+\mu_sn_s-1}p_{11}^{(1-\mu_e)n_e-1}p_{00}^{(1-\mu_s)n_s-1} \\
&\propto 
(p^*_{10})^{w(0)+w(1)+\mu_en_e+\mu_sn_s-1}
p_{11}^{y_0+y_1-w(0)+(1-\mu_e)n_e-1}p_{00}^{N_0+N_1-y_0-y_1-w(1)+(1-\mu_s)n_s-1}
\end{align*}
It follows that
\begin{align}
\mathbf{p}^*|(w(0),w(1),\mathcal{D}) &\sim
\text{Dirichlet}(a_{00},a_{10},a_{11}),
\label{eq:dir-h0}
\end{align}
where
\begin{align*}
a_{00} &= N_0+N_1-y_0-y_1-w(1)+(1-\mu_s)n_s, \\
a_{10} &= w(0)+w(1)+\mu_en_e+\mu_sn_s, \\
a_{11} &= y_0+y_1-w(0)+(1-\mu_e)n_e.
\end{align*}
Consequently, for the mixture weights we have
\begin{align}
\pi&(w(0),w(1)|\mathcal{D}) =  \int \pi(w(0),w(1),\mathbf{p}^*|\mathcal{D})d\mathbf{p}^* \nonumber\\
&= \int \pi(w(0),w(1)|\mathbf{p}^*,\mathcal{D})\pi(\mathbf{p}^*|\mathcal{D})d\mathbf{p}^* \nonumber\\
&\propto \binom{y_0+y_1}{w(0)}\binom{N_0+N_1-y_0-y_1}{w(1)} \nonumber\\
&\qquad\times\int (p^*_{10}/2)^{w(0)+w(1)+\mu_en_e+\mu_sn_s-1}p_{11}^{y_0+y_1-w(0)+(1-\mu_e)n_e-1}p_{00}^{N_0+N_1-y_0-y_1-w(1)+(1-\mu_s)n_s-1}d\mathbf{p}^* \nonumber\\
&\propto 2^{-(w(0)+w(1))}\binom{y_0+y_1}{w(0)}\binom{N_0+N_1-y_0-y_1}{w(1)}\text{B}(a_{00},a_{10},a_{11}).
\label{eq:mix-weights-h0}
\end{align}
Algorithm \ref{algo:sampling-h0} defines the procedure to sample from the distribution \ref{eq:mix-post-h0} based on these calculations. Algorithm \ref{algo:gibbs-h0} defines the corresponding data-augmented Gibbs sampler.

\begin{algorithm}[htbp!]
\caption{
Alternate $H_0:\theta_0=\theta_1$ exact posterior sampling algorithm}
\label{algo:sampling-h0}
\begin{algorithmic}
\item[]\textbf{Input:} Data $(y_0,y_1,N_0,N_1)$, hyperparameters $(\mu_e,\mu_s,n_e,n_s)$, and 
posterior samples $T$.
\item[]\textbf{Iterate:} For sample $t\in\{1,\ldots,T\}$,
\begin{enumerate}[(i)]
    \item Sample $w(1)\in\{0,\ldots,N_0+N_1-y_0-y_1\}$ conditional on $(y_0,y_1,N_0,N_1)$ 
    as
    \[
    \pi(w(1)|y_0,y_1,N_0,N_1) =\sum_{w(0)=0}^{y_0+y_1} \pi(w(0),w(1)|y_0,y_1,N_0,N_1).
    \]
    \item Sample $w(0)\in\{0,\ldots,y_0+y_1\}$ conditional on $(w(1),y_0,y_1,N_0,N_1)$
    with probability
    \[
    \pi(w(0)|w(1),y_0,y_1,N_0,N_1) \propto \pi(w(0),w(1)|y_0,y_1,N_0,N_1).
    \]
    \item Sample $\mathbf{p}^*=(p_{00},p^*_{10},p_{11})$ 
    conditional on $(w(0),w(1),y_0,y_1,N_0,N_1)$ 
    from the Dirichlet distribution (\ref{eq:dir-h0}).
    \item Transform $\mathbf{p}^*$ to obtain samples of $(\theta_0,\theta_1,\eta_e,\eta_s)$ via
    \begin{align*}
    \theta_0 &= p^*_{10}/2+p_{11} = \theta_1, \qquad
    \eta_e = \frac{p^*_{10}}{p^*_{10}+2p_{11}}, \qquad
    \eta_s = \frac{p^*_{10}}{p^*_{10}+2p_{00}}.
    \end{align*}
\end{enumerate}
\item[]\textbf{Output:} Posterior samples $\{((\mathbf{p}^*)^{(t)},\theta_0^{(t)},\theta_1^{(t)},\eta_e^{(t)},\eta_s^{(t)})\}_{t\in\{1,\ldots,T\}}$.
\end{algorithmic} 
\end{algorithm}

\begin{algorithm}[t]
\caption{
Alternate $H_0:\theta_0=\theta_1$ data-augmentation algorithm}
\label{algo:gibbs-h0}
\begin{algorithmic}
\item[]\textbf{Input:} Data $\mathcal{D}=(y_0,y_1,N_0,N_1)$, hyperparameters $(\mu_e,\mu_s,n_e,n_s)$, number of posterior samples $T$, number of burn-in iterations $B$, and simplex parameter initialization $(\mathbf{p}^*)^{(0)}$.
\item[]\textbf{Iterate:} For sample $t\in\{1,\ldots,T\}$, 
\begin{enumerate}[(i)]
    \item Sample 
    $(w(0)^{(t)},w(1)^{(t)})$ 
    conditional on $((\mathbf{p}^*)^{(t-1)},\mathcal{D})$ from the independent binomial
    \begin{align*}
        w(0)^{(t)}
        |(y_0,y_1,(\mathbf{p}^*)^{(t-1)}) 
        &\sim \text{Binomial}\left(y_0+y_1,\frac{(p^*_{10})^{(t-1)}}{(p^*_{10})^{(t-1)}+2p_{11}^{(t-1)}}\right), \\
        w(1)^{(t)}
        |(\mathcal{D},(\mathbf{p}^*)^{(t-1)}) 
        &\sim \text{Binomial}\left(N_0+N_1-y_0-y_1,\frac{(p^*_{10})^{(t-1)}}{(p^*_{10})^{(t-1)}+2p_{00}^{(t-1)}}\right).
    \end{align*}
    \item Sample $(\mathbf{p}^*)^{(t)}$ conditional on 
    $(w(0)^{(t)},w(1)^{(t)},\mathcal{D})$
    from the Dirichlet \eqref{eq:dir-h0}.
    \item Transform $(\mathbf{p}^*)^{(t)}$ to obtain samples $(\theta_0^{(t)},\theta_1^{(t)},\eta_e^{(t)},\eta_s^{(t)})$ via
    \begin{align*}
    \theta_0 &= p^*_{10}/2+p_{11} = \theta_1, \qquad
    \eta_e = \frac{p^*_{10}}{p^*_{10}+2p_{11}}, \qquad
    \eta_s = \frac{p^*_{10}}{p^*_{10}+2p_{00}}.
    \end{align*}    
\end{enumerate}
\item[]\textbf{Output:} Posterior samples after burn-in $\{((\mathbf{p}^*)^{(t)},\theta_0^{(t)},\theta_1^{(t)},\eta_e^{(t)},\eta_s^{(t)})\}_{t\in\{B+1,\ldots,T\}}$.
\end{algorithmic} 
\end{algorithm}

\subsection{Posterior quantities of interest}
\label{sec:post-quant}

In addition to marginal likelihoods, we can derive analytical expressions for certain relevant functionals of the BREASE posterior distribution $\pi(\theta_0,\eta_e,\eta_s|\mathcal{D})$. While posterior quantities can generally be easily estimated using simple Monte Carlo approximation with samples obtained from Algorithm \ref{algo:sampling}, analytical formulae may be of value, e.g., for conducting prior sensitivity analysis of treatment effect estimands 
without needing to sample the posterior for every choice of the hyperparameters $(\mu,n)$.

The risk difference $\theta_1-\theta_0$ 
and risk ratio $\theta_1/\theta_0$ 
are of particular interest in practice, with expectations of their posterior distributions often reported. We first note that, since the posterior $\pi(\theta_0,\eta_e,\eta_s|\mathcal{D})$ is a mixture of independent beta distributions, conditional and marginal expectations and percentiles can be easily computed by first calculating expectations or percentiles of the beta summands and averaging these quantities across the mixture weights. For example, using the mixture representation (\ref{eq:mix}) of the posterior, we have
\begin{align*}
\mathbb{E}[\theta_0|\mathcal{D}] &= 
\int \theta_0 \cdot \pi(\theta_0,\eta_e,\eta_s|\mathcal{D})d\theta_0d\eta_e d\eta_s \\
&= \sum_{y_1(0)=0}^{y_1}\sum_{x_1(1)=0}^{N_1-y_1}\pi(y_1(0),x_1(1)|\mathcal{D}) \int \theta_0 \cdot\pi(\theta_0,\eta_e,\eta_s|y_1(0),x_1(1),\mathcal{D}) d\theta_0d\eta_e d\eta_s.
\end{align*}
Applying equations (\ref{eq:mix-weights}) and (\ref{eq:mix-post}) then yields an expression for $\mathbb{E}[\theta_0|\mathcal{D}]$ in terms of the data $\mathcal{D}$ and hyperparameters $(\mu,n)$, which we omit for brevity. In a similar fashion, by exploiting the mixture-of-betas representation of the posterior, we can easily calculate posterior expectations of polynomials $
\sum_{(\alpha_0,\alpha_e,\alpha_s)}a_{(\alpha_0,\alpha_e,\alpha_s)}\theta_0^{\alpha_0}\eta_e^{\alpha_e}\eta_s^{\alpha_s}$, including those with negative exponents, assuming $\mathcal{D}$ and $(\mu,n)$ are such that the integrals converge. 

In particular, assuming treatment is not harmful $(\eta_s=0)$, the efficacy can be written in terms of the risk ratio as $\eta_e = 1-\theta_1/\theta_0$. The formulae derived in Appendix \ref{sec:sampling-noharm} can then be applied to calculate $\mathbb{E}[\theta_1/\theta_0|\mathcal{D}]=1-\mathbb{E}[\eta_e|\mathcal{D}]$ using the posterior $\pi(\theta_0,\eta_e|\mathcal{D})$ under the monotonicity assumption. More generally, 
we have
\begin{align*}
\mathbb{E}[\theta_1/\theta_0|\mathcal{D}] &= 
\mathbb{E}\left[\frac{\theta_0(1-\eta_e-\eta_s)+\eta_s}{\theta_0}\bigg|\mathcal{D}\right] \\
&= 1-\mathbb{E}[\eta_e|\mathcal{D}]-\mathbb{E}[\eta_s|\mathcal{D}] + 
\mathbb{E}[\theta_0^{-1}\eta_s|\mathcal{D}].
\end{align*}
Similarly, the expected posterior risk difference can be obtained as
\begin{align*}
\mathbb{E}[\theta_1-\theta_0|\mathcal{D}] &= \mathbb{E}[\eta_s|\mathcal{D}] - \mathbb{E}[\theta_0\eta_e|\mathcal{D}] - \mathbb{E}[\theta_0\eta_s|\mathcal{D}].
\end{align*}
In Section \ref{sec:results} we demonstrate how to conduct sensitivity analysis with the BREASE prior for Bayes factors using the marginal likelihoods derived in Section \ref{sec:gd-testing}. The discussion therein applies just as well to treatment effects and other posterior quantities.  

\subsection{Alternative models and priors}

\subsubsection{Other priors for \texorpdfstring{$H_0$}{}}
\label{app:h0}

Recalling that $\theta_0=p_{10}+p_{11}$ and $\theta_1=p_{01}+p_{11}$, we see that $\theta_0=\theta_1$ if and only if $p_{10} = p_{01}$. In this light, we discuss some alternate priors 
that conform to these constraints.
While  instantiating  $H_0$ using the beta-binomial model $M_0$ (\ref{eq:m0}) 
should be preferable in most applications, the prior we discuss here may apply in cases where one has stronger prior information concerning the efficacy and side effects of treatment $(\eta_e,\eta_s)$ rather than the baseline risk $\theta_0$ itself.

\subsubsection{Aggregated Dirichlet}
\label{sec:agg-dir-h0}

With a $\text{Dirichlet}^*(\mu_0,\mu_e,\mu_s;n_0)$ prior on $\mathbf{p}$, we have by the aggregation property of the Dirichlet distribution 
\citep{ng2011dirichlet}
\[
(p_{00},p_{10}+p_{01},p_{11}) \sim \text{Dirichlet}((1-\mu_s)n_s,\mu_en_e+\mu_sn_s,(1-\mu_e)n_e),
\]
where $n_e = \mu_0n_0$ and $n_s = (1-\mu_0)n_0$.
Assuming $H_0$ holds, and defining $p^*_{10}= p_{10}+p_{01}=2p_{10}$, 
we obtain the Dirichlet prior density on the aggregated cell probabilities
\begin{align*}
\pi(p_{00},p^*_{10}) &= \text{B}((1-\mu_s)n_s,\mu_en_e+\mu_sn_s,(1-\mu_e)n_e)^{-1}p_{00}^{(1-\mu_s)n_s-1}(p^*_{10})^{\mu_en_e+\mu_sn_s-1}p_{11}^{(1-\mu_e)n_e-1},
\end{align*}
where $p_{11}=1-p_{00}-p^*_{10}$ and $\text{B}(a_{00},a_{10},a_{11})$ is the multivariate beta function:
\[
\text{B}(a_{00},a_{10},a_{11}) = \frac{\Gamma(a_{00})\Gamma(a_{10})\Gamma(a_{11})}{\Gamma(a_{00}+a_{10}+a_{11})}.
\]
This prior allows for exact posterior sampling and marginal likelihood calculation in cases where we may have stronger prior information concerning the efficacy and side effects of treatment $(\eta_e,\eta_s)$ than the baseline risk $\theta_0$. Indeed, note that the prior is fully specified by the hyperparameters $(\mu_e,\mu_s,n_e,n_s)$. 
Recalling that the Dirichlet$^*$ prior is obtained from the generalized Dirichlet by setting $n_e=\mu_0n_0$ and $n_s=(1-\mu_0)n_0$, we see that this prior assumes that we have as much prior knowledge on $\theta_0$ as we do on $(\eta_e,\eta_s)$.

With this parametrization, the likelihood under $H_0$ is given by
\begin{align*}
L(\mathcal{D}|p) &= \binom{N_0}{y_0}\binom{N_1}{y_1}(p_{10}^*/2+p_{11})^{y_0+y_1}(p_{00}+p_{10}^*/2)^{N_0+N_1-y_0-y_1}.
\end{align*}
The posterior is then
\begin{align*}
\pi(p_{00},p^*_{10}&|\mathcal{D}) \propto \binom{N_0}{y_0}\binom{N_1}{y_1} \text{B}((1-\mu_s)n_s,\mu_en_e+\mu_sn_s,(1-\mu_e)n_e)^{-1} \\
&\qquad\times (p^*_{10}/2+p_{11})^{y_0+y_1} (p_{00}+p^*_{10}/2)^{N_0+N_1-y_0-y_1} \\
&\qquad\times  (p^*_{10})^{\mu_en_e+\mu_sn_s-1}p_{11}^{(1-\mu_e)n_e-1}p_{00}^{(1-\mu_s)n_s-1}.
\end{align*}
From here
we can apply the binomial theorem twice to quickly see that the posterior is a mixture of Dirichlet densities on the probability vector $\mathbf{p}^*=(p_{00},p^*_{10},p_{11})$. This yields the marginal likelihood formula
\begin{align*}
&L(\mathcal{D}) = \binom{N_0}{y_0}\binom{N_1}{y_1}
\text{B}((1-\mu_s)n_s,\mu_en_e+\mu_sn_s,(1-\mu_e)n_e)^{-1}\\ 
&\quad\times\sum_{j=0}^{y_0+y_1}\sum_{k=0}^{N_0+N_1-y_0-y_1}2^{-(j+k)}\binom{y_0+y_1}{j}\binom{N_0+N_1-y_0-y_1}{k} \text{B}(a_{00}(j,k),a_{10}(j,k),a_{11}(j,k)),
\end{align*}
where we define
\begin{align*}
a_{00}(j,k) &= N_0+N_1-y_0-y_1+(1-\mu_s)n_s-k, \\ 
a_{10}(j,k) &= j+k+\mu_en_e+\mu_sn_s, \\ 
a_{11}(j,k) &= y_0+y_1+(1-\mu_e)n_e-j.
\end{align*}
In Section \ref{sec:sampling-agg-dir-h0}, we derive an algorithm for exact posterior sampling using the aggregated Dirichlet prior on $(p_{00},p^*_{10},p_{11})$.

\subsubsection{Other priors for \texorpdfstring{$H_-$}{} and \texorpdfstring{$H_+$}{}}
\label{app:h-h+}

Another approach for specifying models for $H_-$ and $H_+$, which is both natural and computationally convenient, is to impose a monotonicity assumption on $M_1$, and set $\eta_s=0$ or $\eta_e=0$ respectively. This results in the following models,
\begin{align}
M_{-}^{'}:~&(\theta_0,\eta_e) \sim \text{Beta}^*(\mu_0,n_0)\times\text{Beta}^*(\mu_e,n_e), \qquad \theta_1 = (1-\eta_e)\theta_0 \label{eq:m-'}\\
M_{+}^{'}:~&(\theta_0,\eta_s) \sim \text{Beta}^*(\mu_0,n_0)\times\text{Beta}^*(\mu_s,n_s), \qquad \theta_1 = \theta_0 + \eta_s(1-\theta_0),\label{eq:m+'}
\end{align}
with marginal likelihoods given by
\begin{align*}
L_{-}^{'}(\mathcal{D}) 
&= 
\binom{N_0}{y_0}\binom{N_1}{y_1}\sum_{k=0}^{N_1-y_1}\binom{N_1-y_1}{k} \\
&\qquad\times \frac{\text{B}(y_0+y_1+k+\mu_0n_0,N-(y_0+y_1+k)+(1-\mu_0)n_0)}{\text{B}(\mu_0n_0,(1-\mu_0)n_0)} \\
&\qquad\times \frac{\text{B}(k+\mu_en_e,y_1+(1-\mu_e)n_e)}{\text{B}(\mu_en_e,(1-\mu_e)n_e)},
\end{align*}
and
\begin{align*}
L_{+}^{'}(\mathcal{D})
&= 
\binom{N_0}{y_0}\binom{N_1}{y_1}
\sum_{j=0}^{y_1}\binom{y_1}{j} \\
&\qquad\times \frac{\text{B}(y_0+j+\mu_0n_0,N-(y_0+j)+(1-\mu_0)n_0)}{\text{B}(\mu_0n_0,(1-\mu_0)n_0)} \\
&\qquad\times \frac{\text{B}(y_1-j+\mu_sn_s,N_1-y_1+(1-\mu_s)n_s)}{\text{B}(\mu_sn_s,(1-\mu_s)n_s)}.
\end{align*}
Algorithms to sample exactly from the posterior under $M_{-}^{'}$ and $M_{+}^{'}$ are provided in Appendix \ref{sec:sampling-appendix}. Note that here we interpret the constraint $\eta_s=0$ (or $\eta_e=0$) simply as a causally principled way to derive a prior compatible with the desired constraint $H_-:\theta_1 < \theta_0$ (or $H_+: \theta_1 > \theta_0$), and not as testing the former constraint in lieu of the latter.\footnote{In general, the data cannot  differentiate the stronger constraint, such as  $\eta_s=0$ (no one is hurt by the treatment), from the weaker constraint $\theta_1 < \theta_0$ (the treatment is beneficial on average), since the likelihood depends only on $\theta_1$ and $\theta_0$. Thus, in this case,  differences in using $M_{-}$ or $M_{-}^{'}$ amount  to  differences only in the induced priors satisfying the same testable constraint $\theta_1 <\theta_0$, such as one placing more (or less) mass on smaller (or larger) effects than the other.
} One interesting characteristic of models $M_{-}^{'}$ and $M_{+}^{'}$ is that they do not put $\theta_0$ and $\theta_1$ on equal footing, even when choosing beta priors compatible with the $\text{BREASE}(1/2, \mu, \mu; 2, 1, 1)$ distribution, which places flat marginals on $\theta_0$ and $\theta_1$. This is usually desirable, e.g., when the control condition indeed denotes a well understood baseline, such as a standard of care. 

Symmetry of  $\theta_0$ and $\theta_1$, however, can also be easily restored by switching the roles of the ``treatment'' and ``control'' conditions,  as we now discuss. 

\begin{figure}[t]
    \centering
    \includegraphics[width=\textwidth]{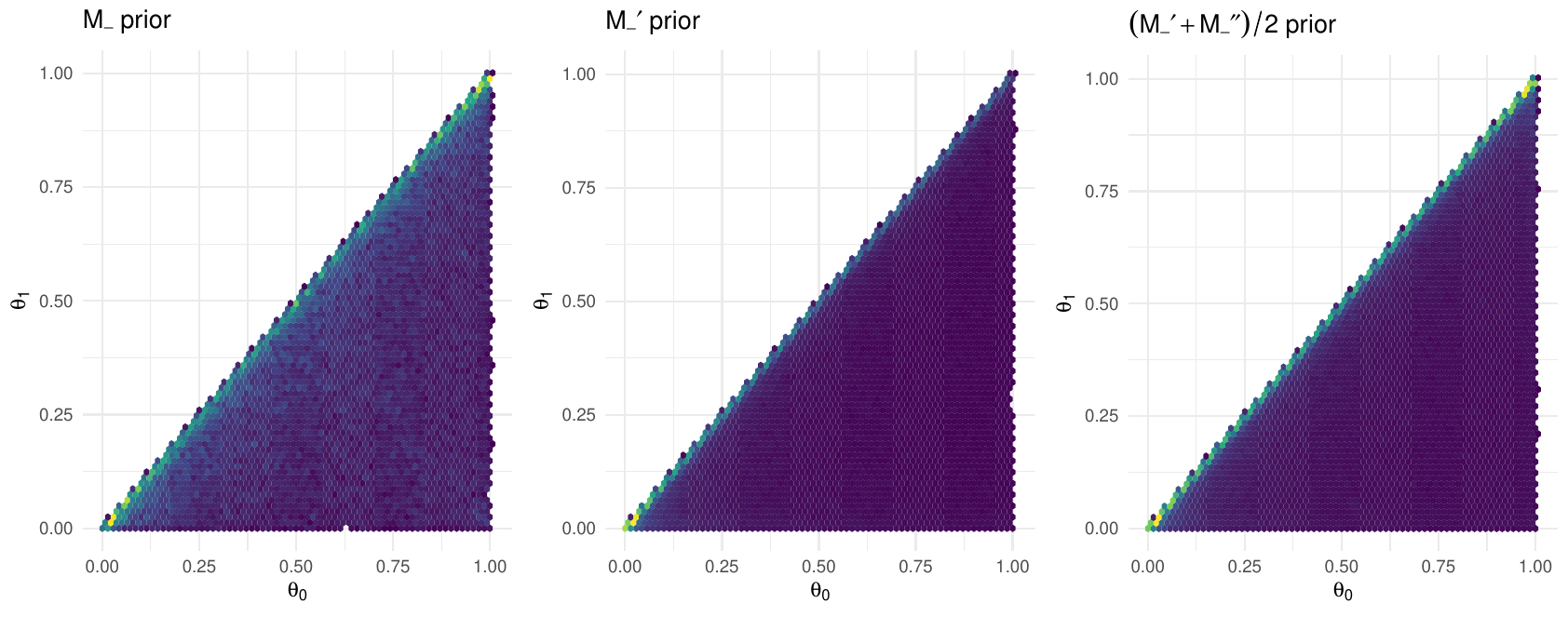}
    \caption{Left: heatmap of joint prior on $(\theta_0,\theta_1)$ implied by the $M_-$ prior (\ref{eq:m-prior}) with $\mu_0=1/2,\mu_e=\mu_s=0.3,n_0=2,n_e=n_s=1$. Center: prior on $(\theta_0,\theta_1)$ under $M_{-}^{'}$ with the same values of $(\mu_0,\mu_e,n_0,n_e)$. Right: prior on $(\theta_0,\theta_1)$ under the mixture model $(M_{-}^{'}+M_{-}^{''})/2$ with $\mu_1=1/2,\mu_s'=0.3,n_1=2,n_s'=1$ and the same values of $(\mu_0,\mu_e,n_0,n_e)$.}
    \label{fig:asym}
\end{figure}

Returning to the model $M_{-}^{'}$ $(\ref{eq:m-'})$, 
some natural values for the prior hyperparameters are 
\[
\mu_0=\mu_e=1/2,\qquad n_0=n_e=2,
\]
which define a flat $\text{Uniform}(0,1)^2$ prior on $(\theta_0,\eta_e)$. The resulting conditional prior on $\theta_1$ is
\[
\theta_1|\theta_0 \sim \text{Uniform}(0,\theta_0),
\]
which presents an intuitive representation of the hypothesis $H_-:\theta_1<\theta_0$. 
Note, however, that this specification of the model handles $\theta_0$ as the baseline quantity. 
We can also go in the other direction, specifying priors on $\theta_1$ and the ``side effects of placebo'' $\eta_s'$ and defining
\[
\theta_0 = \theta_1 + (1-\theta_1)\eta_s',
\]
which also instantiates $H_-:\theta_1<\theta_0$. 
We denote by 
$M_{-}^{''}$ the model
\begin{align*}
(\theta_1,\eta_s') &\sim \text{Beta}^*(\mu_1,n_1)\times\text{Beta}^*(\mu_s',n_s'), \\
\theta_0 &= \theta_1 + (1-\theta_1)\eta_s'.
\end{align*}
This asymmetry in our handling of $\theta_0$ and $\theta_1$ is reflected in the joint priors of $(\theta_0,\theta_1)$ under 
$M_{-}^{'}$ and $M_{-}^{''}$. 
As the central panel of Figure \ref{fig:asym} exhibits, the $M_{-}^{'}$ joint prior tends to favor small proportions (whereas $M_{-}^{''}$, not plotted, favors large proportions). On the other hand, sampling $(\theta_0,\eta_e,\eta_s)$ from the BREASE prior truncated to the set $\{(\theta_0,\eta_e,\eta_s):\theta_1<\theta_0\}$ (i.e., the $M_-$ prior (\ref{eq:m-prior})) yields a symmetric joint density on $(\theta_0,\theta_1)$ (left panel of Figure \ref{fig:asym}).
To assuage this asymmetry, we can put $\theta_0$ and $\theta_1$ on equal footing when testing the one-sided hypothesis $H_-$ (and, similarly, $H_+$) by using a prior that averages those under $M_{-}^{'}$ and $M_{-}^{''}$, as in the right panel of Figure \ref{fig:asym}.
In practice, we can decompose $H_-$ into the submodels $M_{-}^{'}$ and $M_{-}^{''}$ and report the marginal likelihood of $H_-$ as the average of the submodel marginal likelihoods. As the marginal likelihood under $M_{-}^{''}$ is also available analytically, this procedure comes with negligible added computational cost.

\subsubsection{An empirical Bayes prior}
\label{sec:gd-eb}

\begin{figure}[t]
    \centering
    \includegraphics[width=\textwidth]{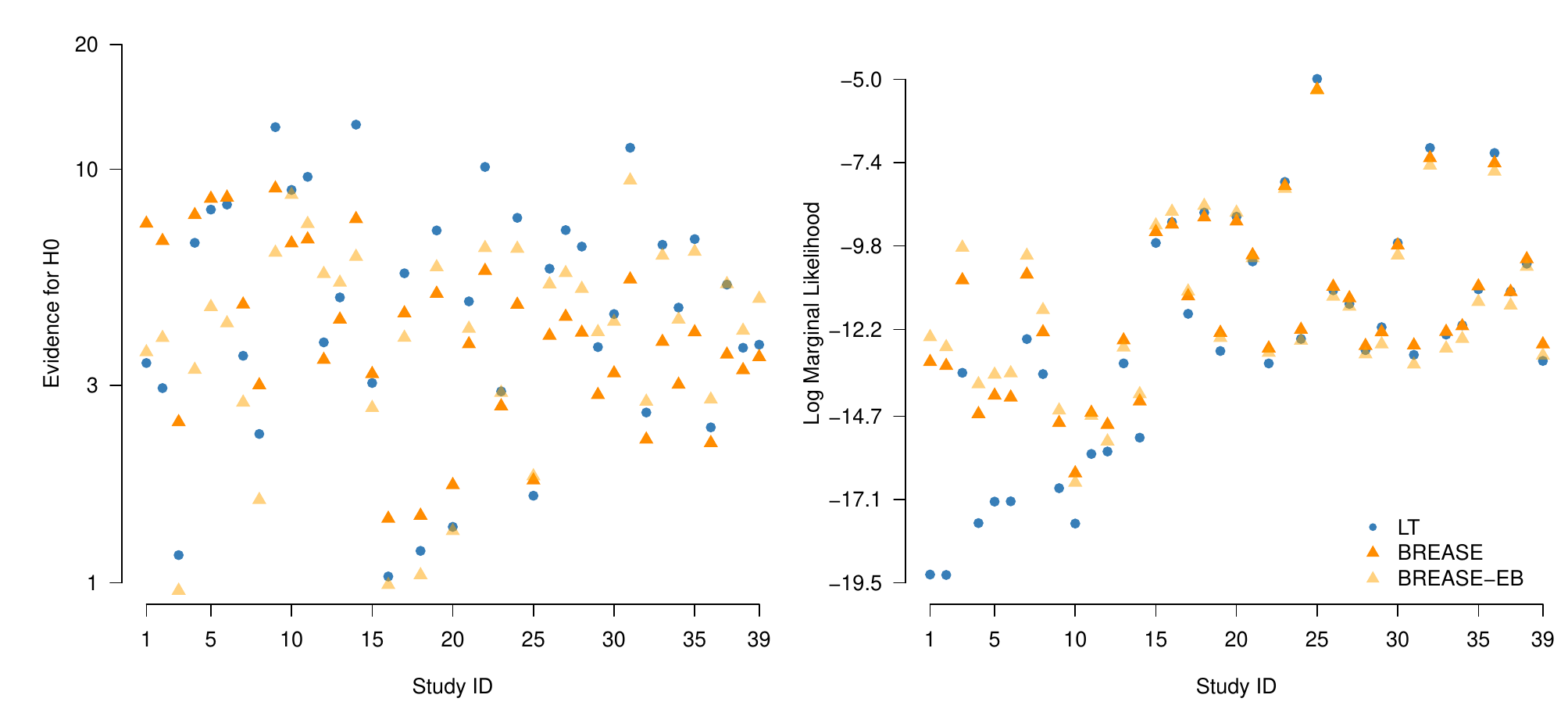}
    \caption{Comparison of Bayes factors ($\text{BF}_{01}$) and log marginal likelihoods under model $M_1$ (\ref{eq:m1}) of the default LT, default BREASE, and empirical Bayes BREASE priors across the 39 \textit{NEJM} studies.}
    \label{fig:brease-eb}
\end{figure}

As $\eta_e$ and $\eta_s$ are counterfactual probabilities, they are not generally point-identified from data. However, since $\theta_0$ and $\theta_1$ are identifiable, we can derive robust bounds on their range of possible values based on the observed data \citep{tian2000}. Equation~(\ref{eq:theta1}) implies the following algebraic constraints on $\eta_e$ and $\eta_s$:
\begin{align}
\max\left\{0,\frac{\theta_0-\theta_1}{\theta_0}\right\} &\le \eta_e \le \min\left\{1,\frac{1-\theta_1}{\theta_0}\right\}, 
\label{etae-region}
\\
\max\left\{0,\frac{\theta_1-\theta_0}{1-\theta_0}\right\} &\le \eta_s \le \min\left\{1,\frac{\theta_1}{1-\theta_0}\right\}.
\label{etas-region}
\end{align}
The inequalities (\ref{etae-region}) and (\ref{etas-region}) define the (marginal) partially identified regions of $\eta_e$ and $\eta_s$, respectively. Denote these intervals by $I_e(\theta_0,\theta_1)=[\ell_e(\theta_0,\theta_1),u_e(\theta_0,\theta_1)]$ and $I_s(\theta_0,\theta_1)=[\ell_s(\theta_0,\theta_1),u_s(\theta_0,\theta_1)]$. 
In the limit of infinite data, the posterior mass of $\eta_e$ and $\eta_s$ will concentrate within $I_e(\theta_0^*,\theta_1^*)$ and $I_s(\theta_0^*,\theta_1^*)$, respectively, assuming $\theta_0^*,\theta_1^*$ are the true values.

When conducting a Bayesian hypothesis test, a main concern is the sensitivity of Bayes factors to the prior. As demonstrated in Section \ref{sec:results}, a prior that places unreasonable assumptions on the treatment effects can lead to questionable conclusions. In this light, it may be desired to take a data-driven approach to prior specification that concentrates prior mass near the partially identified intervals of $\eta_e$ and $\eta_s$. 
For example, we can set the prior means $\mu_e$ and $\mu_s$ to equal their midpoints:
\begin{align*}
\hat\mu_e &=  \frac{1}{2}\left(\ell_e(\hat\theta_0,\hat\theta_1)+u_e(\hat\theta_0,\hat\theta_1)\right), \\   
\hat\mu_s &=  \frac{1}{2}\left(\ell_s(\hat\theta_0,\hat\theta_1)+u_s(\hat\theta_0,\hat\theta_1)\right), 
\end{align*}
where we use point estimates of the population proportions:
\begin{align*}
\hat\theta_0 &= \frac{y_0+1}{N_0+2}, \qquad \hat\theta_1 = \frac{y_1+1}{N_1+2}.
\end{align*}
As $\hat\theta_0$ shrinks the sample proportion toward $1/2$, it avoids division by zero in (\ref{etae-region}) and (\ref{etas-region}). 
Hence, we might consider priors of the form $\text{BREASE}(1/2,\hat\mu_e,\hat\mu_s;2,n,n)$ with $n\ge 0$. 
As this prior is estimated from the observed data, it can be thought of as an empirical Bayes approach \citep{robbins1992}. As such, we denote it by $\text{BREASE-EB}(n)$.

Note that when $n=1$ and $\hat\theta_0=\hat\theta_1=1/2$ (e.g., in the absence of data or when the sample proportions are $1/2$), we obtain a vague Jeffreys marginal prior $\text{Beta}(1/2,1/2)$ on $\eta_e$ and $\eta_s$. The choice of prior sample size $n=1$ yields something resembling a unit information prior \citep{kass1995unit}, wherein the prior mean is estimated from data and its spread is chosen so that the information content of the prior matches that of a single observation.

Figure \ref{fig:brease-eb} compares Bayes factors ($\text{BF}_{01}$) and log marginal likelihoods under model $M_1$ (\ref{eq:m1}) of the default $\text{LT}(0,0;1,1)$, $\text{BREASE}(1/2,0.3,0.3;2,1,1)$, and $\text{BREASE-EB}(1)$ priors across the 39 \textit{NEJM} studies reporting null results. The BREASE and BREASE-EB priors tend to provide the most equivocal Bayes factors on average, with mean $\text{BF}_{01}$ equal to 4.41, 4.42, and 5.38 for the BREASE-EB, BREASE, and LT priors, respectively. However, BREASE-EB Bayes factors tend to be closer to those of the LT approach than the default BREASE prior, with mean absolute percentage differences from the LT $\text{BF}_{01}$ of 19\% for the former and 32\% for the latter. 

Comparing log marginal likelihoods, which quantify the predictive performance of a model, we see that the BREASE-EB and default BREASE priors perform similarly, and generally better than the default LT prior, although the default BREASE performs slightly better overall. Indeed, the default BREASE log marginal likelihood exceeds the LT in 74\% of the studies compared to 59\% for the BREASE-EB prior. Furthermore, the default BREASE outperforms BREASE-EB in 62\% of the studies.

\subsection{Bayes factors with the IB and LT approaches}
\label{sec:testing-ib-lt}

Following \citet{dablander2022}, we calculate the Bayes factor $\text{BF}_{10}$ for the IB approach using the Savage-Dickey density ratio method applied to the difference of proportions $\eta=\theta_0-\theta_1$ \citep{wagenmakers2010}. A formula for the prior density of $\eta$ at the null $H_0:\eta=0$ can be found in Appendix A of \citep{dablander2022}. The Bayes factor using the $\text{IB}((a,a),(a,a))$ prior under $H_1$ as described in Section \ref{sec:ib} is then
\[
\text{BF}_{10} = \frac{\text{B}(2a-1,2a-1)\text{B}(a+y_0,a+N_0-y_0)\text{B}(a+y_1,a+N_1-y_1)}{\text{B}(2a+y_0+y_1-1,2a+N_0-y_0+N_1-y_1-1)\text{B}(a,a)^2}.
\]
Posterior estimates and credible intervals under $H_1$ are calculated using exact samples from the independent beta posterior.

Bayes factors $\text{BF}_{10}$ for the LT approach are calculating using the \textbf{abtest} package in R \citep{gronau2021}. The package uses a Laplace approximation to calculate $\text{BF}_{10}$, which is shown to have good performance. The LT prior under $H_1$ is as described in Section \ref{sec:lt}. Under $H_0:\psi=0$, the prior is $\beta\sim\text{Normal}(\mu_\beta,\sigma_\beta)$ with default values $(\mu_\beta,\sigma_\beta)=(0,1)$.  Posterior estimates and credible intervals under $H_1$ are calculated using posterior samples output by \textbf{abtest}. As \textbf{abtest} only reports marginal likelihoods up to a multiplicative constant, we used RJAGS \citep{rjags} to generate MCMC samples from the LT posterior and THAMES \citep{metodiev2023} to estimate the LT marginal likelihood for Figure \ref{fig:nejm-lml} using the samples.

\subsection{Likelihood derivation}
\label{app:covariates}

Suppose we observe i.i.d. samples $(X_i,Y_i,Z_i),~ i\in\{1,\ldots,N\}$, where, as before, $Y_i$ and $Z_i$ denote the binary outcome and treatment indicators for subject $i$ and $X_i$ is a discrete covariate with $k$ categories, $\{x_1, \dots, x_k\}$, such as sex or ethnicity (or both). 
We allow for the possibility of selection into treatment based on $X_i$. Hence, we now have only \emph{conditional} ignorability, $(Y_i(z)\indep Z_i) | X_i$, as opposed to a completely randomized treatment for which $(X_i,Y_i(z))\indep Z_i$. 
The likelihood for a single observation factorizes as 
\begin{align*}
\PP(X_i=x,Y_i=y,Z_i=z) &= \PP(Y_i=y|Z_i=z,X_i=x)\PP(Z_i=z|X_i=x)\PP(X_i=x) \\
&= \text{Bernoulli}(y;\theta_{z,x})\times\text{Bernoulli}(z;\delta_{x})\times p_{x},
\end{align*}
where   
we define 
the population proportion of $x$ as
\[
p_x = \PP(X_i=x),
\]
the propensity score
\[
\delta_x = \PP(Z_i=1|X_i=x),
\]
and the risk under treatment $z$ in stratum $x$ as
\begin{align*}
\theta_{z,x}
&= \PP(Y_i=1|X_i=x,Z_i=z)\\
&= \PP(Y_i(z)=1|X_i=x,Z_i=z) \tag{consistency}\\
&= \PP(Y_i(z)=1|X_i=x) \tag{$Y_i(z)\indep Z_i|X_i$}
\end{align*} 

Let $y_{z,x}$ denote the observed death count and  $N_{z,x}$  the corresponding sample size for each stratum $x \in \mathcal{X}$ and study arm $z \in \{0, 1\}$. Further define the total count for stratum $x$ as $N_{x} = N_{0,x} + N_{1,x}$ and the total population size $N = \sum_{x \in \mathcal{X}} N_{x}$. We use boldface to indicate vectors, $\bm{N}=\{N_{z,x}\}_{z\in\{0, 1\}, x\in \mathcal{X}}$, $\bm{N_X}=\{N_{x}\}_{x\in \mathcal{X}}$, and $\bm{y}=\{y_{z,x}\}_{z\in\{0, 1\}, x\in \mathcal{X}}$.  Finally, let $\mathcal{D}=(\bm{y}, \bm{N})$ denote the full data and 
$(\bm{\theta}, \bm{\delta},\bm{p_X})$ 
parameters, 
\[
\bm{\theta}=\{\theta_{z,x}\}_{z\in\{0, 1\}, x\in \mathcal{X}}, 
\quad \bm{\delta}=\{\delta_x\}_{x\in \mathcal{X}}, 
\quad \bm{p_X}=\{p_x\}_{x\in \mathcal{X}}.
\]
The full likelihood is then given by
\begin{align*}
L(\mathcal{D}|\bm{\theta},\bm{\delta},\bm{p_X}) &= \prod_{z,x}\text{Binomial}(y_{z,x};N_{z,x},\theta_{z,x}) \times 
\prod_x\text{Binomial}(N_{1,x};N_x,\delta_x)\\
&\qquad\times
\text{Multinomial}(\bm{N_X};\bm{p_X}).
\end{align*}
Note that, when we do not have covariates, and if we condition on the observed treatment vector, the likelihood reduces to
\begin{equation}
L(\mathcal{D}|\theta_0,\theta_1) = \binom{N_0}{y_0}\theta_0^{y_0}(1-\theta_0)^{N_0-y_0}\times \binom{N_1}{y_1}\theta_1^{y_1}(1-\theta_1)^{N_1-y_1},
\label{eq:marg-lik-2}
\end{equation}
which leads to the marginal parameterization without covariates as given in the main text. Finally, to obtain the likelihood with the BREASE parameterization, we define the efficacy and side-effects for every $x\in \mathcal{X}$, i.e., $\bm{\eta} = \{\eta_{e,x}, \eta_{s, x}\}_{x\in \mathcal{X}}$, and simply note that the following BREASE decomposition holds, $\theta_{1,x} = (1-\eta_{e,x})\theta_{0,x}+\eta_{s,x}(1-\theta_{0,x}),$ for every $x\in \mathcal{X}$.

\subsection{Extensions to non-compliance}
\label{app:iv}

Here we outline a natural extension of the BREASE framework to account for non-compliance. Non-compliance occurs when individuals do not adhere to their assigned treatment, a common issue in randomized experiments, creating a discrepancy between the assigned treatment and the actual treatment received. Notice this setup is equivalent to an instrumental variable design \citep{angrist1996identification}, where the treatment assignment serves as the instrument for the actual treatment received. Below we establish the foundations for this extension, by setting up the BREASE parameterization, BREASE prior and noticing that the posterior should also be a mixture of independent betas. A full development of this model deserves its own treatment, and thus we leave the study of induced priors, as well as the development of bespoke algorithms for future work. 

\paragraph{BREASE parameterization.} Let $Z_i$ denote the binary treatment assignment, $Y_i$ the binary outcome, and $D_i$ the binary indicator for \emph{actual treatment received}. We define potential outcomes $Y_i(d,z)$ as the outcome for individual $i$ if they were to receive treatment $d$ under assignment $z$. Similarly, we define $D_i(z)$ as the potential treatment uptake for individual $i$ if assigned to treatment $z$. A key assumption in this setting is the exclusion restriction, which posits that $Y_i(d,z) = Y_i(d)$. This implies that the treatment assignment $Z$ affects the outcome $Y$ only through its influence on the actual treatment received $D$. We maintain this assumption throughout our analysis.

Now let $C_i \in \{\text{AT}, \text{NT}, \text{CP}, \text{DF}\}$ denote the \emph{compliance type} of the individual, defined as:
\begin{align*}
\text{``always takers'' (AT):} & \quad \{D_i(0) = 1, D_i(1) = 1\}, \\
\text{``never takers'' (NT):} & \quad \{D_i(0) = 0, D_i(1) = 0\}, \\
\text{``compliers'' (CP):} & \quad \{D_i(0) = 0, D_i(1) = 1\},~\text{and,}  \\
\text{``defiers'' (DF):} & \quad \{D_i(0) = 1, D_i(1) = 0\}.
\end{align*}

Under the exclusion restriction assumption, the full joint distribution of compliance types and  response types is a probability table with $16$ entries. This distribution can be factorized as:
\begin{align*}
&\PP(Y_i(1)=y(1), Y_i(0)=y(0), D_i(1)=d(1), D_i(0)=d(0)) \\
&= \PP(Y_i(1)=y(1) \mid Y_i(0)=y(0), D_i(1)=d(1), D_i(0)=d(0) \\
&\quad \times \PP(Y_i(0)=y(1) \mid D_i(1)=d(1), D_i(0)=d(0)) \\
&\quad \times \PP(D_i(1) =d(1)\mid D_i(0) = d(0)) \PP(D_i(0)=d(0)).
\end{align*}
Now note this factorization is naturally amenable to the BREASE parameterization. The first two terms $\PP(Y_i(1) = y(1) \mid Y_i(0) = y(0), D_i(1) = d(1), D_i(0)=d(0))$ and $\PP(Y_i(0)=y(0) \mid D_i(1)=d(1), D_i(0)=d(0))$ can be parameterized in terms of the \emph{baseline risk, efficacy and side-effects} of the \emph{actual} treatment $D_i$ on the outcome $Y_i$, for each compliance type $C_i$:
\begin{align*}
\theta_{y0}^c &:= \PP(Y_i(0) = 1 \mid C_i = c), \\
\eta_{ye}^c   &:= \PP(Y_i(1) = 0 \mid Y_i(0) = 1, C_i = c), \\
\eta_{ys}^c   &:= \PP(Y_i(1) = 1 \mid Y_i(0) = 0, C_i = c),\\
\theta_{y1}^c &:= \PP(Y_i(1) = 1 \mid C_i = c) = (1 - \eta_{ye}^c) \theta_{y0}^c + \eta_{ys}^c (1 - \theta_{y0}^c).
\end{align*}

The last two terms of the factorization $\PP(D_i(1) \mid D_i(0))P(D_i(0))$ can be parameterized in terms of the \emph{baseline uptake, efficacy of assignment, and side-effects of assignment} on treatment uptake: 
\begin{align*}
\theta_{d0} &:= \PP(D_i(0) = 1), \\
\eta_{de}   &:= \PP(D_i(1) = 1 \mid D_i(0) = 0), \\
\eta_{ds}   &:= \PP(D_i(1) = 0 \mid D_i(0) = 1), \\
\theta_{d1} &:= \PP(D_i(1) = 1) = (1 - \eta_{ds}) \theta_{d0} + \eta_{de} (1 - \theta_{d0}).
\end{align*}
Here, since we would like the treatment assignment to result in uptake of the treatment, we reverse the definitions of efficacy and side-effects (e.g. $Z_i$ is efficacious if it turns $D_i(0)=0$ into $D_i(1)=1$).
Notice that a common assumption in the instrumental variable setting, also called ``monotonicity,'' can be easily expressed in this setting as $\eta_{ds} = 0$, that is, assignment to treatment does not make someone who would have gotten treatment to not take it. 

\paragraph{BREASE Prior.} Given the above parameterization, a natural prior for this model is to place a BREASE prior on the parameters of the treatment uptake, and either an independent or hierarchical BREASE prior on the parameters of the outcome model, for each compliance type. Notice that, similarly to the case of the BREASE prior with full compliance, the BREASE prior with non-compliance  should generalize  previously used priors such as that of \cite{chickering1996}, which places a traditional Dirichlet prior on $\PP(Y_i(1), Y_i(0), D_i(1), D_i(0))$ directly. 

\paragraph{Observed data likelihood.}

For one observation, conditional on $Z_i=z$,  we have
\begin{align*}
&\PP(Y_i = y, D_i = d \mid Z_i = z) = \PP(Y_i(d,z) = y, D_i(z) = d \mid Z_i = z) \tag{consistency}\\
&= \PP(Y_i(d) = y, D_i(z) = d) \tag{independence and exclusion} \\
&= \PP(Y_i(d) = y, D_i(z) = d, D_i(1-z)=1) + \PP(Y_i(d) = y, D_i(z) = d, D_i(1-z)=0) \tag{law of total prob.}\\
&= \PP(Y_i(d)=y \mid D_i(z)=d, D_i(1-z)=1)\PP(D_i(1-z)=1, D_i(z)=d) \\
&\qquad + \PP(Y_i(d)=y \mid D_i(z)=d, D_i(1-z)=0)\PP(D_i(1-z)=0, D_i(z)=d)\tag{chain rule}
\end{align*}
Thus, the observed data likelihood is given by the mixture of the probabilities of observing the outcome $Y_i=y$ under the two compliance types that are compatible with the data. For example, the likelihood for a unit with $Y_i=1$, $Z_i=1$ and $D_i=1$ is,
$$
\theta^{AT}_{y1}\times (1-\eta_{ds})\theta_{d0} + \theta^{CP}_{y1} \times \eta_{de}(1-\theta_{d0})
$$
namely, the weighted average of the probability of observing $Y_i=1$ for an always taker or a complier, the two compliance types compatible with $Z_i=1$ and $D_i=1$. Expanding $\theta^c_{y1}$ with the BREASE parameters for the outcome, we obtain,
$$
[(1 - \eta_{ye}^{AT}) \theta_{y0}^{{AT}} + \eta_{ys}^{{AT}} (1 - \theta_{y0}^{AT})]\times (1-\eta_{ds})\theta_{d0} + [(1 - \eta_{ye}^{CP}) \theta_{y0}^{{CP}} + \eta_{ys}^{{CP}} (1 - \theta_{y0}^{CP})] \times \eta_{de}(1-\theta_{d0}).
$$
Similar decomposition applies for the 8 possible values that $y, z, d$ can take. 
To obtain the likelihood for all observations, we simply multiply the independent contributions for each unit $i$. For brevity, we omit the resulting expression. 

\paragraph{Posterior.} As can be seen above, the likelihood is polynomial in the BREASE parameters for the outcome and compliance. Thus, under the independent BREASE prior,  the posterior will again decompose as a mixture of independent beta distributions over the full set of parameters. Therefore, similar to the case with full compliance, we expect the posterior to be mathematically tractable, admitting exact posterior sampling, an efficient data-augmented Gibbs algorithm, and analytical formulae for marginal likelihoods and Bayes factors. 

\subsection{Pathological sampling: diagnosing the issue}
\label{sec:pathological-supp}

We return to the pathological sampling example of Section \ref{sec:pathological}.
While JAGS sounds no alarm and its diagnostics provide no indication that the MCMC has failed to converge, Stan produces warning messages that the sampler is struggling. It reports a substantial proportion of divergent transitions, indicating that the chain has reached a portion of state space with extreme posterior curvature, which poses a challenge to gradient evaluation. Based on further investigation, the problem may be due to numerical issues in dealing with the Beta$^*(0.01,1)$ prior on $\eta_s$, which is highly concentrated and diverges at the boundary $\eta_s = 0$. (Indeed, in our experiments, Stan struggles to sample from the Beta$^*(0.01,1)$ distribution even in the absence of data and the other BREASE parameters.) There are a number of potential ways to ameliorate this issue. One is to express the posterior in its mixture form \eqref{eq:post}. In this representation we obtain a beta distribution more amenable to sampling by updating the prior on $\eta_s$ with the counterfactual counts $C_1, P_1$. When $N_1$ and $y_1$ are large, however, the double sum in \eqref{eq:post} can become prohibitively expensive to evaluate, which slows MCMC sampling significantly. Alternatively, we may circumvent dealing with $\eta_e$ and $\eta_s$ entirely by marginalizing one of them out of the prior and carrying out a change of variables to evaluate the conditional prior $\pi(\theta_1|\theta_0)$ using numerical integration as described in Section \ref{app:theta1-dist} of Supplement A. In either case, however, we find that the resulting sampler remains less efficient and more challenging to implement (even with existing software) than the exact sampler of Algorithm 1 and the data-augmented Gibbs sampler of Algorithm 2. 

\newpage

\subsection{Empirical examples: additional figures and tables}
\label{sec:pfizer-supp}

\begin{table}[h!]
\centering
\begin{tabular}{llll}
\toprule
\textbf{Study}  & \textbf{Sample Size} & \textbf{Independent} & \textbf{Hierarchical} \\
\midrule
HOT             & 18790                & 0.66 (0.5--0.9)               & 0.69 (0.55--0.86)             \\
TPT (Exc warfarin) & 2540             & 0.85 (0.62--1.06)             & 0.72 (0.59--0.9)              \\
PPP             & 4495                 & 0.82 (0.43--1.21)             & 0.74 (0.54--1.12)             \\
WHS             & 39876                & 1 (0.87--1.2)                 & 1.01 (0.81--1.23)             \\
BDS             & 5139                 & 1.01 (0.84--1.29)             & 0.94 (0.73--1.21)             \\
PHS             & 22071                & 0.59 (0.48--0.72)             & 0.64 (0.52--0.76)             \\
AAA             & 3350                 & 0.98 (0.74--1.2)              & 0.83 (0.65--1.09)             \\
POPADAD         & 1276                 & 1.03 (0.84--1.42)             & 0.82 (0.65--1.11)             \\
JPAD            & 2539                 & 0.98 (0.47--1.63)             & 0.81 (0.57--1.51)             \\
JPPP            & 14464                & 0.85 (0.5--1.19)              & 0.76 (0.57--1.15)             \\
ASCEND          & 15480                & 0.97 (0.82--1.07)             & 0.9 (0.77--1.06)              \\
ARRIVE          & 12546                & 0.93 (0.7--1.11)              & 0.81 (0.65--1.06)             \\
ASPREE          & 19114                & 0.98 (0.79--1.11)             & 0.9 (0.74--1.12)              \\
\bottomrule
\end{tabular}
    \caption{Posterior median and 95\% credible intervals for the risk ratio of low-dose aspirin on the reduction of  myocardial infarction using the independent and hierarchical BREASE priors.}
    \label{tab:aspirin-meta-analysis}
\end{table}

\begin{table}[htbp!]
    \centering
    \begin{tabular}{cccc}
   \toprule\textbf{Subgroup} &\textbf{Sample size} & \textbf{Independent}  & \textbf{Hierarchical}\\
    \toprule Age \\
     \hline 16 to 55 yr & 19,852  & 95.0 (89.4--98.1)  & 93.13 (87.34--96.81)  \\
     56 to 64 yr & 7,315 & 90.5 (72.2--97.9)  &  89.82 (77.11--96.16)  \\
     65 to 74 yr & 6,169 & 87.2 (48.6--98.2)  & 88.83 (66.86--96.35)  \\
     75+ yr & 1,559 & 81.8 (-10.8--99.4)  & 88.16 (45.27--96.57)  \\
     \hline  Race \\     
     \hline White & 29,174 & 94.7 (89.7--97.6)  & 92.83 (87.46--96.48)  \\
     Black & 2,988 & 88.6 (7.0--99.6)  & 87.49 (51.90--96.43)  \\
     Other & 2,760 & 79.5 (1.9--97.3)  & 85.32 (46.40--95.31)  \\
     \hline  Country \\     
     \hline Argentina & 5,066 & 95.2 (83.0--99.3)  & 91.50 (80.90--97.13)  \\
     Brazil & 2,250 & 75.5 (-2.4--96.9) & 85.26 (41.10--95.59)  \\
     United States & 26,865 & 94.3 (88.6--97.6) & 92.62 (86.69--96.37)  \\
     \bottomrule
    \end{tabular}
    \caption{Posterior median and 95\% credible intervals for the Pfizer-BioNTech COVID-19 vaccine efficacy, expressed as a percentage, stratified across subgroups using the independent and hierarchical BREASE priors.}
    \label{tab:covariates}
\end{table}

\end{document}